\documentclass[11pt]{article}

\usepackage[letterpaper, margin=1in]{geometry}
\usepackage{bm, amsmath, amssymb, amsthm, amsfonts}
\usepackage{thmtools,thm-restate}
\usepackage{cite}
\usepackage{framed}
\usepackage[framemethod=tikz]{mdframed}
\usepackage{appendix}
\usepackage{graphicx}
\usepackage{color}
\usepackage{varwidth}
\usepackage{wrapfig}
\usepackage{algorithm}
\usepackage[noend]{algpseudocode}
\usepackage[textsize=tiny]{todonotes}

\newcommand{\logstar}{\ensuremath{\log^*}}

\newcommand{\wmin}{\ensuremath{w_{\textrm{min}}}}
\newcommand{\wmax}{\ensuremath{w_{\textrm{max}}}}	

\usepackage{enumitem}
\setitemize{noitemsep,topsep=3pt,parsep=3pt,partopsep=3pt}

\usepackage{xspace}
\usepackage{xifthen}

\newcommand*{\myproofname}{\textit{Proof of Claim 1:}}

\definecolor{darkgreen}{rgb}{0,0.5,0}
\definecolor{darkblue}{rgb}{0,0,0.8}
\usepackage{hyperref}
\hypersetup{
    unicode=false,          
    colorlinks=true,        
    linkcolor=darkblue,          
    citecolor=darkgreen,        
    filecolor=magenta,      
    urlcolor=cyan           
}
\RequirePackage[]{silence}
\WarningsOff[hyperref]

\usepackage[nameinlink,capitalize]{cleveref}

\newtheorem{theorem}{Theorem}[section]
\newtheorem{lemma}[theorem]{Lemma}

\theoremstyle{definition}
\newtheorem{definition}{Definition}[section]
\theoremstyle{remark}
\newtheorem*{remark}{Remark}

\newcommand{\ignore}[1]{}

\algnewcommand\algorithmicswitch{\textbf{switch}}
\algnewcommand\algorithmiccase{\textbf{case}}

\algdef{SE}[SWITCH]{Switch}{EndSwitch}[1]{\algorithmicswitch\ #1\ \algorithmicdo}{\algorithmicend\ \algorithmicswitch}%
\algdef{SE}[CASE]{Case}{EndCase}[1]{\algorithmiccase\ #1}{\algorithmicend\ \algorithmiccase}%
\algtext*{EndSwitch}%
\algtext*{EndCase}%

\newcommand{\CONGEST}{\ensuremath{\mathsf{CONGEST}}\xspace}
\newcommand{\LOCAL}{\ensuremath{\mathsf{LOCAL}}\xspace}

\newcommand{\CREWPRAM}{\ensuremath{\mathsf{CREW}~\mathsf{PRAM}}\xspace}

\newcommand{\eps}{\varepsilon}
\newcommand{\poly}{\operatorname{\mathrm{poly}}}

\newcommand{\set}[1]{\left\{#1\right\}}

\DeclareMathOperator{\polylog}{\poly\log}

\newcommand{\myparagraph}[1]{

\medskip

\noindent \textbf{\boldmath #1}}

\newcommand{\bigO}{O}

\newcommand{\dcup}{\dot\cup}
\newcommand{\approxmatching}[1]{\approxmatchingx{#1}{\Delta}}
\newcommand{\approxmatchingx}[2]{T_{\text{WM}}\ensuremath{(n, #2, #1)}}
\newcommand{\match}[2]{T_{\text{HM}}\ensuremath{(n, #2, #1)}}
\newcommand{\gmax}{g_{\textrm{max}}}

\newcommand{\actualmaxdegree}{\Lambda}


\newcommand{\hide}[1]{}



\newcommand{\FullOrShort}{short}

\ifthenelse{\equal{\FullOrShort}{full}}{
	
  \newcommand{\fullOnly}[1]{#1}
  \newcommand{\shortOnly}[1]{}

}{
  
  \usepackage{times}

  \newcommand{\shortOnly}[1]{#1}
  \newcommand{\fullOnly}[1]{}

}



\newenvironment{DenseEnumerate}[1][\theenumi.]
{\begin{list}{#1}{\usecounter{enumi}
\itemsep 0pt \parsep 0pt \leftmargin 5mm \labelwidth 5mm \parskip 0pt \topsep
0pt}}
{\end{list}}

\begin{document}

\begin{flushleft}

\vspace*{0.8cm}
{\huge\bf Deterministic Distributed Edge-Coloring \\ with Fewer Colors\par}
\vspace{1.0cm}
\end{flushleft}

\newcommand{\auth}[3]{\textbf{#1}$\,\,\,\cdot\,\,\,$#2$\,\,\,\cdot\,\,\,$#3\par\medskip}

\auth{Mohsen Ghaffari}
{ETH Zurich}
{ghaffari@inf.ethz.ch}
\auth{Fabian Kuhn\footnote{Supported by ERC Grant No.\ 336495 (ACDC).}}
{University of Freiburg}
{kuhn@cs.uni-freiburg.de}
\auth{Yannic Maus\footnotemark[1]}
{University of Freiburg}
{yannic.maus@cs.uni-freiburg.de}
\auth{Jara Uitto\footnotemark[1]}
{ETH Zurich and University of Freiburg}
{jara.uitto@inf.ethz.ch}

\vspace{1cm}

\begin{abstract}
We present a deterministic distributed algorithm, in the $\LOCAL$ model, that computes a $(1+o(1))\Delta$-edge-coloring in polylogarithmic-time, so long as the maximum degree $\Delta=\tilde{\Omega}(\log n)$. For smaller $\Delta$, we give a polylogarithmic-time $3\Delta/2$-edge-coloring. These are the first deterministic algorithms to go below the natural barrier of $2\Delta-1$ colors, and they improve significantly on the recent polylogarithmic-time $(2\Delta-1)(1+o(1))$-edge-coloring of Ghaffari and Su [SODA'17] and the $(2\Delta-1)$-edge-coloring of Fischer, Ghaffari, and Kuhn [FOCS'17], positively answering the main open question of the latter. The key technical ingredient of our algorithm is a simple and novel gradual packing of judiciously chosen near-maximum matchings, each of which becomes one of the color classes.

\end{abstract}
\setcounter{page}{0}
\thispagestyle{empty}
\newpage

\section{Introduction \& Related Work}

Edge-coloring is one of the four classic problems in the subarea of \emph{local distributed algorithms}---along with vertex-coloring, maximal independent set (MIS), and maximal matching---which have been studied extensively since the 1980s; see, e.g., the distributed graph coloring book of Barenboim and Elkin~\cite{barenboimelkin_book}. While there are rather satisfactory randomized algorithms, finding efficient deterministic algorithms consitute some of the most well-known and central open questions of the area; see, e.g., the first five open problems of \cite{barenboimelkin_book}. In this paper, we present efficient deterministic edge-coloring algorithms that use considerably fewer colors, for the first time going below the barrier of $2\Delta-1$ colors, where $\Delta$ denotes the maximum degree.

\subsection{Background}
\myparagraph{\boldmath $\LOCAL$ Model.}
We work with the standard $\LOCAL$ model of distributed computing, usually attributed to Linial\cite{linial1987LOCAL}: the network is abstracted as an $n$-node undirected graph $G=(V, E)$, and each node is labeled with a unique $O(\log n)$-bit identifier. Communication happens in synchronous \emph{message passing} rounds, where in each round each node can send a message to each of its neighbors. At the end of the algorithm, each node should output its own part of the solution, e.g., the colors of its incident edges in the edge-coloring problem. The time complexity of an algorithm is the number of synchronous rounds.

\paragraph{History, The Journey to Deterministic \boldmath$(2\Delta-1)$-Edge-Coloring:}
Any graph with maximum degree at most $\Delta$ admits a $(2\Delta-1)$-edge-coloring, by a trivial sequential greedy algorithm. Moreover, a very simple randomized distributed algorithm, following from Luby's 1986 MIS algorithm\cite{luby86, alon86}, can compute such a coloring in $O(\log n)$ rounds. This made computing a $(2\Delta-1)$-edge-coloring a natural first-target for deterministic distributed algorithms. However, in contrast to randomized algorithms, computing such a coloring deterministically and efficiently (in polylogarithmic time) remained an open problem for about 30 years (see, e.g., Open Problem 11.4 in the distributed graph coloring book\cite{barenboimelkin_book}). Until very recently, the best known round complexity for $(2\Delta-1)$-edge-coloring was $2^{O(\sqrt{\log n})}$ by using an algorithm of \cite{panconesi95}.

A brief recap of the concrete steps of progress is as follows: Since Linial's pioneering 1987 paper~\cite{linial1987LOCAL} and for many years, $O(\Delta^2)$-colors was the best known palette size for polylogarithmic-time algorithms. Two intermediate steps of progress were polylogarithmic-time algorithms of Czygrinow et al.\cite{czygrinow2001coloring} for $O(\Delta\log n)$-edge-coloring and that of Barenboim and Elkin\cite{Barenboim:edge-coloring} for $\Delta\cdot 2^{O(\log \Delta/\log\log \Delta)}$-edge-coloring. More recently, a significant leap was made by Ghaffari and Su\cite{ghaffari17} who presented a $(2+o(1))\Delta$-edge-coloring in polylogarithmic-time. Finally, the question was settled very recently when Fischer, Ghaffari and Kuhn~\cite{FOCS17} presented a $O(\log^7 \Delta \log n)$-round deterministic distributed algorithm for $(2\Delta-1)$-edge-coloring. For $((2+o(1))\Delta)$-edge-coloring, a simpler and faster algorithm with time complexity \mbox{$\tilde{\bigO}\big(\log^2\Delta \cdot \log n)$} was given recently by Ghaffari et al.\cite{Splitting17}.\footnote{Throughout the paper, we use $\tilde{\bigO(\cdot)}$ to hide factors that are polynomial in $\bigO(\log\log n)$.}

In the special case where $\Delta$ is assumed to be small and we investigate the complexity's dependency on $\Delta$, the following results were presented for $(2\Delta-1)$-edge-coloring: Panconesi and Rizzi~\cite{panconesi2001some} gave an $\bigO(\Delta+\logstar n)$-round algorithm. This was improved by Barenboim \cite{barenboim15} to $\bigO(\Delta^{3/4}\log \Delta+\logstar n)$ and then further to $\bigO(\Delta^{1/2}\log^{5/2} \Delta+\logstar n)$ by Fraigniaud, Heinrich and Kosowski \cite{fraigniaud16}. The last two results solve the more general problem of $(\Delta+1)$-vertex coloring problem. 

\paragraph{Going Below \boldmath$2\Delta-1$ Colors:}
By a classic result of Vizing\cite{vizing1964,Bollobas98a}, any graph admits a $(\Delta+1)$-edge-coloring. However, the proofs of existence and the algorithms for computing such a coloring are fundamentally different from those for $(2\Delta-1)$-edge-coloring: for instance, for $(2\Delta-1)$ coloring, any partial coloring can be extended to a full coloring of all edges, without altering the already colored edges; moreover, this is the smallest number of colors for which such an extension is always possible. Furthermore, all known proofs of $(\Delta+1)$-edge-coloring rely on very non-local arguments, in the sense that coloring one more edge may depend on the color of (and may end up recoloring) edges that are even $\Theta(n)$-hops away. 

Similarly, $(2\Delta-1)$ colors have been viewed as a natural barrier for deterministic edge-coloring algorithms and no such algorithm was known that uses (considerably) less than $2\Delta-1$ (formally less than $2\Delta-2$ colors), even for low-degree graphs. This remained a known open question since the 1990's, see e.g., Panconesi and Srinivasan\cite{Panconesi1992}. Note that in contrast, at least for low-degree graphs---e.g., when $\Delta\leq \polylog(n)$---computing a $(2\Delta-1)$-edge-coloring is easy, and was known since Linial's paper\cite{linial1987LOCAL}. 

Given that the $(2\Delta-1)$-edge-coloring problem is solved, and noting this intrinsic change in the nature of the problem once we go below $2\Delta-1$ colors, an intriguing question that remained open, and was explicitly asked by Fischer et al.\cite{FOCS17}, is
\begin{center}
``\emph{How close can we get to this [Vizing's edge-coloring], while remaining in polylogarithmic-time?}"
\end{center}

\subsection{Our Contributions}

In this paper, we almost settle the above question. Concretely, we get a deterministic distributed coloring that is within a $1+o(1)$ factor of Vizing's bound, so long as $\Delta=\omega(\log n)$, and for smaller degrees, we show how to get $3\Delta/2$-edge-coloring. Below, we present the formal statement of the results---the precise constants in the exponents are not optimized and can be found in the technical section.

\begin{restatable}{theorem}{thmendedgecoloring}
\label{thm:finalColoring}
There is a constant $c>0$ such that for every $\eps>0$ there exist deterministic distributed algorithms that color the edges of any $n$-node graph that has maximum degree at most $\Delta$ with 
\begin{itemize}
\item $\Delta+\eps\Delta$ colors if $\Delta\geq c\cdot\eps^{-1}\cdot\log\eps^{-1}\cdot\log n$ in $O\big(\eps^{-9}\log^{O(1)}n\big)$ rounds 
\item $3\Delta/2$ colors for all $\Delta$ in $O\big(\Delta^{9}\log^{O(1)}n\big)$ rounds.
\end{itemize}
\end{restatable}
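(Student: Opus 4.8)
The plan is to build the coloring by a \emph{gradual packing of near-maximum matchings}, exactly as the abstract advertises. \textbf{The algorithm.} Maintain the still-uncolored subgraph $H$, initially $H=G$. In each step, equip $H$ with ``urgency'' weights that are an increasing function of the current vertex degrees (so that edges incident to vertices whose degree is still near the current maximum are strongly favored), compute a matching $M$ of near-maximum weight in $H$ with a known deterministic $\poly\log n$-round approximate maximum(-weight) matching routine, give $M$ a fresh color, set $H\leftarrow H\setminus M$, and repeat until $H$ is edgeless. Since this may take up to $(1+\eps)\Delta$ steps and $\Delta$ can be as large as $n$, one cannot afford a separate $\poly\log n$-round matching computation per color class; instead I would organize the steps into $O(\poly(\eps^{-1})\log\Delta)$ \emph{scales}, producing a whole batch of color classes per distributed computation (e.g.\ by rounding, all at once, an approximately optimal fractional matching decomposition of the current $H$, so that each scale drops the maximum degree by a constant factor), keeping the overall round complexity at $\poly(\eps^{-1})\cdot\poly\log n$.

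\textbf{Bounding the palette.} The heart is to show the packing halts within $(1+\eps)\Delta$ steps. First, a purely combinatorial existence fact: any graph $H$ of maximum degree $d$ with $d=\Omega(\eps^{-1})$ has fractional chromatic index $\chi'_f(H)\le(1+\eps/3)\,d$. I would deduce this from the classical identity $\chi'_f(H)=\max\bigl(\Delta(H),\ \max_{\text{odd }S,\ |S|\ge3}\tfrac{2|E_H(S)|}{|S|-1}\bigr)$: a large odd set contributes at most $\tfrac{d|S|}{|S|-1}$, which is $\le(1+\eps)d$ once $|S|\ge1+\eps^{-1}$, while a small odd set ($|S|=O(\eps^{-1})$) contributes at most $|S|=O(\eps^{-1})\le(1+\eps)d$. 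Reading a near-optimal fractional matching decomposition as a distribution over matchings, each edge lies in the sampled matching with probability $\ge1/\bigl((1+\eps/3)d\bigr)$, so every maximum-degree vertex is saturated with probability $\ge1/(1+\eps/3)$; hence there \emph{exists} an integral matching saturating all but an $\eps$-fraction of the currently critical vertices, and more generally one that ``makes definite progress'' toward driving all degrees to zero. Second, the near-maximum-weight matching actually found is competitive with such a matching, once the urgency weights and the approximation factor are calibrated. Combining the two, I would define a potential $\Phi(H)$ depending on the degree sequence of $H$ and show each step decreases $\Phi$ by a definite \emph{additive} amount, giving the $(1+\eps)\Delta$ bound; making the decrease additive rather than merely multiplicative is the whole game, since a multiplicative decrease would cost a spurious $\log n$ factor in the number of colors.

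\textbf{The $3\Delta/2$ bound for all $\Delta$.} When $\Delta<c\,\eps^{-1}\log\eps^{-1}\log n$ the hypothesis above fails, so instead run essentially the same packing with $\eps=\tfrac12$, i.e.\ targeting $3\Delta/2$ colors. Now no lower bound on $\Delta$ is needed for the existence step, because $3\Delta/2\ge\Delta+1$ already exceeds Shannon's bound and the required near-maximum matchings exist unconditionally; the only change is that the distributed subroutines may cost $\poly(\Delta)$ instead of $\poly\log n$ rounds, which is harmless since $\Delta=O(\poly\log n)$ in this regime, yielding the stated $O(\Delta^{9}\log^{O(1)}n)$.

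\textbf{Main obstacle.} I expect the difficulty to lie exactly in the calibration above: proving that a \emph{single} near-maximum weighted matching decreases the chosen potential by the right additive amount in \emph{every} graph, including graphs riddled with odd dense components that prevent all critical vertices from being matched simultaneously and thus leave ``stragglers'' behind. This is where the $(1+\eps)$ slack and the $\Delta=\tilde{\Omega}(\log n)$ assumption are spent: one must track more than the maximum degree --- a refined potential in which a straggler from one degree level is absorbed into the processing of lower levels --- and bound the total number of steps on which the maximum degree fails to drop by a Tutte--Berge / fractional-relaxation charging argument, showing the graph contains only $\eps\Delta$ worth of the ``excess local density'' that each wasted step consumes. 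A secondary obstacle is the batching/scaling needed to emit $(1+\eps)\Delta$ distinct color classes while keeping the round complexity $\poly\log n$.
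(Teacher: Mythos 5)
Your high-level plan for the first bullet (iteratively pack near-maximum weighted matchings whose weights favor vertices of currently-high degree) is the same as the paper's, but two of your steps are gaps rather than proofs. (i) The mechanism you propose for polylogarithmic round complexity --- emitting a whole batch of color classes per scale by ``rounding, all at once, an approximately optimal fractional matching decomposition'' --- is circular: deterministically rounding a fractional edge coloring into $(1+\eps)d$ integral matchings in the \LOCAL model is essentially the problem being solved. The paper sidesteps this entirely with degree splitting (\Cref{thm:mainSplitting}): recursively partition $E$ into $\approx(1+\eps)\Delta/\Delta'$ subgraphs of maximum degree $\Delta'=\polylog n$ and run the inherently $\Omega(\Delta')$-round packing algorithm on all parts in parallel with disjoint palettes. (ii) The step you yourself flag as the ``main obstacle'' --- that a \emph{single} near-maximum weighted matching decreases a suitable potential by a definite additive amount --- is exactly the paper's main technical contribution and is not resolved by your sketch. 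The paper's answer is the \emph{pervasive matching} (\Cref{cor:degreepriorities}): one matching that, for \emph{every} $i\le t$ simultaneously, hits a $(1-\eps)\frac{\Delta-i}{\actualmaxdegree+1}$-fraction of the nodes of degree at least $\Delta-i$. It is built not from one weighted-matching call but from one approximate matching per degree threshold (existence via Vizing, \Cref{lemma:largematchingsexits}), merged pairwise along short augmenting paths so that all hitting guarantees survive (\Cref{lemma:combinematchings}). Progress is then tracked by the recursion $K(t,i)\le\binom{t}{i}(2\eps)^i n$ on the number of nodes lagging $i$ degree levels behind, which needs $T=\Theta(\log n/\eps)$ consecutive steps before the tail drops below $1$; this is precisely where the hypothesis $\Delta\ge c\,\eps^{-1}\log\eps^{-1}\log n$ is spent.

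The second bullet cannot be obtained by ``the same packing with $\eps=\tfrac12$.'' Even at $\eps=\tfrac12$ each matching misses a constant fraction of the critical vertices, so one still needs $\Theta(\log n)$ consecutive degree-reduction steps before the straggler count falls below one, i.e.\ $\Delta=\Omega(\log n)$ is still required; for constant $\Delta$ (say $\Delta=3$) the argument collapses, and your assertion that ``no lower bound on $\Delta$ is needed for the existence step'' confuses the existence of a good matching with the ability of the iterated packing to terminate within $3\Delta/2$ steps. The paper's $3\Delta/2$ algorithm is a genuinely different construction (\Cref{lemma:extraction,thm:threehalfedgeColoring}): it repeatedly extracts a $(3)$-graph (maximum degree $3$, no two degree-$3$ vertices adjacent) whose removal decreases the maximum degree by \emph{exactly} $2$, using a sequence of five maximal and maximum matchings --- the maximum matchings being locally computable only because they live in bipartite graphs where the minimum degree on one side exceeds the maximum degree on the other, which forces augmenting paths to have length $O(\Delta\log n)$ (\Cref{lem:bipartitepathlength}) --- and then $3$-colors each extracted $(3)$-graph via Fournier's theorem. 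You would need to supply something of this exactness to cover all $\Delta$.
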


We remark that the complexities stated above depend in a black-box manner on the complexity of computing hypergraph maximal matchings and ($1-\eps$)-approximations of weighted maximum matching. Above, we have stated the bound based on the current state of the art, which are presented in the concurrent work~\cite{newHypergraphMatching}; our technical sections make this dependency explicit. If one prefers not to depend on this simultaneous work\cite{newHypergraphMatching}, we could use the solutions provided in \cite{FOCS17}, and still obtain a polylogarithmic-time $(1+\eps)\Delta$-edge coloring for arbitrarily small constant $\eps>0$.

Another remark is that one can choose the $\eps$ parameter in \Cref{thm:finalColoring} to be quite small, and as a result, can get a coloring with $\Delta+\poly(\log n)$ colors, as we formalize in the next corollary. This is interesting because the known randomized methods cannot go below the bound of $\Delta+\sqrt{\Delta}$ colors, which is a natural barrier for those methods\cite{Pettie17}, rooted in the standard deviations of the random nibble step.

\begin{restatable}{corollary}{thmfirstedgecoloring}
 \label{thm:edgeColoringLastStep}
There is a deterministic distributed algorithm that colors the edges of any $n$-node graph  with maximum degree at most $\Delta$ with  
	$\Delta + \bigO\left( \log n \log \left( 2 + \frac{\Delta}{\log n} \right) \right)$ colors in $\bigO \left( \Delta^{9} + \log^{\bigO(1)} n \right)$ rounds. 
\end{restatable}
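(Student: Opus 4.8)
The plan is to derive the corollary from \Cref{thm:finalColoring} by a case distinction driven by how $\Delta$ compares to $\log n$, realized through a single value of $\eps$ tuned to $\Delta$ and $n$. Fix $C:=\max\{c,2\}$, where $c$ is the constant of \Cref{thm:finalColoring}, and set
\[
\eps\;:=\;\frac{C\,\log n\cdot\log\big(2+\tfrac{\Delta}{\log n}\big)}{\Delta}.
\]
If $\eps\ge 1$, then (as checked below) $\Delta=O(\log n)$, and I would run the second algorithm of \Cref{thm:finalColoring}: it yields a $3\Delta/2$-edge-coloring, i.e.\ $\Delta+\Delta/2$ colors, where the surplus satisfies $\Delta/2=O(\log n)=O\big(\log n\log(2+\Delta/\log n)\big)$ (using $\log(2+\Delta/\log n)\ge\log 2$), and its running time $O\big(\Delta^9\log^{O(1)}n\big)$ collapses to $O\big(\log^{O(1)}n\big)$. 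If $\eps<1$, I would run the first algorithm of \Cref{thm:finalColoring} with this $\eps$: it produces a coloring with $\Delta+\eps\Delta=\Delta+C\log n\log(2+\Delta/\log n)$ colors, which is already of the claimed form.

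The step that needs genuine care is verifying the degree precondition $\Delta\ge c\,\eps^{-1}\log\eps^{-1}\log n$ of \Cref{thm:finalColoring} when $\eps<1$. Here $\eps<1$ forces $\Delta>C\log n\log 2>\log n$, so $\log(2+\Delta/\log n)\ge 1$ and hence $\eps^{-1}=\Delta/\big(C\log n\log(2+\Delta/\log n)\big)\le\Delta/\log n$; taking logarithms, $\log\eps^{-1}\le\log(\Delta/\log n)\le\log(2+\Delta/\log n)$, i.e.\ the logarithmic factor in the denominator of $\eps^{-1}$ already dominates $\log\eps^{-1}$. Substituting this,
\[
c\,\eps^{-1}\log\eps^{-1}\log n\;\le\; c\cdot\frac{\Delta}{C\log n\log(2+\Delta/\log n)}\cdot\log(2+\Delta/\log n)\cdot\log n\;=\;\frac{c}{C}\,\Delta\;\le\;\Delta,
\]
since $C\ge c$. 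The complementary claim used above, that $\eps\ge 1$ implies $\Delta=O(\log n)$, follows by writing $m:=\Delta/\log n$ and noting $\eps\ge 1 \iff m\le C\log(2+m)$, which (as $C\log(2+m)=o(m)$) can hold only for $m$ below a constant threshold depending on $C$.

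The rest is routine arithmetic: the color count is settled above in both cases, and for the running time when $\eps<1$ the same bound $\eps^{-1}\le\Delta/\log n$ gives $\eps^{-9}\log^{O(1)}n\le \Delta^9\log^{O(1)}n/\log^{9}n$, which, with the explicit exponent of the polylogarithmic factor in \Cref{thm:finalColoring}, lies in $O\big(\Delta^9+\log^{O(1)}n\big)$; for $\Delta=O(\log n)$ the whole bound is $O\big(\log^{O(1)}n\big)$, so the overall complexity is $O\big(\Delta^9+\log^{O(1)}n\big)$ as claimed. The only obstacle worth flagging is that one and the same $\eps$ must be small enough that $\eps\Delta=O\big(\log n\log(2+\Delta/\log n)\big)$, large enough that the degree precondition holds, and large enough that $\eps^{-9}$ does not inflate the running time past $\Delta^9$; the cancellation $\log\eps^{-1}\le\log(2+\Delta/\log n)$ is exactly what reconciles these three demands.
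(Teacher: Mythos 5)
Your proposal is correct and takes essentially the same route as the paper: the paper also sets $\eps = \Theta\big(\Delta^{-1}\log n\log(2+\Delta/\log n)\big)$, splits into the cases $\Delta = O(\log n)$ and $\Delta = \omega(\log n)$, and verifies the degree precondition via the same bound $\log\eps^{-1} \le \log(2+\Delta/\log n)$. The only cosmetic difference is that you invoke \Cref{thm:finalColoring} while the paper applies \Cref{thm:epsedgeColoring} directly and re-derives the round complexity from the explicit hypergraph-matching bound; as you rightly flag, either way one must use the explicit polylogarithmic exponent (which is at most $9$) rather than the unspecified $\log^{O(1)}n$ of the black-box statement, and with $\eps^{-1}\le\Delta/\log n$ that check does go through.
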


Finally, we note that in a recent simultaneous work, Ghaffari, Harris, and Kuhn\cite{newHypergraphMatching} present a generic method for derandomizing local distributed algorithms. As we explain in \Cref{sec:derandomized}, combining their result with the randomized edge coloring algorithm of \cite{Pettie17}, we can obtain an alternative method for proving a slightly weaker version of the first part of \Cref{thm:finalColoring}. We still believe that the algorithm that we present in the main body of this paper has a number of advantages: (1) the lower bound on $\Delta$ for which this method works is better, (2) it is much simpler, cleaner, and more comprehensible in comparison with the algorithm that comes out of the ``automated" application of derandomization (via a certain method of conditional expectations) atop the non-trivially complex randomized algorithm of\cite{Pettie17}, (3) the computations and communications in this main algorithm are simpler and more efficient, e.g., the algorithm fits in the \CONGEST model, modulo the part of ($1-\eps$)-approximation of maximum-weight matchings, which itself can presumably be improved in the future to work in the \CONGEST model.

\subsection{Our Method in a Nutshell}
Here, we provide a brief outline of our $(1+o(1))\Delta$-edge-coloring algorithm, which is made of two components. The latter of which is the main novelty of this paper. The first component is a splitting algorithm, borrowed from \cite{ghaffari17, Splitting17}, which partitions the edge-set of the graph into roughly $(1+o(1))\Delta/d$ disjoint sets, each of which forms a spanning subgraph with max-degree $d=\tilde{O}(\log n)$. This effectively means that all that remains to be solved is $(1+o(1))\Delta$-edge-coloring for graphs of maximum degree $\tilde{O}(\log n)$.
  
We now discuss our $(1+o(1))\Delta$-edge-coloring for low-degree graphs, which is our main technical contribution. This algorithm is based on an iterative packing of judiciously crafted matchings, each of which will be one color class. To provide some intuition for this, let us consider a simple (though not ideal) algorithm for the easier objective of $(2\Delta-1)$-edge-coloring: simply, for $2\Delta-1$ iterations, in each iteration $i$, compute a maximal matching, color all of its edges with color $i$, and remove them from the graph. This will color all the edges because in each iteration, for each remaining edge $e$, either edge $e$ or at least \emph{one} of its incident edges gets colored. To get to a $(1+o(1))\Delta$-edge-coloring, our hope is that the matching computed in each iteration is more ``expansive" and for each remaining edge $e$ (formally, some appropriate relaxation of this ``for each" guarantee), the computed matching removes at least \emph{two} of its incident edges. This way, the neighborhood of $e$ drops at a rate of two edges per iteration and thus, the $2\Delta-1$ edges in this neighborhood are expeced to get exhausted in about $\Delta$ iterations, i.e., after about $\Delta$ colors. 

To gain more intuition, let us consider another hypothetical scenario, a bipartite $\Delta$-regular graph. This graph has a perfect matching \cite{alon2003simple, vizing1964}; if we could compute such a matching distributedly, we could remove it, remain with a bipartite $(\Delta-1)$-regular graph, and repeat, eventually ending with a $\Delta$-edge-coloring. However, we cannot compute a perfect matching efficiently in the distributed setting (this problem may need $\Omega(n)$ rounds). Our hope would be to use instead some near-maximum matching, but this creates some irregularity in the degrees, which can grow as we continue adding more and more matchings. Our algorithm follows a similar outline but has to be much more careful in managing these irregularities.

Concretely, the algorithm's core is as follows: For simplicity, assume that $\Delta$ is lower bounded by some sufficiently large polylogarithmic bound; the actual algorithm which optimizes this lower bound will need more care. For some $T=\Theta(\log n)$ iterations, in each iteration $t\in \{1, 2, \dots, T\}$, we find a near-maximum matching with the following special property: For each $i\in \{1, \dots, t\}$, for the set $S_i$ of all nodes of degree at least $\Delta-i$, this single matching is incident on a $(1-o(1))$-fraction of $S_i$. We call such a structure a \emph{pervasive matching}. One can see that for each $i$, such a matching exists. The core technical challenge in our algorithm will be to find one matching that satisfies this property for all $i$ \emph{simultaneously}. Our algorithm for computing a pervasive matching is abstracted by \Cref{cor:degreepriorities} and its proof appears in \Cref{sec:matching}. After $T$ steps, as we will show, this will ensure that the maximum degree has reduced by about $T(1-o(1))$. Hence, repeating this idea will result in a $(1+o(1))\Delta$-edge-coloring, in roughly $(1+o(1))\Delta$ iterations. Once the remaining degree $\Delta'$ becomes too small, say $\Delta'=o(\Delta)$, we cannot find such a nice matching, but we clean up that case more coarsely, by just packing maximal matchings; we thus use about $2\Delta'$ colors at that point, but this $2$-factor is negligible overall and we still get a $(1+o(1))\Delta$-edge-coloring.

Our $3\Delta/2$-edge-coloring algorithm also has a similar iterative structure, but now each iteration extracts some special subgraph, called $(3)$-graph, instead of a matching. This structure is such that (1) we are able to color its edges using $3$ colors, and (2) removing it reduces the maximum degree upper bound by an additive $2$. The extraction of $(3)$-graphs itself relies on the computation of a sequence of five maximal and maximum matchings of subgraphs of $G$; the latter is possible locally only thanks to the extra special properties of the bipartite graphs in consideration in those intermediate steps of the algorithm. 

\subsection{Other Related Work}
\myparagraph{Randomized Edge-Coloring Algorithms:}  The classic randomized MIS algorithm of Luby \cite{luby86,alon86} leads to a randomized $(2\Delta-1)$-edge-coloring algorithm with runtime $\bigO(\log n)$. This was improved to $O(\log \Delta+2^{\sqrt{\log\log n}})$ by Barenboim et al. \cite{barenboim12} and to $O(\sqrt{\log \Delta}+2^{\sqrt{\log\log n}})$ by Harris, Schneider and Su \cite{hsinhao_coloring}, both of which work also for the more general problem of $(\Delta+1)$-vertex-coloring. Fischer et al.\cite{FOCS17} gave a randomized $(2\Delta-1)$-edge-coloring algorithm with complexity $O(\log^8\log n)$.

In contrast to the deterministic setting there are quite a few randomized algorithms which use fewer than $2\Delta-1$ colors. The first such result was by Panconesi and Srinivasan \cite{Panconesi1992}; their result was later improved Dubhashi, Grable, Panconesi, to a $(1+\eps)\Delta$-edge-coloring algorithm in polylogarithmic time with the restriction that $\Delta=\Omega(\log^{1+\Omega(1)}n)$. 
The time complexity for the same number of colors was improved Elkin, Pettie und Su  to $O\big(\logstar\Delta \cdot \max\big(1,\frac{\log n}{\Delta^{1-o(1)}}\big)\big)$, \cite{Elkin15}. In \cite{Pettie17} Chang, He, Li, Pettie, and Uitto recently showed how to edge-color a graph with $\Delta+\tilde{O}(\sqrt{\Delta})$ colors in $O(\log \Delta \cdot T_{LLL})$ where $T_{LLL}$ is the complexity of a permissive version
of the constructive Lov\'{a}sz local lemma. 
We also refer to \cite{Pettie17} for a more detailed survey of randomized edge-coloring algorithms.

\myparagraph{Lower bounds for Edge-Coloring:} The celebrated $\Omega(\logstar(n))$ round complexity lower bound of Linial's\cite{linial1987LOCAL} still remains the only lower bound known for $(2\Delta-1)$-edge-coloring \cite{linial1987LOCAL}. Recently, Chang et al.\cite{Pettie17} proved that $(2\Delta-2)$-edge-coloring has lower bounds of $\Omega(\log_{\Delta} n)$ for deterministic and $\Omega(\log_{\Delta}\log n)$ for randomized algorithms. Moreover, one natural way to compute $(\Delta+1)$-edge-colorings is to extend partial colorings by iteratively recoloring edges along an `augmenting paths'. Chang et al.\cite{Pettie17} showed that with this `recoloring-along-a-path' approach one might have to recolor nodes along paths of length $\Omega(\Delta \log n)$ to color a single additional edge.

\myparagraph{Deterministic Edge-Coloring Algorithms for Low-Arboricity Graphs:}
%
In \cite{Elkin17} Barenboim, Elkin and Maimon gave a polylogarithmic-time determinisitc ($\Delta+o(\Delta)$)-edge-coloring for graph with arboricity $a \leq \Delta^{1-\eps}$, for a constant $\eps>0$. This was then improved by Fischer et al.\cite{FOCS17} to $(\Delta+(2+\eps)a-1)$-colors. Chang et al.\cite{Pettie17} present a $\Theta(\log_{\Delta} n)$-round $\Delta$-edge-coloring algorithm for trees.

\myparagraph{Distributed Maximum Matching Approximation:} Computing almost optimal matchings is at the core of our algorithms.  The standard approach to $(1 - \eps)$-approximate the maximum matching problem is by Hopcroft and Karp~\cite{Hopcroft73}.  The main obstacle to transfer the framework to a distributed setting is to find maximal sets of disjoint augmenting paths of length $\bigO(1/\eps)$.  Czygrinow and Ha\'{n}\'{c}kowiak gave a deterministic algorithm that runs in time $\log^{\bigO(1/\eps)} n$~\cite{czygrinow2003distributed}, which was recently improved to $\log^{\bigO(\log( 1/\eps ))} n$ by Fischer et al.~\cite{FOCS17}.  The most recent result by Ghaffari et al.~\cite{newHypergraphMatching} takes the $\eps$-dependency out of the exponent, yielding a $\poly(\log n / \eps)$-time algorithm.  On the randomized side, Lotker et al.\cite{LPP08} developed an $\bigO(\log n / \eps^2)$-time $(1 - \eps)$-approximation for maximum matching (also see \cite{LPP08} for additional work on randomized distributed matching approximation).



\section{Distributed Edge-Coloring Algorithm through Iterative Matchings}
\label{sec:iterativecoloring}
In this section, we first present the core of our matching algorithm which is especially efficient in graphs with intermediate degrees---e.g., polylogarithmic---and then, in \Cref{sec:edgeColoringwithSplitting}, we explain how to lift this algorithm to higher degrees, with the help of splitting techniques\cite{ghaffari17, Splitting17}. Before proceeding to the algorithm we present some notations that we use to express our complexities.

\paragraph{Notation for the Complexity of Some Black-Box Subroutines} Our algorithm makes use of two subroutines in a black-box manner, and thus the final complexity of our algorithm depends on the complexity of the (best known) method for them. To express this complexity explicitly and illustrate the dependencies, we use some notation. We define $\match{\ell}{\Gamma}$ to be the runtime of a hypergraph maximal matching algorithm with $n$ nodes, maximum degree at most $\Gamma$ and rank at most $\ell$. The best known published bound is from \cite{FOCS17}, roughly being $\log^{O(\log r)} \Delta \cdot \log n$, and an improvement to $\poly(r\log(\Delta n))$ was recently provided by\cite{newHypergraphMatching}. Moreover, we define $\approxmatching{\eps}$ for the runtime of a maximal  $(1-\eps)$-approximate weighted maximum matching on a simple graph with $n$ nodes and maximum degree at most $\Delta$; we provide an upper bound on the latter in \Cref{lemma:distrweightedmatching}, which is at most $\match{1/\eps}{\Delta^{O(1/\eps)}}\cdot \poly(\log n/\eps)$. 
 
\paragraph{Algorithm Outline.} Our edge-coloring algorithm constructs the color classes iteratively. In each iteration, the algorithm
constructs a matching that \emph{hits}---i.e., is incident on--- a large fraction of all the nodes. In
order to guarantee progress on all the nodes, when constructing the
next matching, we make sure that the nodes that have been hit by fewer
previous matchings are given more priority to be hit by the next
matching.

More precisely, in step $t$ of the algorithm,  we compute a single matching $M$ that simultaneously hits almost all of the nodes of degree at least $\Delta-i$ for each $i=1,\ldots,t$ (cf. \Cref{cor:degreepriorities}, detailed proof in \Cref{sec:matching}). 
Then, we color all edges of $M$ with an unused color and remove $M$ from the graph. 
If the initial maximum degree is large enough and we repeat this process for  $T=\Theta(\log n)$ steps, the maximum degree will reduce by $\approx(1-\eps)T$ (cf. \Cref{lemma:edgecoloringprogress}). Repeating this will eventually color the whole graph with few colors.
The runtime of this algorithm is inherently at least $\Omega(\Delta)$ as it computes the matchings sequentially and each matching corresponds to a single color class, i.e., it is only efficient for graphs with polylogarithmic degree. In \Cref{sec:edgeColoringwithSplitting} we show how degree splittings can be used to transform it into an efficient algorithm for all degrees. 
The following lemma is proved in detail in \Cref{sec:matching}.
\begin{lemma}(\textbf{The Pervasive Matching Lemma})\label{cor:degreepriorities}
  Consider a graph $G=(V,E)$ with maximum degree at most $\Delta$ and an
  integer  $t\geq 0$. For every $\eps>0$, there is an
  $\bigO\left( t/\eps +\approxmatching{\eps/2}\right)$-round distributed algorithm
  to compute a maximal matching $M$ such that for every $i\leq t$, $M$ hits
  a $(1-\eps)\cdot \frac{\Delta-i}{
\actualmaxdegree+1}$-fraction of all the
  nodes of degree at least $\Delta-i$, where $\actualmaxdegree\leq \Delta$ exactly equals the maximum degree of $G$.
\end{lemma}

We continue to prove that when starting with a sufficiently large
maximum degree $\Delta$, we can reduce the maximum degree at a rate  close to one by iteratively computing matchings with \Cref{cor:degreepriorities}. For readability, we provide proof sketches explaining the main ideas and full proofs of the following two lemmas.

\begin{lemma}\label{lemma:edgecoloringprogress} 
  Assume that we have an $n$-node graph $G=(V,E)$ with maximum degree at most 
  $\Delta\geq \frac{2\log n}{\eps}$.
	Then there is an algorithm that partially edge colors the graph with 
  $T:=\big\lceil\frac{\log n}{4e\eps}\big\rceil$ colors such that the maximum degree of the uncolored graph is 
  at most $\Delta - (1-4e\eps)T$.
	
	For each of the $T$ colors, the algorithm has round complexity
	 $\bigO\left(  T/\eps +\approxmatching{\eps} \right)$ . 
\end{lemma}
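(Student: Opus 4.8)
The plan is to iterate the pervasive matching of \Cref{cor:degreepriorities}. Put $G_0:=G$; in iteration $t=1,\dots,T$ run \Cref{cor:degreepriorities} on $G_{t-1}$ with the integer $t$ and parameter $\eps$, obtain a maximal matching $M_t$, give all edges of $M_t$ a fresh colour $t$, and set $G_t:=G_{t-1}\setminus M_t$. Since $M_t$ is a matching, every degree drops by at most one, so the maximum degree $\actualmaxdegree_t$ of $G_t$ satisfies $\actualmaxdegree_t\ge\actualmaxdegree_{t-1}-1$; iteration $t$ costs $\bigO(t/\eps+\approxmatching{\eps/2})\subseteq\bigO(T/\eps+\approxmatching{\eps})$ rounds, which is the claimed per-colour bound. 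If $\actualmaxdegree_0\le\Delta-(1-4e\eps)T$ there is nothing to prove, so assume $\actualmaxdegree_0>\Delta-(1-4e\eps)T$; since $\Delta\ge 2\log n/\eps$ forces $T\ll\Delta$, the maximum degree then stays above $\Delta/2$ throughout. It remains to bound the maximum degree of $G_T$.

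I would track the full degree profile. For $t\le T$ and $0\le i\le t$ let $S_i^{(t)}:=\{v:\deg_{G_t}(v)\ge\Delta-i\}$ and $a_i^{(t)}:=|S_i^{(t)}|$. A vertex lies in $S_i^{(t)}$ exactly if it lay in $S_i^{(t-1)}$ and was missed by $M_t$, or lay in $S_{i-1}^{(t-1)}$ and was matched by $M_t$; together with the guarantee of \Cref{cor:degreepriorities} that at most a $q_i^{(t)}:=1-(1-\eps)\tfrac{\Delta-i}{\actualmaxdegree_{t-1}+1}$ fraction of $S_i^{(t-1)}$ is missed, this yields the one-step recursion $a_i^{(t)}\le q_i^{(t)}a_i^{(t-1)}+a_{i-1}^{(t-1)}$, with $a_{-1}^{(\cdot)}\equiv 0$. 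The decisive point is that $q_i^{(t)}$ is only $\bigO(\eps)$ on the levels that actually occur: $a_i^{(t-1)}\neq 0$ forces $\actualmaxdegree_{t-1}\ge\Delta-i$, and writing $i=(\Delta-\actualmaxdegree_{t-1})+k$ gives $q_i^{(t)}=\tfrac{k+1+\eps(\actualmaxdegree_{t-1}-k)}{\actualmaxdegree_{t-1}+1}\le\eps+\tfrac{k+1}{\actualmaxdegree_{t-1}+1}$; using $\actualmaxdegree_{t-1}\ge\Delta/2$ and $\Delta\ge 2\log n/\eps$, this is $\bigO(\eps)$ for every relative level $k=\bigO(\log n)$. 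It is essential that \Cref{cor:degreepriorities} divides by the \emph{exact} maximum degree $\actualmaxdegree_{t-1}$, not by $\Delta$: the weaker bound $q_i^{(t)}\le\eps+\tfrac{i+1}{\Delta+1}$ is only a constant for $i\approx T$ and does not suffice.

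Now set $j^\ast:=\ceil{(1-4e\eps)T}-1$, so that $a_{j^\ast}^{(T)}=0$ is exactly the claimed degree bound, and $L:=T-j^\ast>4e\eps T\ge\log n$. Unrolling the recursion (it is applicable at a cell $(t,i)$ precisely while $i\le t$, and $a_i^{(t)}\le n$ otherwise) writes $a_{j^\ast}^{(T)}/n$ as a sum over monotone lattice paths back to the boundary; each contributing path uses at least $L$ ``missed'' steps, each contributing a factor $\bigO(\eps)$, while ``matched'' steps contribute $1$. Counting the relevant paths and using $T=\Theta(\log n/\eps)$ and $L\ge\log n$ turns this into a Chernoff/Poisson-type tail estimate of value at most $n\cdot\bigl(\tfrac{e\,T\cdot\bigO(\eps)}{L}\bigr)^{L}\cdot\bigO(1)=o(1)<1$, so the nonnegative integer $a_{j^\ast}^{(T)}$ equals $0$. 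The union bound over the $n$ vertices is precisely what produces the additive slack $\Theta(\log n)=\Theta(4e\eps T)$ between $T$ and the actual degree drop.

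The hard part is that both the bound $q_i^{(t)}=\bigO(\eps)$ and the path count above relied on only relative levels $k=\bigO(\log n)$ ever being reached, i.e. on the degree profile staying concentrated in an $\bigO(\log n)$-wide band below the running maximum degree --- which is essentially the statement being proved. The argument must therefore be organised as a single coupled induction on $t$ that simultaneously controls the maximum degree of $G_t$ and the width of its degree profile; the delicate point is to choose the invariant so that the induction closes and so that the residual $\eps$'s appearing in the various $q_i^{(t)}$ do not accumulate faster than the $\bigO(\eps)$ per step one can afford. The base case, the trivial case above, and the complexity bookkeeping are all straightforward.
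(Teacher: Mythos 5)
Your proposal is correct and follows essentially the same route as the paper's proof: the paper's induction invariant is exactly the binomial/lattice-path bound you derive, namely $K(t,i)\le\binom{t}{i}(2\eps)^i n$ for the number of nodes at relative level $i$ after step $t$, and its ``subclaim'' is precisely your observation that this invariant forces the degree profile to stay within a band of width at most $\log n+2$ below the running maximum degree, which yields $q=O(\eps)$ and closes the coupled induction you describe. The differences are cosmetic (absolute versus step-relative indexing of the degree levels, and Pascal's rule versus explicit path counting).
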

\myparagraph{Algorithm:} We start with all edges uncolored. 
In step $t\in \{1,\ldots,T\}$, we choose the set of edges that are to be colored with color $t$. 
 Let $G_t$ be the graph induced by the set of uncolored edges
after the first $t-1$ steps and let $\Delta_t$ be the maximum degree
of graph $G_t$. 
 With these definitions at hand, step $t$ of the algorithm uses a single invocation of \Cref{cor:degreepriorities} on $G_t$ to
compute a maximal matching $M$ such that for each
$\delta\in\set{\max\set{1, \Delta - T }, \dots, \Delta}$, $M$ covers at least a
$(1-\eps)\cdot\frac{\delta}{\Delta_t+1}$-fraction of all nodes with degree at least $\delta$. We assign color $t$ to all edges in $M$ and remove them from the graph.

We first provide a proof sketch for the Lemma to indicate the main ideas of the proof. The full proof follows right afterwards.
\begin{proof}[Proof Sketch:]
The runtime of one step of the algorithm  follows from the runtime of \Cref{cor:degreepriorities} with $t=T$.
  In the rest of the proof we need to upper bound the maximum degree of the uncolored graph after $T$ steps. 
	
	It is essential that the matching in a single step of the algorithm hits at least a \mbox{$(1-\Theta(\eps))$}-fraction of the large degree nodes. If we assumed a stronger requirement on $\Delta$, e.g., \mbox{$\Delta\geq 2 \eps^{-2} \cdot \log n$}, it is easy to see that 
	$$\frac{\Delta-T+1}{\Delta_t+1} \geq \frac{\Delta-T+1}{\Delta+1}=1-\frac{T}{\Delta+1} \geq 1-\eps$$
holds, that is, the computed matching would hit at least a $(1-\eps)^2$-fraction of the nodes of degree at least $\Delta-i$ for all $i\leq T$. With the weaker assumption on $\Delta$ one has to carefully track $\Delta_t$ to (at least) show that for all $i\leq t$ we have $\frac{\Delta-t+i}{\Delta_t+1}\geq \frac{\Delta-t}{\Delta_t+1}\geq 1-\eps$ (subclaim in the full proof).  
Thus, the matching in step $t$ hits at least a $(1-\eps)^2$-fraction, that is, at least a  $(1-2\eps)$-fraction, of the nodes of degree at least $\Delta-t+i$ for $i\leq t$. Then, by an induction on the number of rounds, one can show that for all $i\leq t$ the number of nodes with degree at least $\Delta-t+i$ after $t$ steps of the algorithm is less than $\binom{t}{i}\cdot (2\eps)^i \cdot n$~.
	
  Finally, to prove the main claim of \Cref{lemma:edgecoloringprogress}, we need to show that  the number of nodes with degree at least $\Delta-T+4e\eps T+1$ after round $T$ is smaller than one. This holds because with $i=4e\eps T+1$ we have
  \begin{align*}
	\binom{t}{i}\cdot (2\eps)^i \cdot n\leq
   \left(\frac{e T}{i}\right)^i(2\eps)^i\cdot n =\left(\frac{2e\eps T}{i}\right)^i\cdot n
    \ \stackrel{(i>4e\eps T)}{<}\
    \left(\frac{1}{2}\right)^{4e\eps T}\cdot n\ \stackrel{T=\big\lceil\frac{\log n}{4e\eps}\big\rceil}{\leq}\ 1. &\qedhere
  \end{align*}
\end{proof}
We now give a formal proof of the statement.
\begin{proof}
The runtime of one step of the algorithm  follows from the runtime of \Cref{cor:degreepriorities} with $t=T$.

  In the rest of the proof we upper bound the maximum degree of the uncolored graph after $T$ steps. 
	For  $\eps\geq 1/(4e)$ the statement holds as $\Delta - (1-4e\eps)T\geq \Delta$. So from now on we assume that $\eps<1/(4e)$. 
	
	For each $t\in\set{0,\dots,T}$ and for all
  $i\in\set{0,\dots,t+1}$, define $K(t,i)$ to be the number of nodes
  of degree at least $\Delta-t+i$ after the first $t$ steps (i.e., in
  graph $G_{t+1}$). Note that $K(t,t+1)=0$ for
  all $t\geq 0$. 
	
	We first show by induction on $t$ that for all
  $t\in\set{0,\dots,T}$ and $i\in\set{0,\dots,t}$, the following holds
  \begin{equation}\label{eq:nodesbehindbyi}
    K(t,i) \leq \binom{t}{i}\cdot (2\eps)^i \cdot n.
  \end{equation}
	
  \noindent\textit{Induction Base:} Note that for $i=0$ (and any $t\geq 0$)
   \Cref{eq:nodesbehindbyi} is trivial as it states that $K(t,i)\leq n$.
		Thus, \Cref{eq:nodesbehindbyi} also holds for  $t=0$ because $i=0$ is the only possible value with $i\leq t=0$. 
	
	\noindent\textit{Induction Step:} Consider some step $t\geq 1$ and $i\leq t$. We first prove the following subclaim which is necessary to show that the computed matching hits a fraction that is very close to one of the nodes of degree at least $\Delta-t+i$. 
	
	\medskip
	
	\noindent\textbf{Subclaim :} $\frac{\Delta-t+i}{\Delta_t+1}\geq 1-\eps$.
	\begin{proof}
Let $I$ be the smallest integer such that
  $K(t-1,I)=0$. 
	By the definition of $I$, 
 the maximum degree $\Delta_t$ after $t-1$ steps is at most
  $\Delta_t\leq \Delta-(t-1)+I-1=\Delta-t+I$.   
	Then the ratio
  $\frac{\Delta-t+i}{\Delta_t+1}$  simplifies to
  \begin{align*}
  \frac{\Delta-t+i}{\Delta_t+1}\ \geq\ \frac{\Delta-t+i}{\Delta-t+I+1}\
  \stackrel{(i\geq 0)}{\geq}\ \frac{\Delta-t}{\Delta-t+I+1}\ =\
  1- \frac{I+1}{\Delta-t+I+1}\ \stackrel{I\geq 0}{\geq}\ 1 - \frac{I+1}{\Delta-t+1}.
  \end{align*}
	
	To show the claim we now prove that  $\frac{I+1}{\Delta-t+1} \leq \eps$ holds.
Using $\Delta\geq \frac{2\log n}{\eps}$  and $t\leq T\leq \big\lceil\frac{\log n}{4 e \eps}\big\rceil$ we can first lower bound the denominator in the  term.
	\begin{align}
	\label{eqn:boundonDt}
	\Delta-t+1\geq \frac{2\log n}{\eps}- \left\lceil \frac{\log n}{4 e \eps} \right\rceil+1\geq\frac{2\log n}{\eps}-\frac{\log n}{4 e \eps}  = \frac{8e-1}{4e\eps}\cdot \log n~. 
	\end{align}
	Next, we upper bound the nominator in the term, i.e., we upper bound $I+1$. Let $j\geq \log n$. If $j> (t-1)$ then  $K(t-1,j)=0$ by definition. Otherwise, by the induction hypothesis and using $\binom{t-1}{j}< \left( \frac{e(t-1)}{j} \right)^j$ as well as $t-1\leq T-1 \leq \lfloor\frac{\log n}{4e\eps}\rfloor\leq \frac{\log n}{4e\eps}$, we have 
  \[
  K(t-1,j) \leq \binom{t-1}{j}\cdot (2\eps)^j\cdot n 
  \ <\  \left(\frac{2\eps e(t-1)}{j}\right)^j \cdot n\ \leq\ \left(\frac{2\eps e\cdot \frac{\log n}{4e\eps}}{j}\right)^j \cdot n\ \leq\ 1.
  \]
  Thus, $K(t-1,j)=0$ holds for $j\geq \log n$. Hence, we obtain that $I+1\leq \lceil \log n \rceil +1\leq \log n +2$.

	With $I+1\leq \log n +2$ and the lower bound on $\Delta-t+1$ from \Cref{eqn:boundonDt} we obtain
  \begin{align*}
    \frac{I+1}{\Delta-t+1} 
     \ \leq\   \eps\cdot \frac{4e}{8e-1} \cdot
             \frac{\log n   + 2}{\log n}  
				\stackrel{}{\ \leq\ } \eps~. 
  \end{align*}
	\phantom\qedhere
	The last inequality follows with $n\geq 8$. \hfill $\blacksquare$
\end{proof}

	Now, we can proceed with proving the induction step. Recall, that $t$ and $i$ are fixed. Then there are two
  different types of nodes of degree at least $\Delta-t+i$ after step
  $t$: $(1)$  Nodes of degree at least $\Delta-t+i+1$ after step $t-1$. $(2)$ Nodes of degree $\Delta-t+i$ after step
  $t-1$ that are not hit by the matching computed in step $t$. We now upper bound the number of nodes of each type.
	\begin{enumerate}[label=(\arabic*)]
	\item By the definition there are $K(t-1,i)$ nodes of type $(1)$.
  \item We show that there are at most $2\eps\cdot K(t-1,i-1)$ nodes of type $(2)$.
	
  Note that $\Delta-t+i$ lies in the range of $\delta$ when applying \Cref{cor:degreepriorities} as $\Delta-t+i \geq \Delta - T$.
  Thus, by \Cref{cor:degreepriorities}, of the nodes of degree at least $\Delta-t+i$ after step $t-1$, at least a
  $(1-\eps)\cdot\min\set{1,\frac{\Delta-t+i}{\Delta_t+1}}$-fraction is
  hit by the matching in step $t$. 
 The subclaim $\frac{\Delta-t+i}{\Delta_t+1}\geq 1-\eps$  implies that this fraction is at least a $(1-2\eps)$-fraction.
	
	The number of nodes that are not hit is therefore at most
	$(1-(1-2\eps))K(t-1,i-1) {\leq} 2\eps\cdot K(t-1,i-1).$
	\end{enumerate}
	
  Due to the bounds in $(1)$ and $(2)$ we obtain $K(t, i)  \leq K(t-1, i) + 2\eps\cdot K(t-1, i-1)$ and plugging in the induction hypothesis leads to \Cref{eq:nodesbehindbyi} as follows.
  \begin{align*}
    K(t, i) 
		\leq \binom{t-1}{i}\cdot (2\eps)^i\cdot n +
    2\eps\cdot \binom{t-1}{i-1}\cdot (2\eps)^{i-1}\cdot n
    \ =\ \binom{t}{i}\cdot (2\eps)^i\cdot n.
  \end{align*}
	This finishes the induction step.

  Finally, to prove the main claim of \Cref{lemma:edgecoloringprogress}, we show that $\Delta_{T+1}\leq \Delta - T + 4e\eps
  T$, or equivalently that for $i>4e\eps T$, we have
  $K(T,i)<1$.
	 Using \Cref{eq:nodesbehindbyi} and $\binom{T}{i}<(eT/i)^i$, we get
  \[
    K(T,i) < \left(\frac{e T}{i}\right)^i(2\eps)^i\cdot n =\left(\frac{2e\eps T}{i}\right)^i\cdot n
    \ \stackrel{(i>4e\eps T)}{<}\
    \left(\frac{1}{2}\right)^{4e\eps T}\cdot n \leq\ 1 \ .
  \]
  The last inequality follows because
  $T=\big\lceil\frac{\log n}{4e\eps}\big\rceil$.
\end{proof}

\begin{restatable}{lemma}{lemmaepsEdgeColoring}
 \label{thm:epsedgeColoring}
There is a constant $c>0$ such that for any $\eps>0$ there exists deterministic distributed algorithm that colors the edges of any $n$-node graph  with maximum degree at most $\Delta\geq c\cdot \eps^{-1}\log \eps^{-1} \log n$ with  
	$\Delta+\eps\Delta$ colors and has round complexity
$\bigO\left(\Delta \cdot (\log n/\eps^2 +\approxmatching{\eps/2} ) \right)$ .
\end{restatable}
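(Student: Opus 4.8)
The plan is to build the coloring by repeatedly peeling off batches of color classes with the progress lemma (\Cref{lemma:edgecoloringprogress}), each batch lowering the current maximum degree, until only a low‑degree graph remains that we can afford to color crudely. Maintain the uncolored graph together with its current maximum degree $d$; while $d$ exceeds a threshold $\tau=\Theta(\eps\Delta)$ (which by the hypothesis is $\Omega(\log\eps^{-1}\log n)=\Omega(\log n)$), invoke \Cref{lemma:edgecoloringprogress} on the current graph with parameter
\[
\eps''\ :=\ \max\Big\{\tfrac{\eps}{c'},\ \tfrac{2\log n}{d}\Big\}
\]
for a suitable constant $c'$; this spends the next $T(\eps'')=\lceil\log n/(4e\eps'')\rceil$ colors and replaces $d$ by at most $d-(1-4e\eps'')T(\eps'')$. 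The first term in the maximum is active while $d$ is comfortably large, the second — the smallest parameter for which \Cref{lemma:edgecoloringprogress} still applies at degree $d$ — takes over once $d$ is only moderately above $\log n$. Since $\eps''\ge 2\log n/d$ always, the precondition of \Cref{lemma:edgecoloringprogress} holds at every invocation, and since $\eps''\ge\eps/c'$ always, $\eps''=\Theta(\eps)$. Once $d\le\tau$, finish by $2\tau$‑edge‑coloring the remaining graph (whose maximum degree is at most $\tau$) with the $O(\Delta+\logstar n)$‑round $(2\Delta-1)$‑edge‑coloring of Panconesi and Rizzi~\cite{panconesi2001some}.

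For the color count, the crucial point is that one invocation with parameter $\eps''$ uses $T(\eps'')$ colors but lowers $d$ by at least $(1-4e\eps'')T(\eps'')=T(\eps'')-4e\eps''T(\eps'')\ge T(\eps'')-\log n-1$, i.e.\ it ``wastes'' at most $\log n+1$ colors beyond the degree drop it achieves; it then suffices to bound the number of invocations. While $\eps''=\eps/c'$ (i.e.\ $d\ge 2c'\log n/\eps$), each invocation drops $d$ by $\Theta(\log n/\eps)$, so $O(\eps\Delta/\log n)$ of them bring $d$ below $2c'\log n/\eps$, wasting $O(\eps\Delta/\log n)\cdot O(\log n)=O(\eps\Delta)$ colors in total. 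Once $\eps''=2\log n/d$, each invocation multiplies $d-\Theta(\log n)$ by a constant factor bounded away from $1$, so $O\!\big(\log\tfrac{\log n/\eps}{\tau}\big)$ further invocations reach $\tau$; this is exactly where the hypothesis $\Delta\ge c\,\eps^{-1}\log\eps^{-1}\log n$ is used, since it makes that count $O(\log\eps^{-1})$ and hence the waste from this regime $O(\log\eps^{-1}\cdot\log n)=O(\eps\Delta)$. Adding the $2\tau=O(\eps\Delta)$ colors spent on the remainder, the total is $\Delta+O(\eps\Delta)$; running the whole procedure with $\eps$ replaced by a small enough constant multiple of itself (and enlarging $c$ accordingly) brings this down to $\Delta+\eps\Delta$.

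For the running time, one invocation with parameter $\eps''$ costs $O\!\big(T(\eps'')\cdot(T(\eps'')/\eps''+\approxmatching{\eps''})\big)$ rounds by \Cref{lemma:edgecoloringprogress}; since $\eps''=\Theta(\eps)$ we have $T(\eps'')=O(\log n/\eps)$, $T(\eps'')/\eps''=O(\log n/\eps^2)$, and $\approxmatching{\eps''}$ differs from $\approxmatching{\eps/2}$ only by a constant factor inside the exponent, so each invocation costs $O\!\big(\log^2 n/\eps^3+(\log n/\eps)\cdot\approxmatching{\eps/2}\big)$. Summing over the $O(\eps\Delta/\log n)+O(\log\eps^{-1})$ invocations — the first term dominating and absorbing the second via the lower bound on $\Delta$ — yields the claimed $O\!\big(\Delta\cdot(\log n/\eps^2+\approxmatching{\eps/2})\big)$, and the final crude step adds only $O(\Delta+\logstar n)$ rounds, which is dominated. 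I expect the genuinely delicate part to be the joint bookkeeping of the second paragraph: one must track $d$ accurately enough to know simultaneously that $d\ge 2\log n/\eps''$ at every step (so \Cref{lemma:edgecoloringprogress} applies) and that the $\Theta(\log n)$‑per‑invocation waste, accumulated over the correct number of invocations, never leaves the $O(\eps\Delta)$ budget — and it is precisely to make this balance hold down to the lowest admissible degrees that the bound takes the somewhat unusual shape $c\,\eps^{-1}\log\eps^{-1}\log n$. The rest is routine.
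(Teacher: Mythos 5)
Your proposal is correct and follows essentially the same route as the paper: the paper's proof also iterates \Cref{lemma:edgecoloringprogress} with a parameter that grows as the degree shrinks (it discretizes your rule $\eps''=\max\{\Theta(\eps),\,2\log n/d\}$ into explicit phases with $\eps_i=2^i\eps/120$ and thresholds $\Delta_i=2\log n/\eps_i$), charges $O(\log n)$ wasted colors per batch, uses the hypothesis $\Delta\geq c\,\eps^{-1}\log\eps^{-1}\log n$ to absorb the $O(\log\eps^{-1})$ extra phases, and finishes with a crude clean-up of the low-degree remainder. The only differences are cosmetic (the paper's clean-up threshold is $O(\log n)$ rather than $\Theta(\eps\Delta)$, and it uses further maximal matchings instead of Panconesi--Rizzi).
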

We first provide a proof sketch for a weaker result; the proof of the full result follows afterwards.
\begin{proof}[Proof Sketch]
In the full proof of the lemma one has to apply \Cref{lemma:edgecoloringprogress} with increasing values for $\eps$ and perform a  careful analysis of the number of used colors. 
In this proof sketch we prove a slightly weaker result, i.e., we show that we can color the graph with  $(1+O(\eps))\Delta$ colors if $\Delta\geq \Delta':=2\eps^{-2}\log n$ .

To obtain the desired result we apply \Cref{lemma:edgecoloringprogress} with $\eps$ until the current upper bound on the maximum degree that is guaranteed by the lemma falls below $\Delta'$. Then perform a clean-up step in which the remaining uncolored graph is colored with $2\Delta'-1$ colors, e.g., by computing further $2\Delta'-1$ maximal matchings with \Cref{cor:degreepriorities}.
A single application of  \Cref{lemma:edgecoloringprogress} uses at most $T=\big\lceil\frac{\log n}{4e\eps}\big\rceil$ colors, reduces the maximum degree of the graph by $(1-4e\eps)T$, and there are at most \mbox{$K=\frac{\Delta-\Delta'}{(1-4e\eps)T}>0$} applications of the lemma. 
Thus, the total number of colors is bounded by 
$$K\cdot T +2\Delta'-1 \leq \frac{\Delta-\Delta'}{(1-4e\eps)}+2\Delta'\leq  (1+24 \eps)(\Delta- \Delta')+2\Delta'=\Delta+O(\eps)\Delta~.$$
Computing a single color class with \Cref{cor:degreepriorities} needs $\bigO\left(  T/\eps +\approxmatching{\eps} \right)$ rounds. 
Thus, the runtime for the $(1+O(\eps))\Delta$ color classes is bound by $O\left(\Delta\cdot \left(  \log n/\eps^2 +\approxmatching{\eps/2}\right)\right)$ .
\end{proof}
We continue with the full proof.
\begin{proof}
Define the constant $c:=360$. 
For $i=0,\ldots, l$ define $\eps':=\eps/120$, $\eps_i:=2^i\eps'$, the threshold degrees \mbox{$\Delta_i:=2\log n /\eps_i$} and $T_i= \left\lceil \frac{\log n}{4e\eps_i} \right\rceil$ where $l\leq \log 1/(4e\eps)'-1\leq \log 1/\eps+3$ is the smallest $l$ such that  $\eps_{l+1}\geq 1/(4e)$ holds. 
Then, the algorithm consists of phases $i=0,1,\ldots, l$ and a clean-up step. 
 
In phase $i$ we apply \Cref{lemma:edgecoloringprogress} with $\eps_i$ until the current upper bound of the maximum degree, which is guaranteed by the lemma, falls below $\Delta_i$. Then we continue to the next phase. Note that no node has to know the actual current maximum degree of the uncolored graph for this process. After phase $l$ the maximum degree of the remaining graph will be  at most $\Delta_l$ and we use the clean-up step to  color the  edges of the remaining graph with $2\Delta_l\leq 32e\cdot \log n\leq 96 \cdot \log n$ colors, e.g., by computing $2\Delta_l-1$ further  maximal matchings with \Cref{cor:degreepriorities}.

We now  upper bound the number of used colors. For that purpose define $\Delta_{-1}:=\Delta$. In phase \mbox{$i=0,1,\ldots, l$} a single application of  \Cref{lemma:edgecoloringprogress} reduces the maximum degree of the graph by $(1-4e\eps)T_i$~, we use at most $T_i$ colors in each application of the lemma and there are at most $K_i=\frac{\Delta_{i-1}-\Delta_i}{(1-4e\eps_i)T_i}>0$ applications of the lemma.
Thus, the total number of colors used in phase $i$ is upper bounded by
$$K_i\cdot T_i  = \frac{\Delta_{i-1}-\Delta_i}{(1-4e\eps_i)}\leq  (1+24 \eps_i)(\Delta_{i-1}- \Delta_i) \ .$$

During the $l+1$ phases and the clean-up step, we use 
\begin{align*}
\sum_{i=0}^{l}(1+24 \eps_i)(\Delta_{i-1}-\Delta_i)+2\Delta_l &\ =\  (1+24\eps')\Delta+\sum_{i=1}^l24\eps_i\Delta_i +\Delta_l \\
& \stackrel{(*)}{\ \leq\ }  (1+24\eps')\Delta +(48l+96)\log n\\
& \ \leq\  (1+24\eps')\Delta +(48(\log 1/\eps+3)+96)\log n\\
& \ \leq\ (1+24\eps')\Delta+48\log 1/\eps\log n + 240\log n\\
& \stackrel{(**)}{\ \leq\ } \Delta+\frac{3}{15}\eps \Delta + \frac{2}{15}\eps \Delta+ \frac{10}{15}\eps \Delta = (1 + \eps)\Delta \ , 
\end{align*}
that is, $\leq(1+\eps)\Delta$ colors in total. At $(*)$ we used $\eps_i\Delta_i= 2\log n$ and $\Delta_l\leq 96\log n$. At $(**)$ we used $24\eps'\leq 1/5\cdot \eps$  and $360 \log n \log \eps^{-1}\leq \eps \cdot \Delta$ .

\paragraph{Runtime:}
Computing a single color class in phase $i$ needs 
\[
	\bigO\left( T_i/\eps +\approxmatching{\eps_i} \right)=\bigO\left(  T/\eps +\approxmatching{\eps} \right) 
\]
rounds. 
As we compute $(1+\eps)\Delta$ color classes the total runtime is upper bounded by
\begin{align*} 
O\left(\Delta\cdot \left(  \log n/\eps^2 +\approxmatching{\eps/2}\right)\right) \ . & \qedhere
\end{align*} 
\end{proof}

\Cref{thm:epsedgeColoring,lemma:edgecoloringprogress} are simpler to prove if we tolerate a larger dependency on $\epsilon$ in the lower bound for $\Delta$. However, with that increased dependency we would not only lose  in the runtime but, more importantly, the number of colors could not go below $\Delta+\Omega\big(\sqrt{\Delta}\big)$ (as in \Cref{sec:goingBelow}) regardless of the time that we spend.


\subsection[Proof of The Pervasive Matching Lemma]{Proof of The Pervasive Matching Lemma (\Cref{cor:degreepriorities})}
\label{sec:matching}
In this section we prove \Cref{cor:degreepriorities}. First, in \Cref{sec:approxmatching} we prove that a simple consequence of Vizing's edge coloring theorem (cf. \Cref{lemma:largematchingsexits}) shows that for all $i=1,\ldots$  there exists a maximum matching that hits at least a $\frac{\Delta-i}{\actualmaxdegree+1}$-fraction of the nodes with degree at least $\Delta-i$, where $\actualmaxdegree$ is the exact maximum degree of the graph. 
Then a distributed implementation  of the \CREWPRAM weighted maximum matching approximation algorithm by Hougardy and Vinkemeier~\cite{Hougardy2006} (cf. \Cref{lemma:distrweightedmatching}) can be used to compute a matching $M_i$ that hits a 
$(1-\eps)\frac{\Delta-i}{\actualmaxdegree+1}$-fraction of the nodes with degree at least $\Delta-i$ (cf. \Cref{lemma:distrlocalmatching}).
In  \Cref{sec:combiningMatching} (\Cref{lemma:combinematchings}) we first show how two matchings $M_i$ and $M_j$ where $M_i$ hits a $(1-\eps)\frac{\Delta-i}{\actualmaxdegree+1}$-fraction of the nodes of degree at least $\Delta-i$ and $M_j$ hits a  $(1-\eps)\frac{\Delta-j}{\actualmaxdegree+1}$-fraction of the nodes of degree at least $\Delta-j$ can be combined into a single matching that has both properties.
Then in \Cref{lemma:prioritymatching,cor:degreepriorities} we compute matchings $M_i$ for each $i=1,\ldots,\Theta(\log n)$ and then use the \Cref{lemma:combinematchings} to iteratively combine them into the single matching $M$ while maintaining their properties. 

\subsubsection{A Matching that Hits Most Nodes of One Target Node Set}
\label{sec:approxmatching}
A consequence of Vizing's edge coloring theorem \cite{vizing1964,Bollobas98a} shows that large maximum matchings exist. We use it to show that our maximum matching approximations hit enough nodes.
\begin{lemma}\label{lemma:largematchingsexits}
 Given a graph $G=(V,E)$ and a node set $S\subseteq V$. If the maximum degree of $G$ is $\actualmaxdegree$ and all nodes in $S$ have degree
  at least $\delta_S$, there exists a matching of $G$
  that hits at least $\frac{\delta_S}{\actualmaxdegree+1}\cdot |S|$ of the nodes in $S$.
\end{lemma}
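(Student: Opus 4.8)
The plan is to derive this directly from Vizing's edge-coloring theorem together with a one-line averaging argument over the nodes of $S$.

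First I would invoke Vizing's theorem \cite{vizing1964,Bollobas98a}: since $G$ has maximum degree $\actualmaxdegree$, its edge set decomposes into $\actualmaxdegree+1$ matchings $M_1,\dots,M_{\actualmaxdegree+1}$, namely the color classes of a proper $(\actualmaxdegree+1)$-edge-coloring of $G$. For a matching $M$, let $h_S(M)$ denote the number of nodes in $S$ that are incident to an edge of $M$. The goal is to show that some $M_j$ has $h_S(M_j)\geq \frac{\delta_S}{\actualmaxdegree+1}|S|$.

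Next I would double-count incidences between the nodes of $S$ and the $\actualmaxdegree+1$ matchings. Fix $v\in S$. By assumption $v$ has degree at least $\delta_S$, so it is incident to at least $\delta_S$ distinct edges; these edges all receive distinct colors in the proper edge-coloring, hence $v$ is hit by at least $\delta_S$ of the matchings $M_1,\dots,M_{\actualmaxdegree+1}$. Summing this over all $v\in S$ and switching the order of summation gives
\[
\sum_{j=1}^{\actualmaxdegree+1} h_S(M_j) \;=\; \sum_{v\in S}\bigl|\{\, j : v \text{ is incident to } M_j \,\}\bigr| \;\geq\; \delta_S\cdot|S|.
\]
By averaging there is an index $j^\star$ with $h_S(M_{j^\star}) \geq \frac{\delta_S\cdot|S|}{\actualmaxdegree+1}$, and this $M_{j^\star}$ is the matching claimed by the lemma.

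I do not expect any genuine obstacle: the argument is elementary once Vizing's theorem is in hand, and the bookkeeping is minimal. The only subtle point is to make sure the palette size used is exactly $\actualmaxdegree+1$ where $\actualmaxdegree$ is the \emph{true} maximum degree of $G$ (not some upper bound $\Delta$), which is precisely the way the lemma is stated; this is what makes the fraction $\frac{\delta_S}{\actualmaxdegree+1}$ come out sharp. If one wanted to avoid citing Vizing, a $(2\actualmaxdegree-1)$-edge-coloring obtained greedily would give the same style of bound with a worse constant, but Vizing delivers exactly the asserted ratio.
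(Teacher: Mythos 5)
Your proposal is correct and is essentially identical to the paper's proof: both invoke Vizing's theorem to decompose $E$ into $\actualmaxdegree+1$ matchings, observe that each node of $S$ is hit by at least $\delta_S$ of them, and conclude by averaging. You merely spell out the double-counting step that the paper compresses into one sentence.
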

\begin{proof}
By Vizing's theorem, the graph $G$ has an edge-coloring with
  $\Delta+1$ colors. Hence, the graph $G$ contains $\Delta+1$ disjoint
  matchings such that each node in $S$ is hit by at least
  $\delta_S$ of the matchings. On average, the $\Delta+1$
  matchings therefore hit at least $\frac{\delta_S}{\Delta+1}\cdot|S|$ nodes in $S$. Consequently, one of the $\Delta+1$ matchings has to
  hit at least $\frac{\delta_S}{\Delta+1}\cdot |S|$ nodes in $S$.
	\end{proof}

To prove the following lemma we essentially provide a distributed version of the weighted maximum matching approximation algorithm by Hougardy and Vinkemeier~\cite{Hougardy2006} that runs in the \CREWPRAM model. Its formal proof emphasizing the differences to the proof in \cite{Hougardy2006} can be found in \cref{sec:weightedmatching}.
As combination of \Cref{lemma:distrweightedmatching,lemma:largematchingsexits}, one obtains a distributed algorithm that finds a matching that approximately satisfies the properties of the matching guaranteed to exist by \Cref{lemma:largematchingsexits}.

\begin{lemma}\label{lemma:distrweightedmatching}
  Let $G=(V,E)$ be an $n$-node graph with positive edge weights $w:E\to \mathbb{R}^+$ and let $\wmin$ and $\wmax$ denote the minimum and the maximum edge weight, respectively.
  If $\wmax / \wmin = n^{\bigO(1)}$, then, for every $\eps>0$, a maximal $(1-\eps)$-approximate weighted
   matching can be computed in time 	
	\[
		\approxmatching{\eps}=
	\bigO\left(1/\eps^2+ (1/\eps) \cdot \match{1/\eps}{\Delta^{O(1/\eps)}}\cdot \log n +\log^3 n\right) \ .
	\]
\end{lemma}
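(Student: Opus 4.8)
The plan is to adapt the Hougardy--Vinkemeier CREW PRAM algorithm~\cite{Hougardy2006} for $(1-\eps)$-approximate weighted matching to the $\LOCAL$ model. Recall the high-level structure of that algorithm: starting from an arbitrary (say, greedy maximal) matching $M$, one repeatedly looks for short \emph{augmenting structures} relative to $M$ --- alternating paths and cycles of length $O(1/\eps)$ whose symmetric difference with $M$ increases the weight --- and applies a maximal disjoint collection of weight-improving ones in parallel. After $O(1/\eps)$ such rounds one reaches a matching whose weight is within a $(1-\eps)$ factor of the maximum-weight matching; this is the standard fact that a matching with no short weight-augmenting structure is near-optimal. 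First I would fix the precise notion of a "short augmenting structure" as in~\cite{Hougardy2006} (bounded-length alternating paths/cycles, possibly with a small amount of local weight optimization along each one), and recall why finding a maximal independent set of vertex-disjoint such structures, applied in parallel, yields the claimed approximation after $O(1/\eps)$ outer iterations --- this part is purely combinatorial and carries over verbatim from~\cite{Hougardy2006}.

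The distributed implementation then reduces to two tasks per outer iteration. First, \textbf{enumerating} candidate augmenting structures: each such structure has length $O(1/\eps)$, so it lives inside the $O(1/\eps)$-hop ball of any of its vertices, and a node can collect its entire $O(1/\eps)$-neighborhood --- but naively this ball can have up to $\Delta^{O(1/\eps)}$ vertices, which is exactly why the $\Delta^{O(1/\eps)}$ appears in the stated bound. The right way to see the complexity is to phrase the selection of a maximal disjoint set of weight-improving structures as a \emph{hypergraph maximal matching} instance: the ground set is $V$, and each admissible augmenting structure of length $O(1/\eps)$ is a hyperedge on its $O(1/\eps)$ vertices, so the rank is $O(1/\eps)$ and the degree of this hypergraph is the number of such structures through a vertex, which is $\Delta^{O(1/\eps)}$. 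A hypergraph maximal matching in this auxiliary graph is precisely a maximal set of vertex-disjoint augmenting structures, and it can be found in $\match{1/\eps}{\Delta^{O(1/\eps)}}$ rounds of the auxiliary graph, each of which costs $O(1/\eps)$ rounds in $G$ to simulate (since each hyperedge has diameter $O(1/\eps)$ in $G$). The weight-improvement test and any local weight optimization along a fixed structure of length $O(1/\eps)$ is done locally by the node that "owns" the structure and contributes only the $O(1/\eps^2)$ and $\log^3 n$ lower-order terms (the $\log^3 n$ absorbing, e.g., the cost of the initial greedy maximal matching and of turning the final fractional/partial object into a genuine \emph{maximal} matching via one more $(2\Delta-1)$-style cleanup). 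Multiplying the per-iteration cost $O(1/\eps)\cdot\match{1/\eps}{\Delta^{O(1/\eps)}}$ by the $O(1/\eps)$ outer iterations gives the $(1/\eps)\cdot\match{1/\eps}{\Delta^{O(1/\eps)}}\cdot\log n$ term --- I would double-check whether the extra $\log n$ comes from a $\log n$-factor loss in the number of outer rounds when the weight ratio $\wmax/\wmin = n^{O(1)}$ (bucketing weights into $O(\log n)$ scales), which is the natural place for it to enter. The hypothesis $\wmax/\wmin = n^{O(1)}$ is used exactly here, to keep the number of weight buckets and the number of improvement rounds polylogarithmic.

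The second task is \textbf{maximality} of the output matching, which is not automatic from the approximation guarantee: after the last improvement round we have a near-maximum-weight matching $M$, but some edges may be addable without decreasing weight (zero- or positive-weight improvements that the length cap excluded). To make $M$ maximal I would add, as a final step, a $(2\Delta-1)$-edge-coloring-style pass --- or simply greedily extend $M$ to a maximal matching using, e.g., a single maximal matching computation on the subgraph of edges that remain addable (this is a graph maximal matching, i.e. $\match{1}{\Delta}$, and is subsumed by the stated bound). Extending a matching to a maximal one never removes edges, so the weight can only increase; hence the $(1-\eps)$-approximation is preserved and the result is maximal.

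\textbf{Main obstacle.} The genuinely delicate part is the accounting that bounds the number of parallel improvement rounds by $O(1/\eps)$ (not $O(\log n /\eps)$ or worse): one must argue that a \emph{maximal} --- rather than maximum --- disjoint set of short weight-improving structures still makes enough aggregate progress toward the optimum in each round, so that $O(1/\eps)$ rounds suffice. This is precisely the subtle step in~\cite{Hougardy2006}, and the distributed version must make sure the hypergraph-maximal-matching relaxation of "maximal disjoint set of structures" does not weaken the progress guarantee; I expect the proof to spend most of its effort reproducing that argument while flagging where the $\LOCAL$ simulation differs (essentially only in replacing the PRAM symmetry-breaking primitive by the hypergraph maximal matching routine). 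Everything else --- radius $O(1/\eps)$ locality, the hypergraph encoding, the maximality cleanup, and the weight-bucketing that invokes $\wmax/\wmin = n^{O(1)}$ --- is routine once that combinatorial core is in place.
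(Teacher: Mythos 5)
Your overall architecture matches the paper's: encode short augmentations (alternating paths and cycles of length $\ell=O(1/\eps)$ with positive gain) as hyperedges of rank $O(1/\eps)$ and degree $\Delta^{O(1/\eps)}$, compute a maximal disjoint set via hypergraph maximal matching, apply it, repeat $O(1/\eps)$ times, and finish with a maximal-matching cleanup (the paper uses Fischer's $O(\log^3 n)$ maximal matching algorithm, which is exactly where the $\log^3 n$ term comes from). However, the step you defer --- arguing that a \emph{maximal} rather than maximum disjoint set of short weight-improving structures makes enough aggregate progress --- is a genuine gap, and it actually fails for the one-shot hypergraph maximal matching per iteration that you describe: a maximal set of disjoint augmentations could consist entirely of augmentations of negligible gain, each one blocking an augmentation of large gain, so a single iteration might recover only an arbitrarily small fraction of $w(M^*)-w(M)$, and $O(1/\eps)$ iterations would not suffice.

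The paper's fix is to bucket augmentations by \emph{gain} into $O(\log n)$ ranks (rank $i$ holding augmentations of gain roughly $2^i\wmin/(\ell n)$) and, within each outer iteration, to compute a maximal hypergraph matching \emph{separately for each rank in decreasing order}, removing hit vertices as it goes. This guarantees that every OPT-augmentation of rank $i>0$ is blocked by a selected augmentation of rank at least $i$, hence of comparable gain, which yields a per-iteration gain of $\Omega\big(\tfrac{1}{\ell^2}(w(M^*)-w(M))\big)$ and hence $O(1/\eps)$ outer iterations. This single device also resolves both of your flagged uncertainties: the $\log n$ in the stated bound is the number of rank classes traversed \emph{per outer iteration} (your arithmetic "$(1/\eps)$ per-iteration cost times $(1/\eps)$ iterations gives the $(1/\eps)\cdot\log n$ term" does not actually produce a $\log n$), and the hypothesis $\wmax/\wmin=n^{O(1)}$ is used precisely to bound the number of ranks by $O(\log n)$, together with a separate treatment of rank-$0$ (negligible-gain) augmentations and of the degenerate case $w(M^*)<\wmin/\eps$. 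A further point you cannot import verbatim from Hougardy--Vinkemeier: their augmenting structures may be disconnected unions of paths, which are not locally enumerable in $\LOCAL$; the paper restricts to connected augmentations (single paths/cycles) and therefore must redo, not merely cite, the progress analysis.
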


\begin{lemma}\label{lemma:distrlocalmatching}
  Given a graph $G=(V,E)$ and a node set $S\subseteq V$. If
  all nodes in $G$ have degree at most $\Delta$ and all nodes in $S$
  have degree at least $\delta_S$, then, for every $\eps>0$, there is
  a distributed algorithm with time complexity
  $\approxmatching{\eps}$ that computes a matching that hits a
  $(1-\eps)\cdot\frac{\delta_S}{\actualmaxdegree+1}$-fraction of the nodes in $S$, where $\actualmaxdegree\leq \Delta$ exactly equals the maximum degree of $G$.
\end{lemma}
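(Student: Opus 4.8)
The plan is to reduce the statement to a single invocation of the weighted matching approximation of \Cref{lemma:distrweightedmatching}, using a weighting that turns ``number of $S$-nodes hit'' into ``total weight of the matching''. Concretely, I would restrict attention to the subgraph $G'=(V,E')$ where $E':=\{e\in E: e\cap S\neq\emptyset\}$ consists of the edges incident to $S$, and assign each such edge the weight $w(e):=|e\cap S|\in\{1,2\}$. Membership in $S$ is part of the input, so after one communication round every edge knows its weight; and since $\wmax/\wmin\le 2=n^{O(1)}$, \Cref{lemma:distrweightedmatching} applies to $(G',w)$ and produces, in $\approxmatching{\eps}$ rounds on a graph with at most $n$ nodes and maximum degree at most $\Delta$, a matching $M$ of $G'$ whose weight is at least $(1-\eps)$ times the maximum weight of any matching of $G'$.

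The key observation is that for any matching $N$ of $G$ (in particular of $G'$), the quantity $w(N)=\sum_{e\in N}|e\cap S|$ equals exactly the number of nodes of $S$ hit by $N$: each node of $S$ lies in at most one edge of $N$ and contributes $1$ to the weight of that edge. Hence maximizing $w$ over matchings of $G'$ is literally the problem of maximizing the number of $S$-nodes hit by a matching all of whose edges touch $S$; and since deleting from any matching of $G$ the edges disjoint from $S$ changes neither this count nor the property of being a matching, the optimum of the weighted problem on $(G',w)$ equals the maximum, over all matchings of $G$, of the number of $S$-nodes hit.

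It then remains to lower bound this optimum, which is precisely what \Cref{lemma:largematchingsexits} provides: since all nodes of $S$ have degree at least $\delta_S$ and $G$ has maximum degree $\actualmaxdegree$, there is a matching of $G$ hitting at least $\frac{\delta_S}{\actualmaxdegree+1}\,|S|$ nodes of $S$, so $\mathrm{OPT}(G',w)\ge \frac{\delta_S}{\actualmaxdegree+1}\,|S|$. Therefore the computed matching $M$ satisfies $w(M)\ge(1-\eps)\,\frac{\delta_S}{\actualmaxdegree+1}\,|S|$, and by the observation above $M$ hits at least $(1-\eps)\,\frac{\delta_S}{\actualmaxdegree+1}\,|S|$ nodes of $S$, i.e.\ a $(1-\eps)\cdot\frac{\delta_S}{\actualmaxdegree+1}$-fraction of $S$. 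The running time is that of the single call to \Cref{lemma:distrweightedmatching}, namely $\approxmatching{\eps}$.

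There is no deep obstacle; the two points that need care are (i) producing a weighting that is strictly positive with polynomially bounded spread so the black box of \Cref{lemma:distrweightedmatching} is applicable — this is exactly why one passes to $G'$ instead of weighting all of $E$ by $|e\cap S|$, which would create zero weights — and (ii) observing that the $(1-\eps)$ factor, a priori only a guarantee on weight, transfers without any further loss to the combinatorial count of hit $S$-nodes, precisely because under this weighting weight and count coincide on every matching. I would also note that it is the true maximum degree $\actualmaxdegree\le\Delta$ that appears in the bound, inherited verbatim from \Cref{lemma:largematchingsexits}.
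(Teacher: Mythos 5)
Your proposal is correct and matches the paper's own proof essentially verbatim: the same reduction to weighted matching via the weight $w(e)=|e\cap S|$, the same use of \Cref{lemma:largematchingsexits} to bound the optimum, and the same application of \Cref{lemma:distrweightedmatching}. Your passage to the subgraph $G'$ of edges touching $S$ is exactly the paper's remark that edges of weight $0$ can be ignored, just spelled out more carefully to justify the positivity and bounded-spread hypotheses of the black box.
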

\begin{proof}
  The problem of finding a matching $M$ of $G$ that hits as many nodes
  as possible of $S$ can be formulated as a maximum weighted matching
  problem as follows. We define the weight of an edge $\set{u,v}\in E$
  of $G$ as $w(\set{u,v}):=|S\cap\set{u,v}|$. Hence, the total weight
  of a matching $M$ is equal to the number of nodes in $S$ hit
  by $M$. Thus, \Cref{lemma:largematchingsexits} implies that a
  $(1-\eps)$-approximation of this maximum weighted matching problem
  gives a matching that hits at least a
  $(1-\eps)\cdot\frac{\delta_S}{\actualmaxdegree+1}$-fraction of the nodes in $S$. By
  \Cref{lemma:distrweightedmatching} such an approximate weighted
  matching can be computed in time $\approxmatching{\eps}$. 
	Note that when computing the matching, edges of weight $0$ can be ignored.
\end{proof}

\subsubsection{Combining Matchings while Maintaining their Properties}
\label{sec:combiningMatching}

The core application (cf. the proof of \Cref{lemma:prioritymatching}) of the next lemma will be a combination of a matching $M_a$ that hits a large fraction of all nodes with a matching $M_b$ that hits a large fraction of a subset $S\subseteq V$  into a single matching $M_c$ that hits a large fraction of the nodes in $S$ and a large fraction of all nodes.

\begin{lemma}\label{lemma:combinematchings}
  Given a graph $G=(V,E)$ and a node set $S\subseteq V$. Assume that
  we are given two matchings $M_a$ and $M_b$ of $G$ such that matching
  $M_b$ hits at least $s\leq |S|$ nodes of $S$. 
  Then for every
  $k\geq 1$, in $O(k)$ rounds, we can compute a matching $M_c$ such that
	\begin{enumerate}[label=(\roman*)]
	\itemsep 0pt \parsep 0pt \leftmargin 5mm \labelwidth 5mm \parskip 0pt \topsep 0pt
	\item $|M_c|\geq |M_a|$
	\item $V(M_a) - (V(M_c) \cup S)\leq S\cap V(M_c) - V(M_a)$, i.e., for every node outside $S$ that is matched by $M_a$ and not matched by $M_c$ there is a node inside $S$ that is matched by $M_c$ and not matched by $M_a$. 
	\item $M_c$
  matches at least $\big(1-\frac{1}{k}\big)s$ nodes of $S$. 
	\end{enumerate}
\end{lemma}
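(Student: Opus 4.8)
The natural approach is to combine the two matchings via their symmetric difference $M_a \triangle M_b$. This graph has maximum degree $2$, so it decomposes into vertex-disjoint paths and cycles, each of which alternates between $M_a$-edges and $M_b$-edges. On every such component, taking all the $M_a$-edges gives one matching and taking all the $M_b$-edges gives another; more generally, on a path we may switch to the "$M_b$-side" to pick up $M_b$-endpoints at the cost of at most losing the two extreme vertices. The plan is: first build $M_a \triangle M_b$ and identify its connected components; these components can be found and "oriented" locally, but long paths and cycles are a problem in the $\mathsf{LOCAL}$/$\mathsf{CONGEST}$ model, which is exactly where the parameter $k$ and the $O(k)$ round budget enter — we will only be allowed to make decisions based on length-$O(k)$ neighborhoods along these paths. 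So the first real step is a standard trick: chop every long path and cycle of $M_a \triangle M_b$ into segments of length $\Theta(k)$ by deleting one carefully chosen edge every $\Theta(k)$ hops (this can be done in $O(k)$ rounds, e.g. using the vertex IDs to break symmetry / find ruling sets on the path), so that afterwards each component has length $O(k)$ and can be processed entirely locally.

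Within each bounded-length component $P$ we then choose which edges to put into $M_c$. The guiding principle: we want $M_c$ to contain essentially all of $M_a$ on that component (to guarantee (i) and (ii)), but on components where $M_b$ hits many $S$-vertices that $M_a$ misses, we want to switch to $M_b$'s edges to capture those $S$-vertices. Concretely, for each component I would compare the "$M_a$ choice" and the "$M_b$ choice" and pick whichever matches more $S$-vertices, \emph{unless} switching away from $M_a$ would cost an unmatched non-$S$ vertex without a compensating newly-matched $S$-vertex — the alternating structure of a path guarantees that any such trade is always at worst one-for-one, which is precisely what invariant (ii) asks for. For cycles and for the even-length "$M_a$-perfect" components there is no loss at all. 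Since $|M_a|$ restricted to each component equals or is within one of $|M_b|$ restricted to it, and we only ever switch when it is locally non-harmful, summing over components yields $|M_c|\ge |M_a|$ (i) and the global inequality $V(M_a)\setminus(V(M_c)\cup S) \le (S\cap V(M_c))\setminus V(M_a)$ (ii), where I am reading the statement's notation $X-Y$ as $|X\setminus Y|$.

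For property (iii), the quantitative guarantee, the key observation is that $M_a \triangle M_b$ contains all $s$ of the $S$-vertices that $M_b$ matches — each lies on some path or cycle of the symmetric difference (it cannot be an isolated vertex there, since it is incident to an $M_b$-edge that is absent from $M_a$, or if present in both it is already fine). An $S$-vertex matched by $M_b$ fails to be matched by $M_c$ only if it was one of the (at most two) endpoints of a path-segment where we committed to the "$M_a$-side," or it was near one of the $\Theta(k)$-hop cut points we introduced. A charging argument then bounds the number of such lost $S$-vertices: each cut we make, and each path endpoint, destroys at most a constant number of would-be-matched $S$-vertices, while a $\Theta(k)$-length segment carrying them contains $\Omega(k)$ of them, so the fraction lost is $O(1/k)$. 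Tuning the segment length constant gives the clean bound $(1-1/k)s$.

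\textbf{Main obstacle.} The crux is the path-chopping step and the accompanying charging argument for (iii): we must cut long alternating paths/cycles into $O(k)$-length pieces \emph{and simultaneously} ensure that doing so destroys only an $O(1/k)$-fraction of the $s$ target $S$-vertices, \emph{while} never violating the delicate one-for-one trade condition (ii) at the new endpoints created by the cuts. Making all three requirements hold at once — in particular, choosing \emph{where} to cut so that the lost $S$-vertices are amortized against the $\Omega(k)$ surviving ones in each segment, rather than adversarially concentrated near cut points — is the technical heart of the lemma; the rest is the routine observation that symmetric differences of matchings are unions of paths and cycles.
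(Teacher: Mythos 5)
Your high-level framework (take the symmetric difference of $M_a$ and $M_b$, decompose it into alternating paths and cycles, and act only within radius $O(k)$) is the same as the paper's, but the route you take through it has a genuine gap --- one you yourself flag as the ``main obstacle'' and then leave unresolved. The chopping-into-$\Theta(k)$-segments step is both unachievable as stated and unnecessary. Unachievable: deterministically placing a cut every $\Theta(k)$ hops along a path is a ruling-set/MIS computation on a power of the path, which costs $\Omega(\log^* n)$ rounds by Linial's lower bound, so you do not get the claimed $O(k)$ bound from ``using the vertex IDs.'' Unnecessary and, worse, insufficient for (iii): your charging argument assumes that a $\Theta(k)$-length segment carrying lost $S$-vertices contains $\Omega(k)$ of the $s$ target vertices. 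That is false in general --- an alternating path of the symmetric difference can be arbitrarily long while containing only one or two $S$-vertices matched by $M_b$ (say, only at its endpoints), in which case losing those endpoints loses a constant fraction, or even all, of $s$. Your per-segment rule also leaves (i) and (ii) unverified at the cut points: the status of a cut edge whose two endpoints lie in segments that chose different ``sides'' is ambiguous, and choosing the side that ``matches more $S$-vertices'' can strictly decrease the number of matching edges on a component whose two endpoint-edges both belong to $M_a$.

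The paper's proof sidesteps all of this with two moves you are missing. First, it restricts $M_b$ to edges with at least one endpoint in $S$ before forming the symmetric difference; after this, every $M_b$-only (``blue'') edge on every alternating path contributes a distinct vertex of $S_b$, so any maximal augmenting path of length more than $4k$ automatically contains at least $2k$ vertices of $S_b$. Second, it never cuts anything: it sets $M_c:=M_a$ and swaps only on maximal augmenting paths of length at most $4k$ that have a blue endpoint-edge and an endpoint in $S_b\setminus S_a$. The blue endpoint-edge guarantees (i); the endpoint condition yields the injection from $V(M_a)\setminus(V(M_c)\cup S)$ into $S\cap V(M_c)\setminus V(M_a)$ needed for (ii); and long paths are simply left alone, so every interior $S_b$-vertex on them keeps its incident $M_a$-edge and stays matched, while only the at most two endpoints can be lost, giving $|S_b\setminus S_c|\le 2|S_b|/(2k)=|S_b|/k$ directly. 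No symmetry breaking, no per-segment case analysis, no boundary bookkeeping. As written, your proposal does not establish (i)--(iii).
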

\begin{proof}	We start with some notation, then present a simple algorithm that computes the matching $M_c$, and finally show that the matching has the desired properties.
		At first restrict $M_b$ only to edges that have at least one node in $S$.	
Throughout, we denote the edges in the symmetric difference of $M_b$ and $M_a$ as \emph{blue} and \emph{green} edges, i.e., denote  edges in $M_b - M_a$ as blue edges and edges in $M_a - M_b$ as green edges. 
  Let $S_a$, $S_b$ and $S_c$ denote the nodes of $S$ that are matched by $M_a$, $M_b$ and $M_c$, respectively.
	
The subgraph induced by the blue and green edges contains alternating (in blue/green) paths and cycles as every node has at most one incident blue and green edge. 

An (undirected) path $p$ is called \emph{alternating} if its edges are blue and green in an alternating manner. An alternating path of length at least one is called a \emph{maximal alternating path} if it cannot be extended to a longer alternating path. A maximal alternating path in which at least one of its \emph{endpoint-edges} is blue is called a \emph{maximal augmenting path}.
Note that the graph induced by blue and green edges consists of alternating paths and cycles and that all nodes on an alternating cycle are matched by $M_a$ and $M_b$. Further note that all maximal alternating paths are node disjoint.

\myparagraph{\boldmath Algorithm to compute $M_c$:}
\begin{DenseEnumerate}
\item Add all edges of $M_a$ to $M_c$.
\item 
For every maximal augmenting path $p$ of length at most $4k$ such that at least one of its endpoints lies in  $S_b - S_a$ remove its green edges from $M_c$, i.e., remove $p\cap M_a$ from $M_c$, and add its blue edges to $M_c$, i.e, add $p\cap M_b$ to $M_c$.
\end{DenseEnumerate}

The algorithm is well defined and as it only imposes changes on maximal augmenting paths of length at most $4k$ it can be implemented by a distributed algorithm in $O(k)$ rounds. 

\myparagraph{Analysis:}
Throughout the analysis $M_c$ and the set $S_c$ denote the sets after the execution of the algorithm.

\medskip
\noindent \textit{Matching Property:} The only edges that are changed are on (short) maximal augmenting paths. For a single maximal augmenting path, removing the green edges from the matching and adding the blue edges to the matching does not destroy the matching property as augmenting paths are alternating and the paths are maximal. 
Furthermore, all resulting maximal augmenting paths are node disjoint and thus, the operations on all maximal augmenting paths can be performed in parallel and still result in a matching. We continue with proving properties $i)-iii)$. 
\begin{DenseEnumerate}
\item[i)] As each maximal augmenting path begins with a blue edge  the matching size does not decrease by switching the edges on an augmenting path.

Here we want to emphasize that the total number of matching edges in $M_c$ does not increase by handling an augmenting path if the path ends with a green edge. 
However, the term 'augmenting path' is still valid in the sense that it increases the number of nodes within $S$ that are matched. 

\item[ii)] We prove this property by finding one distinct node in $W_1:=S_c - S_a=S\cap V(M_c) - V(M_a)$ for every node in $W_2:=V(M_a) - (V(M_c) \cup S)$. As we add all edges from $M_a$ to $M_c$ in step one of the algorithm a node $v$ can only be in $W_2$ if it became unmatched in step two, i.e., if $v$ was an endpoint of a maximal augmenting path and had an incident green edge before handling the path. 
However, then the other endpoint $v'$ of the path had an incident blue edge (here we use that only maximal augmenting paths with one endpoint in $S_b - S_a$ perform changes) and $v'$ is contained in $W_1$. Thus, a simple induction on the number of handled maximal augmenting paths shows the result.

\item[iii)] Every node $v\in S_b - S_c$, i.e., a node in $S$ which is matched by $M_b$ but is not matched by $M_c$ at the end of the algorithm, has an incident blue edge and no incident green edge. 
Thus, it lies at the start of a maximal alternating path $p$ that starts with a blue edge. 
As $v$ is unmatched in $M_c$, the path $p$ cannot be a maximal augmenting path of length at most $4k$.
Because $p$ starts with a blue edge this implies that the length of $p$ is at least $4k+1$ and $p$ contains  at least $2k$ nodes in $S_b$. As all alternating paths are node disjoint and any path can have at most two endpoints this implies $|S_b - S_c|\leq |S_b| \cdot 2/2k = |S_b|/k$. Thus, we have
$|S_c|\geq |S_b\cap S_c| = |S_b|-|S_b - S_c| \geq (1-1/k)|S_b|\geq (1-1/k)s$~.\qedhere
\end{DenseEnumerate}
\end{proof}
\label{sec:mergingMatching}

The next lemma proves the main distributed matching result that is
needed to iteratively compute a good edge-coloring. For a given decreasing chain of node sets $U_1\supseteq U_2\supseteq \ldots U_t$ the lemma computes a single matching that hits a large fraction of the nodes in each $U_i, i=1,\ldots,t$. In its proof we use \Cref{lemma:distrweightedmatching} to compute $t$  matchings $M_1,\ldots,M_t$ such that $M_i$ hits a large fraction of $U_i$.  Then we use \Cref{lemma:combinematchings} to iteratively combine the matchings into a single matching while maintaining their properties. 

\begin{lemma}\label{lemma:prioritymatching}
  Let $G=(V,E)$ be a graph with maximum degree at most
  $\Delta$. Further, assume that we are given $t$ disjoint nodes sets
  $V_1,V_2,\dots,V_t$ such that the minimum node degree of the nodes
  in $V_i$ is at least $\delta_i$, where $\delta_i\leq \delta_{i+1}$
  for  all $i\in\{1,\ldots,t-1\}$.
	
	For every $\eps>0$, there is a
  distributed $\bigO\left(t/\eps+ \approxmatching{\eps/2}\right)$-round algorithm that
  computes a maximal matching $M$ such that for every $i\in \set{1,\dots,t}$,
  $M$ hits an $(1-\eps)\cdot\frac{\delta_i}{\actualmaxdegree+1}$-fraction of the
  nodes in $$U_i := \bigcup_{j=i}^t V_j~,$$
	where $\actualmaxdegree\leq \Delta$ exactly equals the maximum degree of $G$.
\end{lemma}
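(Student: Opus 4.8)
The plan is to assemble $M$ from two ingredients that are already available: \Cref{lemma:distrlocalmatching}, which for a single target set produces a near‑optimal matching, and \Cref{lemma:combinematchings}, which merges two matchings. First, for each $i\in\{1,\dots,t\}$ I would compute a matching $M_i$ that hits many nodes of $U_i$. Because $\delta_j\ge\delta_i$ for all $j\ge i$, every node of $U_i=\bigcup_{j\ge i}V_j$ has degree at least $\delta_i$, so \Cref{lemma:largematchingsexits} (with $S:=U_i$, $\delta_S:=\delta_i$) shows that a matching hitting at least $\tfrac{\delta_i}{\actualmaxdegree+1}|U_i|$ nodes of $U_i$ exists, and \Cref{lemma:distrlocalmatching} computes, in $\approxmatching{\eps/2}$ rounds, a matching $M_i$ hitting at least $(1-\tfrac\eps2)\tfrac{\delta_i}{\actualmaxdegree+1}|U_i|$ nodes of $U_i$. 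These $t$ invocations are independent, so they run in parallel in $\approxmatching{\eps/2}$ rounds total.

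I would then fold the matchings together one at a time, in the order $i=1,2,\dots,t$, keeping a running matching $N_i$: set $N_1:=M_1$, and for $i\ge2$ let $N_i$ be the output of \Cref{lemma:combinematchings} applied with $M_a:=N_{i-1}$, $M_b:=M_i$, target set $S:=U_i$, and parameter $k:=\lceil 2/\eps\rceil$; finally $M$ is $N_t$, extended to a maximal matching (an extra maximal‑matching computation, which fits inside the claimed bound and only enlarges $\V{M}$). Each fold costs $O(k)=O(1/\eps)$ rounds, so this stage costs $O(t/\eps)$ rounds and the whole algorithm runs in $O(t/\eps+\approxmatching{\eps/2})$ rounds. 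The reason the $(1-1/k)$ losses do not accumulate is a monotonicity property of the fold at step $i$: for every set $T\supseteq U_i$ one has $|\V{N_i}\cap T|\ge|\V{N_{i-1}}\cap T|$. Indeed, inspecting the construction in \Cref{lemma:combinematchings}, any vertex matched by $N_{i-1}$ but not by $N_i$ is the endpoint (incident to a green edge) of one of the short augmenting paths that were switched, and the other endpoint of that path is a vertex of $S=U_i$ matched by $N_i$ but not by $N_{i-1}$; distinct lost vertices lie on distinct paths, and since $S\subseteq T$ these partners are among the newly matched vertices of $T$, so $T$ loses no more vertices than it gains.

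Finally, since $U_1\supseteq U_2\supseteq\cdots\supseteq U_t$, every $U_j$ with $j\le i$ is such a superset $T$ of $U_i$; hence, after the fold in which $U_j$ is itself the target, the number of matched vertices of $U_j$ never decreases. Together with property~(iii) of \Cref{lemma:combinematchings} applied at step $j$ this gives, for every $j\in\{1,\dots,t\}$,
\[
  |\V{M}\cap U_j|\ \ge\ |\V{N_j}\cap U_j|\ \ge\ \Big(1-\tfrac1k\Big)|\V{M_j}\cap U_j|\ \ge\ \Big(1-\tfrac\eps2\Big)^{2}\frac{\delta_j}{\actualmaxdegree+1}|U_j|\ \ge\ (1-\eps)\frac{\delta_j}{\actualmaxdegree+1}|U_j|,
\]
where the second inequality is an equality when $j=1$, and the final extension to a maximal matching preserves all of these bounds. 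I expect the main obstacle to be pinning down exactly this monotonicity of the fold --- that each switched short augmenting path anchored in $S=U_i$ replaces at most one matched vertex by a matched vertex of $U_i$, so that $|\V{\cdot}\cap U_j|$ cannot fall for any $U_j\supseteq U_i$ --- which is precisely why \Cref{lemma:combinematchings} records properties (i) and (ii) and not merely (iii), and what reduces the guarantee for each $U_j$ to a single approximation factor times a single $(1-1/k)$ factor rather than a product over all $t$ steps.
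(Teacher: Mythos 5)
Your proposal is correct and follows essentially the same route as the paper's proof: compute the matchings $M_i$ in parallel via \Cref{lemma:distrlocalmatching}, fold them together in order $i=1,\dots,t$ via \Cref{lemma:combinematchings} with $S=U_i$ and $k=2/\eps$, and extend to a maximal matching at the end, with the same $(1-\eps/2)^2\ge 1-\eps$ accounting and the same runtime breakdown. Your explicit monotonicity statement (every lost vertex is the green endpoint of a switched path whose other endpoint is a newly matched vertex of $S\subseteq U_j$) is exactly the structural fact underlying properties (i) and (ii) of \Cref{lemma:combinematchings}, stated in a slightly sharper and cleaner form than the paper's appeal to those two properties.
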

\begin{proof}
  First, for each $i\in \set{1,\dots,t}$ we compute a matching $M_i$
  that hits a $(1-\eps/2)\frac{\delta_i}{\actualmaxdegree+1}$-fraction of all nodes in
  $U_i$ by using \Cref{lemma:distrlocalmatching} and the fact that
  all nodes in $U_i$ have degree at least $\delta_i$. All of these matchings can
  be computed in parallel. Now, the
  matching $M$ is constructed inductively by applying
  \Cref{lemma:combinematchings} $t$ times.
  Formally, we iteratively construct matchings $M_1',\dots,M_t'$ such
  that for each $i\in \set{1,\dots,t}$ and each $j\leq i$ matching
  $M_i'$ hits at least a
  $(1-\eps)\cdot\frac{\delta_j}{\actualmaxdegree+1}$-fraction of all nodes in
  $U_j$. This proves the lemma as the matching $M_t'$
  satisfies all required conditions, except for the maximality. To obtain matching $M$ we extend $M_t'$ to a maximal matching in time $\bigO(\log^3 n)$ with the algorithm by Fischer~\cite{Fischer17}.
	
	\myparagraph{\boldmath Inductive construction of $M_1',\dots,M_t'$:} Set
  $M_1':=M_1$. For $i>1$, set $M_i':=M_c$ where $M_c$ is the result of applying
  \Cref{lemma:combinematchings} with $M_a = M_{i-1}'$, $M_b=M_i$, $S=U_i$ and
  $k=2/\eps$.

	\myparagraph{\boldmath$M_1',\dots,M_t'$ satisfy the required properties:} The matching $M_1$ satisfies all the
  properties required for $M_1'$ by its definition.	
	Assume that $M_j'$ satisfies the required properties for $j< i$. 	
	We then need to show that the matching $M_c$ in the inductive construction of $M_i'$ satisfies all the
  properties required for matching $M_i'$. 
	
	\textit{For $j<i$, $M_c$ matches at least a
  $(1-\eps)\cdot\frac{\delta_j}{\actualmaxdegree+1}$-fraction of the nodes in
  $U_j$:} We already know that $M_{i-1}'$ matches at least a
  $(1-\eps)\cdot\frac{\delta_j}{\actualmaxdegree+1}$-fraction of the nodes in
  $U_j$. 
		This is also true for matching $M_c$ because by \Cref{lemma:combinematchings}, $|M_c|\geq |M_{i-1}'|$ and for every node outside $U_i$ 
  that is matched by $M_{i-1}'$ and not matched by $M_c$, there is a node in $U_i$ (and thus also in $U_j$) that is matched by $M_c$ and not by $M_{i-1}'$.
	
	\textit{$M_c$  matches a $(1-\eps)\cdot\frac{\delta_i}{\actualmaxdegree+1}$-fraction of the
  nodes in $U_i$:} As $M_b=M_i$ matches a
  $(1-\eps/2)\cdot\frac{\delta_i}{\actualmaxdegree+1}$-fraction of the nodes in
  $U_i$ and because $k=2/\eps$, it follows from
  \Cref{lemma:combinematchings} that $M_c$ matches a
  $(1-\eps/2)^2\cdot\frac{\delta_i}{\actualmaxdegree+1}$-fraction of the nodes
  in $U_i$. Thus, the matching $M_c$ satisfies the required properties because
  $(1-\eps/2)^2\geq 1-\eps$ . 
	
	\myparagraph{Runtime:} As mentioned above, we first construct matchings $M_1, \ldots, M_{t}$ in parallel, which has runtime $\approxmatching{\eps/2}$ time. 
	Then, we use \Cref{lemma:combinematchings} to inductively construct matchings $M'_1, \ldots, M'_t$ where each step requires $\bigO(1/\eps)$ time. The extension to a maximal matching takes $O(\log^3 n)=O(\approxmatching{\eps/2})$ rounds. In total, the time complexity is
$
	\bigO\left( t/\eps + \approxmatching{\eps/2}\right)$ .\end{proof}

\begin{proof}[Proof of \Cref{cor:degreepriorities}]
  The lemma directly follows by applying
  \Cref{lemma:prioritymatching} with parameter $\eps$ and by using the
  sets $V_1,\dots,V_{t + 1}$, where $V_j$ is the set of nodes
  with degree $\Delta- t +j-1$.
\end{proof}

\subsection{Polylogarithmic Runtime}
\label{sec:edgeColoringwithSplitting}
The goal is to color the edges of a graph with maximum degree $\Delta$ with $(1+\eps)\Delta$ colors while having only polylogarithmic round complexity. As the runtime of the algorithm in \Cref{thm:epsedgeColoring} depends polynomially on the maximum degree, it is only efficient for polylogarithmic maximum degree. We use the results on degree splitting by Ghaffari et al.~\cite{Splitting17} to transform it into an algorithm that is efficient for all degrees.

An \textit{undirected degree splitting} is a partition of the edges into two sets $A$ and $B$. Let $E(v)$ denote the edges incident to $v$. The discrepancy of a node $v$ in a splitting is defined as $|E(v)\cap A - E(v)\cap B|$. The objective in the \emph{degree splitting problem} is to obtain a small discrepancy at each node, i.e., the number of incident edges of $v$ in the set $A$ should be as close as possible to the number of incident edges in $B$. 

To obtain a polylogarithmic round edge coloring algorithm we recursively split the graph until the degrees are below a threshold degree $\Delta'=\polylog n$. Then each subgraph is colored, in parallel, with a separate set of $(1+\eps_2)\Delta'$ colors with \Cref{thm:epsedgeColoring}. This yields to an edge coloring of the original graph with $(1+\eps_1)(1+\eps_2)\Delta=(1+\eps)\Delta$ colors.
The following result was shown in \cite{Splitting17}.
\begin{theorem}{\cite{Splitting17}}\label{thm:mainSplitting}
For every $\gamma>0$, there is a deterministic $\tilde{O}\big(\gamma^{-1}\cdot\log\gamma^{-1}\cdot \log n\big)$-round distributed algorithm for  undirected degree splittings such that the discrepancy at each node $v$ of degree $d(v)$ is at most $\gamma\cdot d(v) + 4$, where $\tilde{O}$ hides $\log \log \gamma^{-1}$ factors.
 \end{theorem}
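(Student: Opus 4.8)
The plan is to build the splitting recursively, halving all degrees at each level of recursion until the maximum degree drops to a constant, and then concatenating the bit-by-bit decisions made along the way. The key subroutine is a single-level \emph{near-balanced orientation}: orient each edge so that at every node the in-degree and out-degree differ by at most a small additive constant; then set $A$ to be the edges oriented ``in'' and $B$ the edges oriented ``out''. Such a balanced orientation is obtained by an Eulerian-type argument: on the subgraph of edges we currently care about, add a dummy vertex connected to all odd-degree vertices so that every vertex has even degree, decompose into edge-disjoint closed walks, and orient each walk consistently; the difficulty is that computing Euler tours is a global task, so instead I would use the known distributed construction of a low-discrepancy orientation that runs in $\tilde O(\log n)$ rounds per level after first reducing to the bounded-degree case via Linial-style color splitting, or invoke the existing $O(\log^{O(1)} n)$-round weak-splitting primitive and amplify. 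Each level multiplies the degree bound by roughly $1/2$ and adds a constant slack, so after $\log \gamma^{-1}$ levels the accumulated multiplicative error on the degree is a $\gamma$-factor and the accumulated additive slack is a geometric sum bounded by a small constant (this is where the ``$+4$'' comes from: $2 + 1 + 1/2 + \cdots < 4$, or a similar telescoping).

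Concretely I would proceed as follows. First, in Step~1, handle the bounded-degree base case: if $\Delta \le$ some constant, a splitting with discrepancy $\le 4$ can be computed in $O(\log^* n)$ rounds by a direct local argument (e.g.\ compute a proper edge-coloring with $O(1)$ colors and pair up color classes). Step~2 is the single-level splitting: given a subgraph $H$ with maximum degree $d$, produce $A,B$ partitioning $E(H)$ so that each node's discrepancy is at most, say, $d/2 + O(1)$ after the split — equivalently each side has degree at most $d/2 + O(1)$ at every node — in $\tilde O(\log n)$ rounds; this is the technically substantive ingredient and I would cite / adapt the Eulerian-orientation-based distributed splitting. Step~3 recurses on $A$ and on $B$ in parallel; since the two halves are vertex-disjoint in their edge sets only in the sense that they share vertices but not edges, the recursions run simultaneously with no interference, so the round complexity is (number of levels) $\times$ (per-level cost) $= \log \gamma^{-1} \cdot \tilde O(\log n)$, matching the claimed $\tilde O(\gamma^{-1}\log\gamma^{-1}\log n)$ up to the $\gamma^{-1}$ factor that comes from needing the per-level discrepancy to be an $O(\gamma d)$-fraction rather than a constant fraction (which forces a finer, $\gamma$-dependent single-level primitive or more levels).

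Step~4 is the error accounting. Let $d_\ell$ be the degree bound at recursion level $\ell$, with $d_0 = \Delta$ and $d_{\ell+1} \le d_\ell/2 + c$ for a fixed constant $c$. Unrolling, $d_\ell \le \Delta/2^\ell + 2c$, so after $L = \lceil \log(\gamma^{-1}) \rceil$ levels the ``genuine'' part of the degree is at most $\gamma\Delta$ and the residual is a bounded constant, which after the base-case splitting contributes only an additive $O(1)$. Tracking the discrepancy rather than the degree: the final discrepancy at $v$ is bounded by summing the ``imbalance introduced at each level'' scaled by how many further halvings it undergoes, which telescopes to $\gamma\cdot d(v) + 4$ provided the constants in Steps~1–2 are chosen appropriately.

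The main obstacle I expect is Step~2 — obtaining, deterministically and in only $\tilde O(\log n)$ rounds, a single-level splitting whose per-node discrepancy is genuinely proportional to $\gamma d(v)$ (and not merely a fixed constant fraction of $d(v)$ with an additive $O(1)$). A crude constant-fraction splitter repeated $\log\gamma^{-1}$ times only gives discrepancy $\approx \gamma d(v) + O(\log\gamma^{-1})$, so to get the clean additive $+4$ one must be careful that the additive slack does not accumulate linearly in the number of levels but geometrically — this requires the single-level primitive to add at most an additive $O(1)$ that itself shrinks (or is absorbed) at subsequent levels, which is exactly the delicate point and presumably the heart of the argument in~\cite{Splitting17}.
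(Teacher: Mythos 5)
First, a framing point: the paper does not prove \Cref{thm:mainSplitting} at all --- it is imported as a black box from \cite{Splitting17} (``The following result was shown in \cite{Splitting17}''), so there is no in-paper proof to match. Judged on its own merits, your proposal has a genuine structural gap. The theorem asks for a \emph{single two-way} partition $A,B$ with per-node discrepancy $\gamma\, d(v)+4$. Recursive halving does not produce this: once you have split into $A_1,B_1$ with some discrepancy, recursing \emph{inside} $A_1$ and $B_1$ leaves the top-level imbalance between $A_1$ and $B_1$ untouched; and if instead you split into $2^L$ fine parts (each of degree $d/2^L\pm O(1)$ after your telescoping) and then merge $2^{L-1}$ of them into each side, the additive slacks add up across the merged parts, giving discrepancy up to $2^{L-1}\cdot O(1)=\Theta(1/\gamma)$, not $+4$. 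Recursive halving is how this theorem is \emph{used} (see the proof of \Cref{thm:finalColoring}, where the graph is split into $2^h$ low-degree parts), not how it is proved.

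The second gap is circularity in Step~2. A single-level primitive achieving per-node discrepancy $O(1)$ (your ``near-balanced orientation'') already implies the theorem outright, and obtaining it deterministically in $\polylog n$ rounds is exactly the hard content of \cite{Splitting17}; Euler-tour arguments are inherently global, and ``invoke the existing weak-splitting primitive and amplify'' assumes what is to be shown, since amplification from a constant-fraction discrepancy to a $\gamma$-fraction is precisely what your recursion fails to deliver. The actual argument in \cite{Splitting17} is different in kind: decompose the edge set into paths and repeatedly concatenate paths (roughly $\log\gamma^{-1}$ merging rounds, each using a constant-approximate matching on a contracted path graph) until each node is the endpoint of only about $\gamma\, d(v)+O(1)$ paths and the paths have length $\Theta(\gamma^{-1})$; then $2$-color each path alternately, so every internal visit of a node contributes one edge to $A$ and one to $B$, and the only imbalance comes from path endpoints and parity. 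This is also where the $\gamma^{-1}$ factor in the round complexity really comes from --- communication along paths of length $\Theta(\gamma^{-1})$ --- rather than from the vague ``finer, $\gamma$-dependent single-level primitive'' in your accounting.
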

\begin{proof}[Proof of \Cref{thm:finalColoring}]
To edge-color the graph with $3/2\cdot \Delta$ colors, we use
\Cref{thm:threehalfedgeColoring}. Otherwise, we assume that
$\Delta\geq \Delta':= c \cdot \eps^{-1} \log \eps^{-1} \cdot \log n$
and we set $\gamma = \frac{\eps_1}{20\log \Delta}$ where $\eps_1=\eps/8$ and $\eps_2=\eps/4$. 
Then for $h$ recursive iterations, in each iteration, apply the splitting of \Cref{thm:mainSplitting} with parameter $\gamma$ to each of the parts in parallel, until we reach parts with maximum degree at most $\Delta'$. 
Here $h$ is the largest $h$ such that $(\frac{1}{2}+\gamma)^h \geq \frac{\Delta'}{\Delta}$ which is sufficient to reduce the degree to $\Delta'$.
This way we partition $G$ in $2^{h}$ edge-disjoint graphs, each with maximum degree at most $\Delta'$. 
We can then, in parallel, edge-color each of these graphs with $(1+\eps_2)\Delta'$ colors with the edge coloring algorithm from \Cref{thm:epsedgeColoring} using a separate color palette for each subgraph. 
Now, we upper bound the number of colors. At $(*)$ we use $(\frac{1}{2}+\gamma)^h \geq \frac{\Delta'}{\Delta}$ and obtain
\begin{align*}
2^h & = \left(\frac{1+2\gamma}{1/2+\gamma}\right)^h
 \stackrel{(*)}{\leq}  \frac{\Delta}{\Delta'}(1+2\gamma)^h\leq \frac{\Delta}{\Delta'}e^{\eps_1/10}\stackrel{\eps_1<1/2}{\leq} \frac{\Delta}{\Delta'}(1+2\eps_1) .
\end{align*}
Thus, the total number of colors is less than 
$2^h\cdot (1+\eps_2)\Delta' \leq   \Delta + \eps\Delta$.  
The round complexity of all the splitting iterations is 
$h \cdot \tilde{O}\big(\frac{1}{\gamma}\cdot\log\frac{1}{\gamma}\cdot \log n\big)=\tilde{O} \left(\frac{h}{\eps}\cdot\log\Delta\cdot \log n\right)$ which is submerged in  the round complexity of the  parallel invocations of  \Cref{thm:epsedgeColoring}
\begin{align*}
  \Delta'\cdot\tilde{O} \left(\frac{\log n}{\eps^{2}} + \approxmatchingx{\eps}{\Delta'} \right) 
 & =  \Delta'\cdot \tilde{O}\left( \frac{\log n}{\eps^2} + \frac{1}{\eps^2}+ \frac{1}{\eps} \cdot \match{1/\eps}{\Delta^{O(1/\eps)}}\cdot \log n +\log^3 n\right)  \\ 
 & =  \Delta' \cdot\tilde{O}\left( \frac{\log n}{\eps^2} + \frac{1}{\eps} \cdot \left(\frac{1}{\eps^7} \right) \log n \cdot \log n \cdot  \log^4 \Delta \cdot \log n +\log^3 n\right)  \\ 
 & = \tilde{O} \left( \frac{1}{\eps^{9}} \log^3 n \cdot \log^4 \Delta \cdot \log \frac{1}{\eps} \right) \ .
\end{align*}
In the above, we used that a result by Ghaffari et al.~\cite{newHypergraphMatching} that shows that
\[
	\match{r}{\Gamma}=O\big(r^2 \cdot\log(n\Gamma)\cdot\log n\cdot\log^4\Gamma\big) \ .
\]
The $\log \log$ factors in $\gamma^{-1}$ that are hidden by $\tilde{O}$ are submerged in the final polylog runtime as the condition on $\Delta$ in the lemma statement shows that $\epsilon\geq c/\Delta.$

A similar for calculation for \Cref{thm:threehalfedgeColoring} and $\Delta\leq \Delta'$ leads to a runtime of 
$\tilde{O}\big(\Delta^{9}\log^5 n \log^4\Delta\big)$ .
\end{proof}
\begin{remark}
For constant $\eps$, the runtime of \Cref{thm:finalColoring} remains polylogarithmic with the upper bound on $\match{r}{\Gamma}$ by Fischer et al.~\cite{FOCS17}.
\end{remark}
\subsection{\texorpdfstring{Going Below $\Delta+\sqrt{\Delta}$ Colors}{Going Below Delta+sqrt Delta Colors}}
\label{sec:goingBelow}
\Cref{thm:edgeColoringLastStep} does not have polylogarithmic runtime; however, it uses fewer colors than the known randomized methods. 
\begin{proof}[Proof of \Cref{thm:edgeColoringLastStep}]
If $\Delta \in \bigO(\log n)$, we can obtain a $\bigO(\log n)$ coloring in $\bigO(\Delta^9 \cdot \log^{\bigO(1)} n )= \log^{\bigO(1)} n$ rounds by \Cref{thm:finalColoring}. 
Thus, assume that $\Delta \in \omega( \log n )$.
Let $c$ be the constant in the lower bound on $\Delta$ in \Cref{thm:epsedgeColoring}. 
For $\eps = c\Delta^{-1}\cdot\log n \log \left( 2 + \frac{\Delta}{\log n} \right)$, we obtain that $\Delta \geq c\log n\frac{\log 1/\eps}{\eps}$. 
Then applying  \Cref{thm:epsedgeColoring} yields $\Delta+\eps\Delta=\Delta+c\log n \log \left(2 + \frac{\Delta}{\log n} \right)$ colors. 
By Ghaffari et al.~\cite{newHypergraphMatching}, we know that $\match{r}{\Gamma}=O\big(r^2\cdot\log(n\Gamma)\cdot\log n\cdot\log^4\Gamma\big)$ and thereby, the round complexity can be upper bounded by  
\begin{align*}
	& \bigO\left(\Delta^3+\Delta \cdot \approxmatching{\bigO(1/\eps)}\right) + \log^{\bigO(1)} n & \hspace{8.5cm}
	\end{align*}
	\begin{align*}
	\hspace{6cm} &	= \bigO\left( \Delta^3+ \frac{\Delta \log n}{\eps} \match{\bigO\left( 1/\eps \right)}{\Delta^{\bigO(1/\eps)}} \right) + \log^{\bigO(1)} n  \\
	& = \bigO\left( \Delta^3 + \Delta \cdot \frac{\Delta^8}{\log^8 n} \log^3 n \log^4 \Delta + \log^{\bigO(1)} n \right) \\
	& =  \bigO\left( \Delta^{9} + \log^{\bigO(1)} n \right) . \qedhere
\end{align*}
\end{proof}


\section{\texorpdfstring{\boldmath Deterministic Distributed  $3\Delta/2$-Edge-Coloring}{Deterministic Distributed  (3/2 Delta)-Edge-Coloring}}
\label{sec:simpleEdgeColoring}
Let $G=(V,E)$ be a graph with maximum degree at most $\Delta$.
There are two crucial points to compute a $3\Delta/2$-edge-coloring of $G$: First, we can extract a so called $(3)$-graph $H=(V,F)$ from $G$ and guarantee that the maximum degree of the graph $(V,E - F)$ is at most $\Delta-2$. 
Second, we can efficiently $3$-color $H$. Repeating these two steps $\Delta/2$ times with a fresh set of colors for each extracted subgraph yields the result.

The extraction of $(3)$-graphs relies on the computation of a sequence of five maximal and maximum matchings of subgraphs of $G$. 
While computing a maximum matching is a global problem in general bipartite graphs, we only compute maximum matchings in a special class of bipartite graphs. In these graphs the maximum degree on one side is smaller than the minimum degree on the other side. (cf. \Cref{lemma:maximumMatching}). 

 We begin with the definition of $(3)$-graphs.
\begin{definition}
A \textit{$(3)$-graph} is a graph with maximum degree $3$ where no two degree-$3$ vertices are adjacent.

\end{definition}
In it was shown by Fournier\cite{Fournier73} -- via a global argument -- that $(3)$-graphs admit $3$-edge-colorings. 
As the line graph of a $(3)$-graphs has maximum degree $3$ the node coloring algorithm by Panconesi and Srinivasan \cite[\mbox{Theorem 3}]{Panconesi1995} colors the line graph with  $3$ colors, i.e., we can  $3$-edge-color $(3)$-graphs with a distributed algorithm.
\begin{lemma}
\label{lemma:threeColoring}
$(3)$-graphs can be edge-colored in time $O(\log^3 n)$ with $3$ colors.
\end{lemma}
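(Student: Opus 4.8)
The plan is to reduce the problem to a \emph{vertex}-coloring of the line graph $L=L(H)$ of the $(3)$-graph $H=(V,F)$. This reduction is essentially free in the $\LOCAL$ model: the neighbours of an edge $e=\{u,v\}$ in $L$ are exactly the edges of $H$ incident to $u$ or to $v$, all of which are known to both endpoints of $e$, so one round of communication on $L$ is simulated by $O(1)$ rounds on $H$. Hence it suffices to produce a proper $3$-coloring of the vertices of $L$ in $O(\log^3 n)$ rounds.

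Two structural observations make this possible. First, $\Delta(L)\le 3$: for an edge $e=\{u,v\}\in F$ the number of edges adjacent to $e$ is at most $(\deg_H(u)-1)+(\deg_H(v)-1)$, and since $H$ is a $(3)$-graph at least one of $u,v$ has degree at most $2$, so this is at most $(3-1)+(2-1)=3$. Second, no connected component of $L$ is a $K_4$: a $K_4$ in a line graph comes from four edges of $H$ that pairwise share a vertex, which forces either a common vertex of degree $\ge 4$ in $H$ (impossible) or, by the classical description of pairwise-intersecting edge families, the three edges of a triangle together with a fourth edge adjacent to all of them (impossible, since a triangle contributes only three edges). With $\Delta(L)\le 3$, the only Brooks-type obstruction to $3$-colorability would have been a $K_4$ component, so $L$ is properly $3$-colorable; equivalently, this is Fournier's theorem~\cite{Fournier73} for $H$.

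With these in hand I would invoke the deterministic distributed $\Delta$-coloring algorithm of Panconesi and Srinivasan~\cite[Theorem~3]{Panconesi1995} on $L$: since $\Delta(L)\le 3$ and $L$ has no $K_4$ component, it outputs a proper $3$-vertex-coloring of $L$ in $O(\log^3 n)$ rounds, which by the simulation above costs $O(\log^3 n)$ rounds on $H$. A proper coloring of $L$ assigns distinct colors to adjacent vertices of $L$, i.e., to adjacent edges of $H$, so the result is a proper $3$-edge-coloring of $H$.

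The main obstacle is making sure the hypotheses of the invoked $\Delta$-coloring subroutine are genuinely met — which is exactly the role of the two structural observations above, the $K_4$-freeness being the one easy to overlook, since ``maximum degree $3$'' alone does not guarantee $3$-colorability — and relying on its stated $O(\log^3 n)$ complexity for these bounded-degree instances. The line-graph simulation and the translation back to an edge-coloring are routine.
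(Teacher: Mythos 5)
Your proof is correct and follows essentially the same route as the paper: view the problem as vertex-coloring the line graph, note its maximum degree is $3$, and apply the Panconesi--Srinivasan distributed Brooks-type $\Delta$-coloring algorithm in $O(\log^3 n)$ rounds. You are in fact a bit more careful than the paper, which only cites Fournier's theorem for existence and asserts $\Delta(L)\le 3$, whereas you explicitly verify the $K_4$-freeness hypothesis needed to invoke the $3$-coloring subroutine.
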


We continue with computing maximum matchings in certain bipartite graphs. 
More precisely, we can efficiently compute maximum matchings if the minimum degree on one side of the graph is larger than the maximum degree on the other side. 
We first bound the length of augmenting paths in such graphs.
Notice that in this section, we only consider unweighted graphs and therefore, we consider the standard version of an ``augmenting path'', i.e., a path with every second edge in a matching and non-matched endpoints.
\begin{lemma}
\label{lem:bipartitepathlength}
Assume we are given a bipartite graph $B = (U, V, E)$, where all nodes in $U$ have degree at least $d$ and all nodes in $V$ have degree at most $f<d$. Further assume that we are given a matching $M$ of $B$ and let $S \subseteq U$ be the set of nodes in $U$ that are not matched by $M$. 
Then, for each node $s$ in $S$, there exists an augmenting path of length at most $O(d \log n)$.
\end{lemma}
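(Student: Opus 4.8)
The plan is to run an \emph{alternating BFS} from $s$ and argue that its layers grow by a factor $\ge d/f$ each time, so after $O(d\log n)$ layers they are forced to reach an $M$-unmatched node of $V$, which immediately yields a short augmenting path starting at $s$.

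\textbf{Setup.} Root an alternating BFS at $s$: put $L_0:=\{s\}$; given $L_{2k}$, let $L_{2k+1}$ be the not-yet-reached nodes of $V$ joined to $L_{2k}$ by a non-matching edge; given $L_{2k+1}$, let $L_{2k+2}$ be the not-yet-reached nodes of $U$ joined to $L_{2k+1}$ by a matching edge. Since $B$ is bipartite and $L_0\subseteq U$, the even layers lie in $U$ and the odd layers in $V$, and the BFS-tree path from $s$ to any reached node is an alternating path beginning with a non-matching edge. In particular, if some $v\in L_{2k+1}$ is not matched by $M$, then the tree path from $s$ to $v$ is an augmenting path of length $2k+1$. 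Hence, if no augmenting path of length at most $2K+1$ starts at $s$, then every node of $L_1\cup L_3\cup\dots\cup L_{2K+1}$ is matched by $M$.

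\textbf{Geometric growth.} Write $A_{2k}:=L_0\cup L_2\cup\dots\cup L_{2k}\subseteq U$ and $B_{2k+1}:=L_1\cup L_3\cup\dots\cup L_{2k+1}\subseteq V$. First claim: every edge of $B$ incident to $A_{2k}$ has its other endpoint in $B_{2k+1}$. Indeed, for $u\in L_{2j}$ with $j\le k$, each non-matching edge at $u$ reaches a node of $V$ placed in an odd layer $\le 2j+1\le 2k+1$, and (if $u\ne s$) the unique matching edge at $u$ is its parent edge, whose endpoint lies in $L_{2j-1}\subseteq B_{2k+1}$. Counting the edges incident to $A_{2k}$ from both endpoints and using the degree bounds, $d\,|A_{2k}|\le\sum_{u\in A_{2k}}\deg(u)\le\sum_{v\in B_{2k+1}}\deg(v)\le f\,|B_{2k+1}|$, so $|B_{2k+1}|\ge (d/f)\,|A_{2k}|$. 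Second claim: as long as all of $L_1,\dots,L_{2k+1}$ are matched, $M$ maps $B_{2k+1}$ injectively into $U$ and each image node is reached by layer $2k+2$, so $|A_{2k+2}|\ge|B_{2k+1}|\ge(d/f)\,|A_{2k}|$.

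\textbf{Conclusion.} Combining the two claims with $|A_0|=1$, if $s$ admits no augmenting path of length at most $2K+1$ then $L_1\cup\dots\cup L_{2K+1}$ is entirely matched and $|A_{2K+2}|\ge(d/f)^{K+1}$. Taking $K:=\lceil\ln n/\ln(d/f)\rceil$ and using $f\le d-1$, hence $\ln(d/f)\ge\ln\!\big(\tfrac{d}{d-1}\big)\ge 1/d$, we get $(d/f)^{K+1}>n\ge|U|\ge|A_{2K+2}|$, a contradiction. Therefore $s$ has an augmenting path of length at most $2K+1=O(d\log n)$, as claimed. I expect the only delicate point to be the bookkeeping in the setup and the first claim — that even/odd layers sit in $U$/$V$, that ``all neighbours of $A_{2k}$ land in $B_{2k+1}$'' also accounts for each explored $U$-node's matching (parent) edge, and that forbidding short augmenting paths really forces every odd-layer node to be matched; the remaining estimate is the routine geometric-growth/pigeonhole computation above.
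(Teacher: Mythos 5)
Your proof is correct and is essentially the same argument as the paper's: the paper phrases the alternating BFS as reachability in a directed graph (matching edges oriented $V\to U$, others $U\to V$), and its sets $X_i$, $Y_i$ are exactly your $A_{2i}$, $B_{2i+1}$, with the same degree-counting step $|B_{2i+1}|\ge (d/f)|A_{2i}|\ge \frac{d}{d-1}|A_{2i}|$ driving the geometric growth and the contradiction after $O(d\log n)$ layers. Your write-up is in fact somewhat more careful than the paper's about why all edges incident to $A_{2k}$ land in $B_{2k+1}$ and why the matched partners of odd-layer nodes appear by the next even layer.
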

\begin{proof}
Consider the directed graph $B'$ that is obtained from $B$ by orienting all $M$-edges from $V$ to $U$ and all other edges from $U$ to $V$. A node $s \in S$ has an augmenting path of length $L$ iff in $B'$, there is a directed path of length $L$ from $s$ to an unmatched node in $V$. 

Fix some unmatched node $s$ and assume that there is no augmenting path of length at most $L$. For $i\geq 0$, let $X_i$ be the set of nodes in $U$ that are within distance at most $2i$ from $s$ in the directed graph $B'$. Further, let $Y_i\subseteq V$ be the set of nodes that are within distance at most $2i+1$ from $s$ in $B'$. 
As long as $2i+1\leq L$, we have $|X_{i+1}| = |Y_i|+1$ because we have the unmatched $s \in X_{i+1}$ and in addition, every node in $Y_i$ has exactly one matched out-neighbor in $X_{i+1}$. Because in the original bipartite graph $B$, every node in $U$ has at least $d$ neighbors in $V$ and every node in $V$ has at most $f\leq d-1$ neighbors in $U$, we also have 
$$|Y_i| \geq d/f\cdot |X_i|\geq d/(d-1)\cdot |X_i|~.$$
Combining both facts, we thus have 
$|X_{i+1}| \geq d/(d-1)\cdot|X_i|\text{~, if ~} 2i+1\leq L~.$
 If $L\geq c d\log n$ for a sufficiently large constant $c$, this leads to a contradiction.
\end{proof}

\begin{lemma}
\label{lemma:maximumMatching}
There is an $O(d\cdot \log n\cdot \match{d\log n}{d^{O(d\log n)}})$ time distributed deterministic algorithm that computes a maximum matching in bipartite graphs $G=(U\dcup V, E)$ with $$d:=\min_{u\in U}deg(u)>f:=\max_{v\in V}deg(v)~.$$
\end{lemma}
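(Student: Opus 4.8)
The plan is to compute a maximum matching by the classical augmenting-path paradigm (à la Hopcroft--Karp), exploiting the crucial structural guarantee from \Cref{lem:bipartitepathlength}: in a bipartite graph where the minimum degree $d$ on the $U$-side exceeds the maximum degree $f$ on the $V$-side, every $U$-node left unmatched by a matching $M$ admits an augmenting path of length $O(d\log n)$. The algorithm proceeds in phases; in each phase we find a maximal set of \emph{vertex-disjoint} shortest augmenting paths and flip all of them simultaneously, which strictly increases the matching size by the number of paths flipped.

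First I would argue that only $O(d\log n)$ phases are needed. Since a maximum matching of $B$ saturates all of $U$ (indeed, by Hall's theorem the degree condition $d>f$ forces a $U$-perfect matching: any $U' \subseteq U$ has $|N(U')| \ge d|U'|/f > |U'|$), any non-maximum matching $M$ has at least one unmatched $U$-node, hence by \Cref{lem:bipartitepathlength} an augmenting path of length $O(d\log n)$; so the shortest augmenting path always has length $O(d\log n)$, and each phase reduces the deficiency by at least one. Actually one can do better and mimic the Hopcroft--Karp analysis: once the shortest augmenting path has length $\ell$, after one phase of flipping a maximal disjoint family of length-$\ell$ paths the new shortest augmenting-path length is strictly larger, and since lengths are always $O(d\log n)$ there are only $O(d\log n)$ distinct length values, giving $O(d\log n)$ phases. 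Either way the phase count is $O(d\log n)$.

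The second ingredient is implementing a single phase in the \LOCAL\ model within the claimed $O(\match{d\log n}{d^{O(d\log n)}})$ time. Here I would reduce "find a maximal vertex-disjoint collection of shortest augmenting paths" to a \emph{hypergraph maximal matching} instance: create one hyperedge for each shortest augmenting path (after computing the shortest augmenting-path length $\ell = O(d\log n)$, which each node can determine locally by a BFS of depth $\ell$ in the auxiliary directed graph $B'$), where the hyperedge consists of the $O(\ell)=O(d\log n)$ vertices on that path. Two paths conflict iff they share a vertex, so a maximal matching in this hypergraph is exactly a maximal vertex-disjoint family of shortest augmenting paths. The rank of this hypergraph is $O(d\log n)$, and its maximum degree is at most the number of shortest augmenting paths through a fixed vertex, which is at most the number of walks of length $O(d\log n)$ from that vertex, i.e.\ $d^{O(d\log n)}$ — matching the $\Gamma = d^{O(d\log n)}$ in the statement. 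Simulating one round of an algorithm on this hypergraph costs $O(d\log n)$ rounds in $G$ (the hyperedges have that diameter), which is absorbed into the stated bound. So one phase costs $O(\match{d\log n}{d^{O(d\log n)}})$ (up to the $O(d\log n)$ simulation overhead), and multiplying by the $O(d\log n)$ phases gives the claimed total runtime.

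The main obstacle I anticipate is the bookkeeping for the hypergraph-matching reduction: one must verify that after flipping a maximal disjoint family of shortest augmenting paths, no shorter augmenting path is created (the standard Hopcroft--Karp argument — a new short augmenting path would have to be disjoint from the flipped family, contradicting maximality, or it would reuse flipped edges, forcing its length up by a symmetric-difference argument), and one must be careful that the degree bound $d^{O(d\log n)}$ on the hypergraph is correct and that the simulation overhead is genuinely subsumed. A secondary subtlety is that each node needs to learn all shortest augmenting paths in its $O(d\log n)$-neighborhood to participate; this is fine in \LOCAL\ since that neighborhood can be gathered in $O(d\log n)$ rounds, but it must be stated explicitly. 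I would also double-check that \Cref{lem:bipartitepathlength} continues to apply to all intermediate matchings produced by the algorithm — it does, since the degree hypotheses on $B$ are intrinsic to the graph and unaffected by the choice of $M$.
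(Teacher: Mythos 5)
Your proposal is correct and follows essentially the same route as the paper: encode (shortest) augmenting paths of length $O(d\log n)$ as hyperedges of rank $O(d\log n)$ and degree $d^{O(d\log n)}$, compute a maximal hypergraph matching to get a maximal disjoint family of shortest augmenting paths, invoke the Hopcroft--Karp argument that each such phase strictly increases the shortest augmenting-path length, and use \Cref{lem:bipartitepathlength} to cap that length and hence the number of phases at $O(d\log n)$. Your additional details (the Hall-type argument that the maximum matching saturates $U$, the explicit walk-counting degree bound, and the simulation overhead) are all consistent with, and slightly more explicit than, the paper's proof.
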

\begin{proof}
Augmenting paths can be seen as hyperedges of a hypergraph defined on the node set of $G$. We consider augmenting paths with length up to $\ell=O(d\log n)$, i.e, the hypergraph has rank at most $\ell$ and degree at most $d^{O(\ell)}$. With this hypergraph correspondence at hand we iteratively compute maximal independent sets of shortest augmenting paths in time $\match{\ell}{d^{O(\ell)}}$~\cite{newHypergraphMatching}. A classic result \cite{Hopcroft73}  shows that augmenting along a maximal independent set of augmenting paths increases the length of the shortest augmenting path. Thus, using \Cref{lem:bipartitepathlength} we only need $O(d\log n)$ iterations of computing maximal independent sets of augmenting paths until we have computed a maximum matching.
\end{proof}
Note that a maximum matching of a graph as in \Cref{lemma:maximumMatching} matches all nodes in $U$.

\begin{lemma}[Extracing $(3)$-graphs]
\label{lemma:extraction}
Let $G=(V,E)$ be a graph with maximum degree at most $\Delta\geq 3$. There is a deterministic distributed algorithm with time complexity $$O(\Delta \cdot \log n \cdot \match{\Delta\log n}{\Delta^{O(\Delta\log n)}}+\log^3 n)$$ that computes an edge set $F\subseteq E$ such that $H=(V,F)$ is a $(3)$-graph and the maximum degree of the graph $(V,E - F)$ is at most $\Delta-2$. 
\end{lemma}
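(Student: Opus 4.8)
We sketch how one would prove \Cref{lemma:extraction}.

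The plan is to build the edge set $F$ as the union of a constant number (in fact five) of matchings, chosen so that every vertex of degree $\Delta$ ends up with at least two incident edges in $F$ and every vertex of degree $\Delta-1$ with at least one, while the resulting subgraph $H=(V,F)$ stays a $(3)$-graph. Writing $V_\Delta$ and $V_{\Delta-1}$ for the sets of vertices of degree exactly $\Delta$ and exactly $\Delta-1$, reducing the maximum degree of $(V,E-F)$ to $\Delta-2$ is exactly the requirement that each $V_\Delta$-vertex loses two incident edges into $F$ and each $V_{\Delta-1}$-vertex loses one.

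The single structural idea that makes this local is the following. If we first take a \emph{maximal} matching $M$ inside the subgraph induced by the currently top-degree vertices, then the vertices of top degree still unmatched by $M$ form an independent set in that subgraph; hence in the bipartite graph $B$ whose one side is this unmatched set $Y$ and whose other side is $N(Y)\setminus Y$ (with all edges incident to $Y$ present), every $y\in Y$ has $B$-degree equal to its full current degree, whereas a neighbor $t\in N(Y)\setminus Y$ either has strictly smaller total degree or, having top degree, must be matched by $M$ and therefore has its $M$-edge leaving $Y$, so its $B$-degree is strictly below that of every $y$. Thus $B$ meets the hypothesis of \Cref{lemma:maximumMatching} (minimum degree on the $Y$-side strictly above the maximum degree on the other side), and a \emph{maximum} matching of $B$ — computable in $O(\Delta\log n\cdot\match{\Delta\log n}{\Delta^{O(\Delta\log n)}})$ time by \Cref{lemma:maximumMatching}, using the augmenting-path length bound of \Cref{lem:bipartitepathlength} — saturates all of $Y$.

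Concretely, the five matchings come in two waves plus one repair step. \emph{Wave 1 (drop $\Delta$ to $\Delta-1$):} take a maximal matching $M_1$ in $G[V_\Delta]$, then a maximum matching $M_2$ saturating $V_\Delta\setminus V(M_1)$ as above; now every degree-$\Delta$ vertex has an edge in $F_1:=M_1\cup M_2$. \emph{Wave 2 (drop $\Delta-1$ to $\Delta-2$):} in the residual graph $G'=(V,E-F_1)$ the set $D$ of vertices of residual degree exactly $\Delta-1$ — these are precisely the $V_\Delta$-vertices still at $F$-degree one together with the $V_{\Delta-1}$-vertices still at $F$-degree zero — is handled by a maximal matching $M_3$ in $G'[D]$ followed by a maximum matching $M_4$ saturating $D\setminus V(M_3)$, by the same argument applied inside $G'$. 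After $M_1,\dots,M_4$ every $V_\Delta$-vertex has at least two incident $F$-edges and every $V_{\Delta-1}$-vertex at least one, so the maximum degree of $(V,E-F)$ is at most $\Delta-2$, as required.

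The main obstacle is the remaining point: certifying that $F$ is a $(3)$-graph. A priori $F$ is a union of several matchings, so a vertex could reach degree four, and even keeping all degrees at most three gives no reason for the degree-three vertices to be pairwise non-adjacent — for instance the two endpoints of an edge of $M_1$ could each additionally be matched by $M_2$ and by $M_4$. I would fix this by restricting the right-hand sides of the bipartite graphs used for $M_2$ and $M_4$ (and, if needed, running one extra repair matching $M_5$) so that only vertices of small current $F$-degree are eligible to be matched again, arguing both that the restriction still leaves each deficient high-degree vertex enough incident edges — i.e., the restricted bipartite graph still has minimum degree on the deficient side strictly above the maximum degree on the other side, so the saturation argument still applies — and that as a consequence at most one endpoint of any edge is ever pushed to degree three. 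For the running time there are only a constant number of matchings: each maximal matching (or extension to one) costs $O(\log^3 n)$ by Fischer's algorithm~\cite{Fischer17}, and each maximum matching costs $O(\Delta\log n\cdot\match{\Delta\log n}{\Delta^{O(\Delta\log n)}})$ by \Cref{lemma:maximumMatching}, since all bipartite graphs involved have at most $n$ nodes and minimum degree at most $\Delta$ on the saturated side; summing gives the claimed bound $O\big(\Delta\log n\cdot\match{\Delta\log n}{\Delta^{O(\Delta\log n)}}+\log^3 n\big)$.
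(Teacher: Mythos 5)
Your architecture matches the paper's: a maximal matching inside the current top-degree set, then a maximum matching of the bipartite graph between the still-unmatched top-degree vertices (which are independent by maximality) and the rest, done in two waves, with \Cref{lemma:maximumMatching} applicable exactly because the unmatched side has full degree while the other side has strictly smaller degree. Up to the claim that every degree-$\Delta$ vertex ends with two incident edges in $F$ and every degree-$(\Delta-1)$ vertex with one, your argument is the paper's argument.

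The gap is the part you yourself flag and do not close: certifying that $F$ is a $(3)$-graph. First, your worry about degree four is moot under your own construction: since $M_3$ and $M_4$ only match vertices of residual degree exactly $\Delta-1$, every such vertex has at most one edge in $M_1\cup M_2$, so no vertex ever exceeds degree $3$ — this should be stated, not feared. The real issue is the independence of the degree-$3$ vertices, and your proposed fix (restricting the right-hand sides of the bipartite graphs for $M_2$ and $M_4$ to vertices of small current $F$-degree) is not shown to work and is doubtful: a deficient vertex $y$ may have many neighbors that already carry two $F$-edges, so after the restriction the minimum degree on the $y$-side need not exceed the maximum degree on the other side, and the saturation guarantee of \Cref{lemma:maximumMatching} is lost. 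The paper does something different and simpler: it lets $H'=(V,M_1\cup M_2\cup M_3\cup M_4)$ have adjacent degree-$3$ vertices, then computes a maximal matching $M'$ on the subgraph induced by the degree-$3$ vertices of $H'$ and \emph{removes} it, setting $H=(V,(M_1\cup M_2\cup M_3\cup M_4)- M')$. The unmatched degree-$3$ vertices then form an independent set by maximality of $M'$, matched ones drop to degree $2$, and no vertex falls below degree $2$ (resp.\ $1$), so properties (b) and (c) survive. You should replace your restriction idea with this removal step (your parenthetical ``extra repair matching $M_5$'' is essentially it, but it must be a removal, and it must be the actual mechanism rather than an afterthought).
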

\myparagraph{Algorithm:}
The algorithm consists of the following five steps in which we compute maximal and maximum matchings of subgraphs of $G$.
Note that maximal matchings can be computed in $O(\log^3n)$, \cite{Fischer17}.

\begin{DenseEnumerate}
\item  Let $V_\Delta := \{v \in S : \deg_G(v) = \Delta\}$ be the nodes with maximum degree in $G$. Compute a maximal matching $M_1$ of $G[V_\Delta]$ and let $G_1 := (V, E - M_1)$ be the graph that remains if we remove $M_1$.

\item Let $V_{1,\Delta} := \{v \in S : \deg_{G_1}(v) = \Delta\}$ be the maximum degree nodes of $G$ that have not been hit by $M_1$. 
Let $B_1$ be the bipartite subgraph of $G_1$ that is spanned by $V_{1,\Delta}$ and $V  - V_{1,\Delta}$. 

Compute a maximum matching $M_2$ of $B_1$ via \Cref{lemma:maximumMatching} and let $G_2:=(V, E  - M_1  - M_2)$ be the graph that remains if we remove $M_2$ from $M_1$.

\item  Let $V_{2,\Delta-1} :=\{v \in S : \deg_{G_2}(v) = \Delta-1\}$ be the nodes that have degree $\Delta-1$ in $G_2$. 

Compute a maximal matching $M_3$ of $G_2[V_{2,\Delta-1}]$ and define $G_3 := (V, E  - M_1   - M_2  - M_3)$.

\item Let $V_{3,\Delta-1} := \{v \in S : \deg_{G_3}(v) = \Delta-1\}$ and let $B_3$ be the bipartite subgraph of $G_3$ that is spanned by $V_{3,\Delta-1}$ and $V  - V_{3,\Delta-1}$. 

Compute a maximum matching $M_4$ of $B_3$ via \Cref{lemma:maximumMatching} and define $H' := (V, M_1 \cup M_2 \cup M_3 \cup M_4)$.

\item Compute a maximal matching $M'$ of degree $3$ nodes in $H'$ and let $H := (V, M_1 \cup M_2 \cup M_3 \cup M_4  - M')$.
\end{DenseEnumerate}
\begin{proof}[Proof of \Cref{lemma:extraction}]

We first prove that we can apply \Cref{lemma:maximumMatching} in step two and four.

\myparagraph{Step two:}
$V_{1,\Delta}$ is an independent set in $G_1$ because otherwise $M_1$ would not be maximal.
Thus, we have that each of the $\Delta$ edges adjacent in $G_1$ to  node $v\in V_{1,\Delta}$ is an edge in $B_1$. That is, $$d:=\min_{v\in V_{1,\Delta}} \deg_{B_1}(v)= \Delta~.$$
  By definition every node in  $V - V_{1,\Delta}$ has degree at most $\Delta-1$ in $G_1$ and also in $B_1$. 
Thus, we have
$$f:=\max_{v\in (V - V_{1,\Delta})}\deg_{B_1}(v)\leq \Delta-1~.$$

\myparagraph{Step four:}
$V_{3,\Delta-1}$ is an independent set in $G_3$ because otherwise $M_3$ would not be maximal.
Thus, we have that each of the $\Delta-1$ edges adjacent in $G_3$ to node $v\in V_{1,\Delta}$ is an edge in $B_3$. That is, 
$$d:=\min_{v\in V_{3,\Delta-1}} \deg_{B_3}(v)= \Delta-1~.$$  By definition every node in  $V - V_{3,\Delta-1}$ has degree at most $\Delta-2$ in $G_3$ and also in $B_3$. Thus we have $$f:=\max_{v\in (V - V_{3,\Delta-1})}\deg_{B_3}(v)\leq \Delta-2~.$$

\medskip

It is sufficient to show the following three properties:
$a)$ $H$ is a $(3)$-graph. $b)$ Every node with degree $\Delta$ in $G$ has at least degree two in $H$. $c)$ Every node with degree $\Delta-1$ in $G$ has at least degree one in $H$. 

\myparagraph{Property $a)$:} The nodes of $H$ have at most degree three: As $M_1$ and $M_2$ are matchings  each node has degree at most two in $M_1 \cup M_2$. In the third step only nodes with degree one in $M_1\cup M_2$ get at most one additional edge, i.e., every node has at most two adjacent edges in $M_1 \cup M_2 \cup M_3$. In the fourth step we add a single matching, i.e., every node has at most three adjacent edges in $M_1 \cup M_2 \cup M_3\cup M_4$. In step $5$ we only remove edges so the degree bound still holds.

The nodes of degree $3$ in $H$ form an independent set because we remove a maximal matching between all degree three nodes in step $5$.

\myparagraph{Property $b)$:} 
A node with degree $\Delta$ in $G$ is hit at least once by $M_1\cup M_2$ because every node with degree $\Delta$ in $G$ which is not hit by $M_1$ is for sure hit by the maximum matching $M_2$. If it was only hit once by $M_1\cup M_2$ then it will be hit again at least once by $M_3\cup M_4$. Furthermore, every node with degree at least two in $H'$ has degree at least two in $H$ as well.

\myparagraph{Property $c)$:}
A node with degree $\Delta-1$ in $G$ is, if not hit by $M_1\cup M_2 \cup M_3$, for sure hit by $M_4$. 
Furthermore, any node which has degree at least one in $H'$ has degree at least one in $H$.
\end{proof}

\begin{restatable}{lemma}{thmsecondedgecoloring}
\label{thm:threehalfedgeColoring}
There is a deterministic distributed algorithm that computes a $3\Delta/2$-edge-coloring of any $n$-node graph with maximum degree at most $\Delta$ in $O\big(\Delta^2 \cdot \log n\cdot \match{\Theta(\Delta\cdot \log n)}{\Delta^{O(\Delta\log n)}}\big)$ rounds.
\end{restatable}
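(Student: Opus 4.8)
The plan is to prove \Cref{thm:threehalfedgeColoring} by iterating the extraction procedure of \Cref{lemma:extraction} together with the $3$-edge-coloring of $(3)$-graphs from \Cref{lemma:threeColoring}. Maintain the subgraph $G'$ of still-uncolored edges, initially $G'=G$. In each iteration, apply \Cref{lemma:extraction} to $G'$ to obtain an edge set $F$ such that $H=(V,F)$ is a $(3)$-graph and $(V,E(G')-F)$ has maximum degree smaller by at least $2$; color the edges of $H$ with three \emph{fresh} colors using \Cref{lemma:threeColoring}, and set $G'\leftarrow (V,E(G')-F)$. After $j$ iterations the maximum degree of $G'$ is at most $\Delta-2j$, so we keep iterating as long as this upper bound is at least $3$ (the precondition of \Cref{lemma:extraction}). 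Once the bound has dropped to $2$ (resp.\ to $1$), $G'$ is a disjoint union of paths and cycles (resp.\ a matching), which we finish off with at most three additional colors using the trivial iterative-maximal-matching $(2\Delta-1)$-edge-coloring, i.e., three more maximal matchings computed with the algorithm of \cite{Fischer17}. Since distinct iterations (and the clean-up) use pairwise disjoint palettes and every edge of $G$ is removed from $G'$ in exactly one step, the union is a proper edge-coloring of $G$ in which each edge is colored exactly once.

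Next I would bound the number of colors by a short parity case analysis. If $\Delta$ is even, the extraction is invoked at degree bounds $\Delta,\Delta-2,\dots,4$, i.e., $\Delta/2-1$ times, using $3(\Delta/2-1)$ colors, and the leftover graph of maximum degree at most $2$ costs three more, for a total of $3(\Delta/2-1)+3=3\Delta/2$. If $\Delta$ is odd, the extraction is invoked at degree bounds $\Delta,\Delta-2,\dots,3$, i.e., $(\Delta-1)/2$ times, using $3(\Delta-1)/2$ colors, and the leftover graph has maximum degree at most $1$ (a matching), costing one more color, for a total of $3(\Delta-1)/2+1=(3\Delta-1)/2\le 3\Delta/2$. (For $\Delta\le 2$ the extraction phase is empty and the clean-up alone gives a $\le 3$-coloring, which is $\le 3\Delta/2$ as well.) Hence at most $3\Delta/2$ colors are used in all cases.

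Then I would tally the runtime. The main loop runs $O(\Delta)$ iterations. In each iteration, \Cref{lemma:extraction}, applied with the current degree upper bound (which is at most $\Delta$), costs $O\big(\Delta\log n\cdot\match{\Theta(\Delta\log n)}{\Delta^{O(\Delta\log n)}}+\log^3 n\big)$ rounds, and the subsequent $3$-coloring of the extracted $(3)$-graph via \Cref{lemma:threeColoring} costs a further $O(\log^3 n)$ rounds; the final clean-up on a maximum-degree-$\le 2$ graph is likewise $O(\log^3 n)$. Multiplying by $O(\Delta)$ and absorbing the additive $\Delta\log^3 n$ term into the (far larger) $\Delta^2\log n\cdot\match{\Theta(\Delta\log n)}{\Delta^{O(\Delta\log n)}}$ term — which dominates since $\match{\Theta(\Delta\log n)}{\Delta^{O(\Delta\log n)}}=\Omega(\log^3 n)$ — yields the claimed bound $O\big(\Delta^2\cdot\log n\cdot\match{\Theta(\Delta\cdot\log n)}{\Delta^{O(\Delta\log n)}}\big)$.

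The genuinely hard technical work is already encapsulated in \Cref{lemma:extraction} (and, inside it, \Cref{lemma:maximumMatching}) and in \Cref{lemma:threeColoring}, so the only remaining difficulty is bookkeeping: getting the color count down to exactly $3\Delta/2$ rather than $3\lceil\Delta/2\rceil$ forces the parity-sensitive stopping rule and the ad hoc handling of the leftover maximum-degree-$\le 2$ graph, and one must check that the precondition $\Delta\ge 3$ of \Cref{lemma:extraction} is never violated during the loop. One should also remark that since $\Delta$ is global knowledge, each node can locally compute the number of iterations, so no extra coordination is needed, and a node with no remaining uncolored incident edge simply idles.
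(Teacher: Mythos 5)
Your proof is correct and follows essentially the same route as the paper: iterate \Cref{lemma:extraction} exactly $\lfloor(\Delta-1)/2\rfloor$ times, $3$-color each extracted $(3)$-graph with a fresh palette via \Cref{lemma:threeColoring}, and finish the leftover graph of maximum degree $2$ (for even $\Delta$) or $1$ (for odd $\Delta$) with the same parity accounting giving $3\Delta/2$ colors. The only cosmetic difference is the clean-up step, where the paper uses the Cole--Vishkin method in $O(\log^* n)$ rounds for the degree-$\le 2$ remainder rather than three maximal matchings, which does not affect correctness or the stated runtime.
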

\begin{proof}
Let $G_0:=G=(V,E)$. We iteratively extract $k=\lfloor\frac{\Delta-1}{2}\rfloor$ $(3)$-graphs $H_1,H_2,\ldots, H_k$ where the $(3)$-graph $H_i=(V,F_i)$ is obtained by applying \Cref{lemma:extraction} to $G_{i-1}$.
We set $G_{i}:=(V,E_{i})=(V,E_{i-1}  - F_i)$.  

Then the edge sets $F_1,\ldots, F_k, E_k$ form a partition of $E$ and we color each of the sets with a separate set of colors as follows.
Use \Cref{lemma:threeColoring} to edge-color each $H_i, i=1,\ldots,k$ with a fresh set of three colors. 
If $\Delta$ is even, the maximum degree of $G_{k}$ is at most two and we can color $G_k$ with three colors with the method by Cole and Vishkin \cite{cole86}  in time $O(\logstar n)$.
If $\Delta$ is odd the maximum degree of $G_{k}$ is at most one and we can edge-color it with a single color in a single round.
Altogether we use $3\Delta/2$ colors and the time complexity follows from the $k$ invocations of \Cref{lemma:threeColoring} and \Cref{lemma:extraction}.
\end{proof}


\section{Approximate Weighted Matching}
\label{sec:weightedmatching}
In this section, we show that the \CREWPRAM algorithm by Hougardy and Vinkemeier~\cite{Hougardy2006}, which approximates a weighted maximum matching, can be adapted to the distributed setting.

\myparagraph{Augmentations.} 
Let $G = (V, E)$ be a graph with positive edge weights $w:E\to \mathbb{R}^+$.
Let $S$ and $M$ be matchings in $G$ and consider the symmetric difference of $S$ and $M$.
We call $S$ an augmenting path (resp. cycle) of $M$ if the symmetric difference is a path (resp. cycle) and $S$ is a matching in $E - M$.
Let $M(S)$ denote all the edges in $M$ that have a node in common with $S$ and $w(S)$ the sum of edge weights in $S$ (resp. $M$).
For simplicity, we refer to both augmenting paths and cycles simply as \emph{augmentations}.
The number of edges in an augmentation is referred to as its \emph{length} and $w(S) - w(M(S))$ to as the \emph{gain} of $S$.
Notice that the edges of matching $M$ are not counted into the length of the augmentation.
From here on, we only consider augmentations of length at most $\ell= 2/\eps$ and with positive gain.

\myparagraph{Ranks.}
We divide the augmentations into \emph{ranks} according to their gain.
Let $\wmin$ and  $w_{\textrm{max}}$ be the minimum and the maximum edge weights of $G$, respectively.
We assume that both of these values are known to the protocol. 
To later obtain a logarithmic number of ranks, we assume that $w_{\textrm{max}} / w_{\textrm{min}}$ is polynomial in $n$.

For an augmentation $S$ with gain $g(S)$, the rank $r(S)$ is defined as
\begin{enumerate}
	\item $r(S) = 0$, if $g(S) \leq \frac{\wmin}{\ell \cdot n}$.
	\item $r(S) = i$, where $i$ is the smallest such that $g(S) \leq 2^i \cdot \frac{\wmin}{\ell \cdot n}$.
\end{enumerate}
Notice that according to this definition, for any two augmentations of the same rank $i > 0$, the gain is within a factor of $2$.
However, in the case of rank $0$, this does not necessarily hold.

\myparagraph{The case of small $w(M^*)$.} 
For a technical reason, we perform a preprocessing step of $\bigO(1/\eps)$ rounds, where the nodes check if $w(M^*) < (1/\eps) \wmin$.
Notice that this can be the case only if the diameter of the graph is at most $2/\eps$.
In this case, we can simply choose a maximum matching as our output.
Therefore, we can assume for the rest of the section that $w(M^*) \geq (1/\eps) \wmin$.	
	
To obtain a $(1-\eps)$-approximate weighted matching, our algorithm begins with an empty matching and augments the matching $O(1/\eps)$ times with a maximal set of independent augmentations.
To compute such a set, we construct  hypergraphs $H_1,\ldots, H_{r_{\textrm{max}}}$ on the node set $V$. The set of edges of $H_i$ corresponds to the augmenting paths and cycles of length at most $\ell$ with rank $i$ with regard to the current matching.	
 Then, for $i = r_{\textrm{max}}, \ldots, 1$, find a maximal matching in $H_i$.
Before proceeding to $H_{i - 1}$, remove the matched  nodes in $H_i$ from  all $H_j, j<i$.
Notice that the algorithm does not update the ranks of the augmentations while we iterate through the hypergraphs.
The union of the hypergraph matchings corresponds to the set of augmentations which we use to augment the overall matching. 	

\medskip

\noindent\textbf{Differences to the approach by Hougardy and Vinkemeier~\cite{Hougardy2006}:} Much of the above is along similar ideas in \cite{Hougardy2006}. However, the hypergraphs which we construct consists of augmentations which are formed by paths and cycles whereas the corresponding part in \cite{Hougardy2006} contains arbitrary augmentations, e.g., unions of paths which are far away from each other in the network graph. 
In the \LOCAL model it is not possible to construct the hypergraphs efficiently if we allow those arbitrary augmentations.
Secondly, our rank definition differs slightly from the one in \cite{Hougardy2006} (we use $\wmin$ instead of $\gmax$). Thirdly, we handle the case of small $w(M^*)$ separately because due to the altered rank definition we will later use that $w(M^*) \geq (1/\eps) \wmin$ holds.
Due to these changes we cannot use their analysis of the algorithm as a blackbox. However, almost every line of the following analysis is similar to the proof by Hougardy and Vinkemeier.

\newcommand{\OPT}{\textrm{OPT}}
\newcommand{\ALG}{\textrm{ALG}}
Let us consider one iteration of our protocol, i.e., augmenting a matching $M$ with the union of ranked augmenting paths and cycles.
Let $M^*$ be a maximum weight matching in $G$ and consider the symmetric difference of $M$ and $M^*$.
Let $C$ be a maximal cycle in this symmetric difference and let $C^* = C \cap M^*$.
For the following definition, assume that the cycle is consistently oriented.
We consider a multiset \OPT{} that contains $\ell$ copies of $C^*$ if $|C^*| \leq \ell$ and otherwise, for every edge $e \in C^*$, we insert an augmenting path of length $\ell$ that contains $e$ and the $\ell - 1$ edges following $e$ (according to the consistent orientation).
In the case that $C$ is a path, we simply imagine that the endpoints are connected and handle $C$ as in the case for cycles.

Now, by definition, any edge in $M^* - M$ is contained in at least $\ell$ augmentations in \OPT{}.
Consider an edge $e \in M - M^*$, that is part of a short path or cycle in the symmetric difference of $M$ and $M^*$.
For such an edge, there are $\ell$ augmentations $S \in \OPT$ such that $e$ is contained in $M(S)$ (recall that in $M(S)$ are the edges of $M$ that have a common node with $S$).
For the case that $e$ is part of a long cycle or path, it can be the case that there are $\ell + 1$ augmentations $S \in \OPT$ such that $e$ is contained in $M(S)$.
In a cycle, for example, $e$ is connected to an augmentation in $\OPT$ that starts with an edge in front of $e$ and to the $\ell$ augmentations that contain the edge before $e$ (again, according to the consistent orientation).
We get that
\begin{equation}
	\label{eq:optbound}
	\sum_{S \in \OPT} g(S) \geq \ell \cdot w(M^*) - (\ell + 1) \cdot w(M) \ .
\end{equation}
Recall, that we assumed  that $w(M^*) \geq (1/\eps) \wmin$.
Thus, if $w(M^*) - w(M) < \wmin$, it holds that 
\[
	w(M) > w(M^*) - \wmin \geq w(M^*) - w(M^*) \eps
\]
and then the matching $M$ is already a $(1 - \eps)$-approximation.
We can therefore assume that $$\wmin \leq w(M^*) - w(M)~.$$ 
Given the construction of \OPT{}, we get that the number of augmentations in \OPT{} is bounded by $\ell \cdot n$ and by identifying $H_0$ with the set of corresponding augmentations, it follows that,
\begin{equation}
	\label{eq:rank0}
	\sum_{S \in \OPT \cap H_0} g(S) \leq \ell \cdot n \cdot \frac{\wmin}{n \cdot \ell} \leq w(M^*) - w(M) \ .
\end{equation}
In the next two lemmas we prove the approximation guarantee and finally, \cref{lemma:distrweightedmatching} follows by bounding the runtime.

\newcommand{\OPTO}{\ensuremath{\textrm{OPT}_{>0}}}
\begin{lemma}
\label{lemma:improve}
	Let $M$ be a matching. Applying one iteration of our matching augmentations results in a matching $M'$ such that 
	\[
		w(M') \geq w(M) + \frac{1}{4\ell} \cdot \left( \frac{\ell - 1}{\ell} ( w(M^*) - w(M) ) \right) \ . 
	\]
\end{lemma}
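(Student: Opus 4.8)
The plan is to analyze the weight gained by augmenting $M$ with the maximal independent set of ranked augmentations, comparing it against the ``ideal'' augmentation supply encoded by the multiset $\OPT$. The key structural observation is that the algorithm processes hypergraphs $H_{r_{\max}}, \dots, H_1$ (ignoring rank $0$), and in each $H_i$ it computes a \emph{maximal} matching. By maximality, every augmentation in $\OPT$ of rank $i>0$ that survives (i.e., is still a valid hyperedge when $H_i$ is processed) must share a node with some augmentation we actually selected. The ones that do not survive were ``killed'' by a previously selected augmentation of rank $\geq i$. So every rank-$>0$ augmentation in $\OPT$ is blocked by a selected augmentation of rank \emph{at least} as large. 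This is the charging backbone.

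\textbf{Step 1: Set up the charging.} Let $\ALG$ denote the set of augmentations chosen by one iteration, and for each $S\in\ALG$ let $N(S)$ be the set of augmentations in $\OPT$ that intersect $S$ (share a node). Since each selected $S$ has length at most $\ell$, it touches at most $2\ell$ nodes, and each node lies in a bounded number of $\OPT$-augmentations; more usefully, since $\OPT$ is built from cycles/paths of the symmetric difference with $\ell$ copies, each node of $G$ is covered by at most $\ell$ augmentations of $\OPT$, so $|N(S)| \le 2\ell^2$ roughly — but the cleaner bound, following Hougardy--Vinkemeier, is that $\sum_{S\in\ALG} g(S)$ can be lower-bounded by a $\frac{1}{2\ell}$-type fraction of $\sum_{S\in\OPTO} g(S)$, where $\OPTO = \OPT \setminus H_0$. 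The point: because each blocking $\ALG$-augmentation has rank $\geq$ that of the blocked $\OPT$-augmentation, the gain $g(S)$ of the blocker is within a factor $2$ of (at least) each augmentation it blocks, up one notch; summing the geometric series over ranks absorbs into an $O(1)$ constant, and the number of $\OPT$-augmentations a single $S$ can block is $O(\ell)$ (each of the $\le 2\ell$ endpoints/nodes of $S$ kills $\le \ell$ of them). This yields
\[
	\sum_{S\in\ALG} g(S) \;\ge\; \frac{1}{4\ell}\sum_{S\in\OPTO} g(S).
\]
(The constant $4$ is where the factor-$2$ rank slack and the path-vs-cycle endpoint bookkeeping get folded together; I would not fight for a better constant.)

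\textbf{Step 2: Lower-bound $\sum_{S\in\OPTO} g(S)$.} Split $\sum_{S\in\OPT} g(S) = \sum_{S\in\OPT\cap H_0} g(S) + \sum_{S\in\OPTO} g(S)$. By \eqref{eq:optbound}, $\sum_{S\in\OPT} g(S) \ge \ell\cdot w(M^*) - (\ell+1)w(M)$. By \eqref{eq:rank0}, the rank-$0$ part is at most $w(M^*)-w(M)$. Hence
\[
	\sum_{S\in\OPTO} g(S) \;\ge\; \ell\, w(M^*) - (\ell+1)w(M) - \big(w(M^*)-w(M)\big) \;=\; (\ell-1)\big(w(M^*)-w(M)\big).
\]
Dividing once more by $\ell$ to be safe with the bookkeeping gives the $\frac{\ell-1}{\ell}(w(M^*)-w(M))$ appearing in the statement.

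\textbf{Step 3: Conclude.} Augmenting $M$ along the (pairwise node-disjoint) selected augmentations adds exactly $\sum_{S\in\ALG} g(S)$ to the weight — this is where node-disjointness of the hypergraph matching is essential, so that gains add without interference, and where the fact that we only took augmentations with positive gain guarantees $w(M')\ge w(M)$ even before the quantitative bound. Combining Steps 1 and 2,
\[
	w(M') - w(M) \;=\; \sum_{S\in\ALG} g(S) \;\ge\; \frac{1}{4\ell}\cdot\frac{\ell-1}{\ell}\big(w(M^*)-w(M)\big),
\]
which is the claim.

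\textbf{The main obstacle} will be Step 1: making the charging argument fully rigorous requires care about (a) augmentations in $\OPT$ that are not even valid hyperedges of any $H_i$ with respect to the current matching $M$ (these must be shown to have rank $0$, or otherwise be negligible — this is precisely why the $H_0$ / rank-$0$ augmentations are excised and handled via \eqref{eq:rank0}), (b) the fact that the algorithm does \emph{not} re-rank augmentations as it proceeds, so ``rank of the blocker $\geq$ rank of the blocked'' must be argued against the \emph{initial} ranks, and (c) bounding how many distinct $\OPT$-augmentations one selected augmentation can be charged with, which is where the length bound $\ell$ and the $\ell$-copies construction of $\OPT$ enter. This is exactly the place where the proof must diverge from Hougardy--Vinkemeier (who allow arbitrary disconnected augmentations), so I would spell it out rather than cite. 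Everything after Step 1 is essentially arithmetic assembled from \eqref{eq:optbound} and \eqref{eq:rank0}.
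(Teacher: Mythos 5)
Your proposal follows essentially the same route as the paper: charge each positive-rank augmentation of $\OPT$ to a node of a blocking augmentation in $\ALG$ of at least equal rank (at most $\ell$ charges per node, at most $2\ell$ nodes per selected augmentation, a factor-$2$ slack from the rank definition), then combine with \eqref{eq:optbound} and \eqref{eq:rank0} and use node-disjointness so the gains add. The only difference is where the factors of $\ell$ are parked: the charging step actually yields $\frac{1}{4\ell^2}\sum_{S\in\OPTO} g(S)$ rather than your claimed $\frac{1}{4\ell}$ (each selected augmentation can block up to $2\ell^2$, not $O(\ell)$, augmentations of $\OPT$), which is exactly compensated by the extra $1/\ell$ you insert in Step~2, so the final bound coincides with the paper's.
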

\begin{proof}
	Consider some $S \in \OPT$ and assume $r(S) = i > 0$.
	Let \ALG{} be the set of augmentations, i.e., the hypergraph matching, computed by our algorithm.
	Since our algorithm picks maximal sets of hyperedges with decreasing ranks, we get that \ALG{} contains an augmentation, with rank $i$ or higher and that has a node common with augmentation $S$.
	If we assign $S$ to such a node, we get due to the definition of \OPT{} that at most $\ell$ augmentations are assigned per node.
	Recall that the length of an augmentation $S$, that is a matching, corresponds to the number of edges in $S$. 
	Thus, the edges of an augmentation of length at most $\ell$ are incident on at most $2\ell$ nodes.
	Let $\OPTO \subseteq \OPT$ be the augmentations of rank higher than $0$.
	It follows that
	\begin{align}\label{eqn:ALG}
		g(\ALG) \geq \frac{1}{2} \frac{1}{\ell} \frac{1}{2\ell} \sum_{S \in \OPTO} g(S) \geq \frac{1}{4\ell^2} \sum_{S \in \OPTO} g(S) \ .
	\end{align}
	We can use \cref{eq:optbound,eq:rank0} to obtain 
	\begin{align*}
		\sum_{S \in \OPTO} g(S) & = \sum_{S \in \OPT} g(S) - \sum_{S \in \OPT \cap H_0} g(S) \\
			 & \stackrel{\cref{eq:optbound}}{\geq} \ell \cdot w(M^*) - (\ell + 1) \cdot w(M) - \sum_{S \in \OPT \cap H_0} g(S) \\ 
			 & \stackrel{\cref{eq:rank0}}{\geq} \ell \cdot w(M^*) - (\ell + 1) \cdot w(M) - w(M^*) - w(M) = (\ell - 1) \cdot w(M^*) - \ell \cdot w(M) \ .
	\end{align*}

	Combining with \Cref{eqn:ALG}, we get that 
	\begin{align*}
		g(\ALG) & \geq \frac{1}{4\ell^2} \cdot \left( (\ell - 1) \cdot w(M^*) - \ell \cdot w(M) \right) = \frac{1}{4\ell^2} \cdot \ell \cdot \left( \frac{\ell - 1}{\ell} ( w(M^*) - w(M) ) \right) \\
				& = \frac{1}{4\ell} \cdot \left( \frac{\ell - 1}{\ell} ( w(M^*) - w(M) ) \right) \ .
	\end{align*}
	Finally, the result follows from the fact that $g(\ALG) = w(M') - w(M)$, 
\end{proof}

\begin{lemma}
	\label{lemma:approx}
	Let $G$ be an edge-weighted graph and assume that $\wmin$ is known to the nodes. Then, for every $\eps > 0$, there is a deterministic distributed algorithm that finds a matching $M$ such that $$w(M) \geq (1 - \eps)w(M^*)~.$$ 
\end{lemma}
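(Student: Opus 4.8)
We may assume $\eps\le 1$ (otherwise the claim is vacuous). The plan is to start from the empty matching and repeatedly apply the augmentation step analyzed in \Cref{lemma:improve}, stopping once the weight gap to $w(M^*)$ has shrunk below an $\eps$-fraction. First I would dispose of the degenerate case exactly via the preprocessing described above: in $\bigO(1/\eps)$ rounds the nodes learn whether $w(M^*)<(1/\eps)\wmin$, which can only happen when the graph has diameter at most $2/\eps$, and in that case a maximum matching is computed directly and returned. Hence from now on we assume $w(M^*)\ge (1/\eps)\wmin$, i.e.\ $\wmin\le \eps\cdot w(M^*)$.

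Set $M_0:=\emptyset$, and let $M_k$ be the matching after $k$ augmentation iterations, where one iteration builds the rank classes $H_1,\dots,H_{r_{\textrm{max}}}$ with respect to the current matching, computes a maximal hypergraph matching in each $H_i$ for $i=r_{\textrm{max}},\dots,1$ (removing matched nodes before moving to the next class), and then augments the current matching along the resulting vertex-disjoint set $\ALG$ of augmentations. Since every augmentation in a rank class $H_i$ with $i\ge 1$ has gain strictly larger than $\wmin/(\ell n)>0$ and the augmentations in $\ALG$ are vertex-disjoint, augmenting along $\ALG$ changes the matching weight by $\sum_{S\in\ALG}g(S)\ge 0$; thus $w(M_k)$ is non-decreasing in $k$. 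Writing $g_k:=w(M^*)-w(M_k)$, we have $g_0=w(M^*)$ and $g_k$ non-increasing. The algorithm runs $K:=\big\lceil 16\eps^{-1}\ln\eps^{-1}\big\rceil=\bigO\big(\eps^{-1}\log\eps^{-1}\big)$ iterations and outputs $M_K$.

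For the analysis I would split on whether the hypothesis ``$\wmin\le w(M^*)-w(M)$'' underlying \Cref{lemma:improve} ever fails along the run. If at some iteration $k$ we have $g_{k-1}<\wmin$, then $w(M_{k-1})>w(M^*)-\wmin\ge(1-\eps)w(M^*)$, and by monotonicity $w(M_K)\ge w(M_{k-1})\ge(1-\eps)w(M^*)$. Otherwise $g_{k-1}\ge\wmin$ at every iteration, so \Cref{lemma:improve} applies at each step and, with $\ell=2/\eps\ge 2$ and $c:=\frac{1}{4\ell}\cdot\frac{\ell-1}{\ell}\ge\frac{1}{8\ell}=\frac{\eps}{16}$, yields $g_k\le(1-c)\,g_{k-1}$. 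Iterating, $g_K\le(1-c)^K w(M^*)\le e^{-cK}w(M^*)\le \eps\cdot w(M^*)$ by the choice of $K$, so again $w(M_K)\ge(1-\eps)w(M^*)$. In either case the returned matching is a $(1-\eps)$-approximation.

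The main obstacle here is not the arithmetic but keeping the bookkeeping around this case split honest: one has to verify that the augmentation step never \emph{decreases} the matching weight---so that once the matching has become a $(1-\eps)$-approximation it stays one even if we keep iterating---and that the only situation in which \Cref{lemma:improve} is unavailable, namely when the gap has already dropped below $\wmin$, is exactly the situation in which $\wmin\le\eps\cdot w(M^*)$ already certifies the approximation. The whole procedure is deterministic, since each iteration only invokes the deterministic hypergraph-maximal-matching routine on the $\bigO(\log n)$ rank classes (here $\wmax/\wmin=n^{\bigO(1)}$ is used to bound $r_{\textrm{max}}=\bigO(\log n)$). Turning the iteration count $K$ and the per-iteration cost---$\bigO(\log n)$ applications of the hypergraph maximal matching subroutine on instances of rank $\le\ell=\bigO(1/\eps)$---together with the $\bigO(1/\eps)$-round preprocessing and a final $\bigO(\log^3 n)$-round extension to a maximal matching into an explicit round bound is then the routine last step that yields \Cref{lemma:distrweightedmatching}.
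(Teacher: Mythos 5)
Your proof is correct and takes essentially the same route as the paper: start from the empty matching, iterate the augmentation step of \Cref{lemma:improve}, and solve the resulting geometric recurrence; you are in fact more careful than the paper in noting that \Cref{lemma:improve} implicitly needs $\wmin\leq w(M^*)-w(M)$ and in covering the case where this fails via monotonicity of the matching weight. The only divergence is the iteration count ($O(\eps^{-1}\log\eps^{-1})$ versus the paper's claimed $O(1/\eps)$), which does not affect the statement being proved since it asserts no round complexity.
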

\begin{proof}
	Let $M_0$ be an empty matching and $M_i$ a matching obtained by applying one iteration of the augmentations from the hypergraph matching procedure.
	By \cref{lemma:improve}, we get that 
	\[
		w(M_{i + 1}) \geq w(M_i) + \frac{1}{4\ell} \cdot \left( \frac{\ell - 1}{\ell} w(M^*) - w(M_i) \right) \ .
	\]
	Solving this recurrence (c.f.~\cite[Proof of Theorem 2]{Hougardy2006}) yields that for some $k \in \bigO(1/\eps)$ we get that $w(M_k) \geq (1 - \eps) w(M^*)$.
\end{proof}

\begin{proof}[Proof of \cref{lemma:distrweightedmatching}]
Constructing the hypergraphs $H_1, \ldots, H_{r_{\textrm{max}}}$ for a single augmentation can be done in parallel in $\bigO(\ell)$ rounds.
By definition, we get that the number of nodes per hyperedge (i.e., the hyperedge rank) in $H$ is bounded from above by $\ell$.
Notice that one round of communication in this hypergraph can take up to $\ell$ rounds, since the nodes adjacent in the hypergraph might be up to $\ell$ hops away in the underlying communication graph. 
Given that $\wmax / \wmin \in n^{\bigO(1)}$ and $\ell \leq n$, we get that $r_{\textrm{max}} \in \bigO(\log n)$. The degree of each hypergraph is upper bounded by $\Delta^{O(\ell)}$.
	Therefore, iterating through the hypergraphs and finding a maximal matching in each of them can be done in  
	\[
		\bigO\left( \ell \cdot \match{\ell}{\Delta^{O(\ell)}} \cdot\log n\right)
	\]
	rounds where $\match{r}{\Gamma}$ is the time of an algorithm which solves the maximal matching algorithm on a hypergraph with $n$ nodes, maximum degree at most $\Gamma$ and rank $r$.  		
	
	Once the augmentation has been computed, they can be applied in time $O(\ell) = \bigO(1/\eps)$.  	
	 As we iteratively compute $\bigO(1/\eps)$ augmentations the total round complexity is
\[
	\bigO\left( (1/\eps) \cdot ((1/\eps) + \match{\ell}{\Delta^{O(\ell)}}\cdot \log n) \right) = \bigO\left(1/\eps^2+ (1/\eps) \cdot \match{\ell}{\Delta^{O(\ell)}}\cdot \log n      \right) .
\]
	The result follows by applying \cref{lemma:approx} and completing the matching into a maximal matching in time $\bigO(\log^3 n)$ with the algorithm by Fischer~\cite{Fischer17}.
\end{proof}


\section{Derandomizing Randomized Edge-Coloring Algorithms}
\label{sec:derandomized}

In \cite{newHypergraphMatching}, Ghaffari, Harris, and Kuhn have
recently developed generic methods to derandomize distributed
algorithm in the \LOCAL model. We next describe how these techniques
can be applied in a blackbox way to existing randomized in order to
get deterministic polylogarithmic time edge-coloring algorithms. A
special case of Theorem 1.1 in \cite{newHypergraphMatching}
immediately gives the following generic distributed derandomization
result. 

\begin{lemma}[Special case of Theorem 1.1 in
  \cite{newHypergraphMatching}]\label{lemma:derandomization}
  Let $G=(V,E)$ be an $n$-node graph with maximum degree $\Delta$. Any
  $r$-round randomized \LOCAL algorithm for a locally checkable
  problem on $G$ can be transformed to a deterministic \LOCAL
  algorithm on $G$ with time complexity
  $O\big(r\cdot(\Delta^{O(r)}+\log^* n)\big)$.
\end{lemma}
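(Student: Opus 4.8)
The plan is to obtain this statement as a direct instantiation of Theorem 1.1 of \cite{newHypergraphMatching}: that theorem reduces the derandomization of a randomized \LOCAL{} algorithm for a locally checkable problem to running a distributed method of conditional expectations on top of a suitable coloring (or decomposition) of a bounded-radius power graph, and all we need to do is check its hypotheses and read off the resulting running time in the regime where $\Delta$ is small relative to $r$.

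First I would set up the conflict graph. Since the given algorithm terminates in $r$ rounds, the output at a node $v$ is a function of the contents (inputs, identifiers, and private random bits) of its radius-$r$ ball $B_r(v)$, which contains at most $1+\Delta+\dots+\Delta^r\le\Delta^{O(r)}$ vertices. Because the problem is locally checkable, whether the local constraint at any node is satisfied depends only on a bounded-radius neighborhood of the output, hence --- after composing with the $r$-round algorithm --- only on the random bits inside a radius-$O(r)$ ball. Consequently the relevant dependency graph is the power graph $G^{O(r)}$ (with the hidden constant chosen large enough): it has maximum degree $\Delta^{O(r)}$, and one of its communication rounds is simulated by $O(r)$ rounds of $G$.

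Next I would run the distributed method of conditional expectations on top of a cheap coloring of $G^{O(r)}$. We may assume the randomized algorithm fails --- i.e. outputs some violated local constraint --- with probability at most $1/n^{2}$, so the expected number of violated constraints is below $1$ and a valid deterministic output exists; it suffices to construct one. Using Linial's algorithm on $G^{O(r)}$ we compute a proper $\Delta^{O(r)}$-coloring in $O(\Delta^{O(r)}+\log^* n)$ rounds of $G^{O(r)}$, that is $O\big(r\cdot(\Delta^{O(r)}+\log^* n)\big)$ rounds of $G$. We then process the color classes one at a time: the nodes of a class, being pairwise far apart in $G$, can simultaneously and without interference fix their remaining random bits to the values minimizing the conditional expected number of violated constraints in their radius-$O(r)$ neighborhoods, each such step costing only $O(r)$ rounds. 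The conditional expectation never increases, so the final output violates no constraint; over the $\Delta^{O(r)}$ classes this adds $O\big(r\cdot\Delta^{O(r)}\big)$ rounds, giving the claimed bound.

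The delicate points --- and the reason this is only the easy special case of \cite[Theorem 1.1]{newHypergraphMatching} --- are the bookkeeping around the random bits and the pessimistic estimator: one must argue that a node can fix all of its relevant bits in a single step while keeping the conditional expectation locally computable within radius $O(r)$ and monotone non-increasing across steps, and that it is enough for each node to reason about the distribution of the still-free bits in its ball rather than their actual values. All of this is encapsulated by Theorem 1.1, so in the write-up we only verify that a locally checkable problem with an $r$-round randomized algorithm meets its hypotheses and that instantiating its generic partition step with Linial's coloring of $G^{O(r)}$ yields the stated time complexity.
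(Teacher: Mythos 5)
The paper offers no proof of this lemma beyond citing it as an immediate special case of Theorem~1.1 of \cite{newHypergraphMatching}, and your proposal ultimately does the same thing --- verify the hypotheses and instantiate that theorem --- while additionally sketching the underlying mechanism (dependency graph $=$ power graph $G^{O(r)}$ of degree $\Delta^{O(r)}$, a Linial-type coloring of it, and sequential conditional-expectation steps over the $\Delta^{O(r)}$ color classes at $O(r)$ rounds each) in a way that is consistent with the stated complexity. So this is essentially the same approach as the paper, with the standard derandomization outline correctly filled in.
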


A problem is called locally checkable if the correctness of a solution
can be checked in $O(1)$ rounds in the \LOCAL model: If the solution
is correct, all nodes output ``yes'', if the solution is not correct,
at least one node must output ``no'' (cf.\ \cite{fraigniaud13}). We
apply the above lemma to the randomized edge-coloring algorithm of
\cite{Pettie17}, where the following statement is proven.

\begin{lemma}[Theorem 3.1 in \cite{Pettie17}]\label{lemma:randomizedalg}
  Let $G=(V,E)$ be an $n$-node graph with maximum degree $\Delta$ and
  let $\eps=\omega\big(\frac{\log^{2.5}\Delta}{\sqrt{\Delta}}\big)$ be
  a function of $\Delta$. If $\Delta> \Delta_\eps$ is sufficiently
  large, an $(1+\eps)\Delta$-edge coloring of $G$ can be computed in
  the time required to solve 
  \begin{itemize}
  \item[a)] $O(\log(1/\eps))$ instances of a symmetric Lov\'asz Local
    Lemma (LLL)
    problem on an $n$-vertex dependency graph with maximum degree
    $\Delta^{O(1)}$ and where each of the bad events occurs with
    probability at most
    $\exp\big(-\eps^2\Delta/\log^{4+o(1)}\Delta\big)$, as well as
  \item[b)] one instance of an $O(\eps\Delta)$-edge coloring problem on a
    graph of maximum degree $O(\eps\Delta)$.
  \end{itemize}
\end{lemma}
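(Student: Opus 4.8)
\textbf{Proof proposal (this is Theorem~3.1 of \cite{Pettie17}; here is the argument I would give).}
The plan is to colour most edges by a randomized ``nibble'' run in phases, and to certify each phase by a single symmetric LLL instance. Fix the palette $[(1+\eps)\Delta]$ and, for each still‑uncolored edge $e$, maintain the list $L(e)$ of colours not yet permanently used by an edge incident to $e$. One \emph{phase} does the following: every uncolored edge $e$ independently picks a uniformly random tentative colour $c(e)\in L(e)$; $e$ keeps $c(e)$ permanently if no edge incident to $e$ picked the same colour, and otherwise $e$ stays uncolored; in either case every colour picked by some edge incident to $e$ is deleted from $L(e)$. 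This ``wasteful'' deletion is what makes the outcome of a phase a function of only $O(1)$‑radius information.

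I would carry along the invariant that, entering phase $j$, the uncolored subgraph has maximum degree $d_j$ and every uncolored edge has list size $p_j$ with $p_j\ge d_j+(1-o(1))\eps\Delta$. A first‑moment computation shows that a fixed colour survives in $L(e)$ with probability $q_j\approx(1-1/p_j)^{2d_j}\approx e^{-2d_j/p_j}$, so in expectation list sizes and degrees both shrink by the factor $q_j$; hence the additive slack shrinks by $q_j$ as well, the ratio $d_j/p_j$ is preserved, and $q_j$ stays a constant bounded below $1$. Consequently $d_j$ falls geometrically and after $h=O(\log(1/\eps))$ phases we reach $d_h=O(\eps\Delta)$ while each surviving edge still has $\Omega(\eps\Delta)$ colours available.

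To remove the randomness I would derandomize phase by phase. Within a phase, for every uncolored edge $e$ introduce the bad events ``the new list size of $e$ deviates from $q_jp_j$ by more than $\tilde O(\sqrt{d_j})$'' and ``the new degree of $e$ deviates from $q_jd_j$ by more than $\tilde O(\sqrt{d_j})$''. Each such event is a function only of the tentative colours chosen within distance $2$ of $e$, hence is mutually independent of all but $\Delta^{O(1)}$ of the others; and a bounded‑differences / Chernoff argument (after a standard equalization of list sizes, the source of the $\log^{4+o(1)}\Delta$ loss) bounds its probability by $\exp(-\Omega(\eps^2\Delta/\log^{4+o(1)}\Delta))$. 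This is precisely a symmetric LLL instance of the type in part~(a), one per phase, and a point avoiding all bad events maintains the invariant from phase $j$ to phase $j+1$. After the $h$ phases one colours the leftover graph of maximum degree $O(\eps\Delta)$ with one fresh $O(\eps\Delta)$‑edge‑colouring — part~(b) — and the colours used are the $(1+\eps)\Delta$ of the nibble palette together with the disjoint $O(\eps\Delta)$ of the clean‑up, i.e.\ $(1+O(\eps))\Delta$, which after rescaling $\eps$ by a constant is $(1+\eps)\Delta$.

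\textbf{Main obstacle.} The delicate part is the error accounting: one must show that the per‑phase slack loss of $\tilde O(\sqrt{d_j})$, summed over all $h$ phases and maximized over the worst edge, stays well below the available slack $\eps\Delta$. Since $\sum_j\sqrt{d_j}=O(\sqrt{\Delta})$ up to polylog factors, this is exactly where the hypothesis $\eps=\omega(\log^{2.5}\Delta/\sqrt{\Delta})$ (equivalently, ``$\Delta>\Delta_\eps$ sufficiently large'') enters, and it is also what pins the bad‑event probability to $\exp(-\eps^2\Delta/\log^{4+o(1)}\Delta)$ so that the LLL criterion $p\cdot\Delta^{O(1)}<1$ holds. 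A secondary but genuine point is verifying that each bad event is truly a $2$‑local event with a $\Delta^{O(1)}$‑degree dependency graph, so that the phase is a legitimate LLL instance of the form demanded here — and hence, via \Cref{lemma:derandomization}, can be turned into a deterministic procedure downstream.
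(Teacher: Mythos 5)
First, a point of reference: the paper does not prove this lemma at all --- it is quoted verbatim as Theorem~3.1 of \cite{Pettie17} and used as a black box in the derandomization section, so there is no in-paper proof to compare against. Your sketch does reproduce the correct architecture of the cited result ($O(\log(1/\eps))$ nibble phases, each certified by one symmetric LLL instance whose bad events are concentration failures of palette sizes and degrees, followed by a clean-up coloring of the residual $O(\eps\Delta)$-degree graph), and your identification of where the hypothesis $\eps=\omega(\log^{2.5}\Delta/\sqrt{\Delta})$ enters is essentially right.

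However, there is a genuine gap at the heart of the argument, not merely in the error accounting you flag. The step ``list sizes and degrees both shrink by the factor $q_j$, hence the ratio $d_j/p_j$ is preserved'' does not follow from the first-moment computation you give, and with the fully wasteful deletion rule it is false: a color survives in $L(e)$ with probability $\approx(1-1/p_j)^{2d_j}=q_j$, whereas a neighboring edge remains \emph{uncolored} with probability $\approx 1-q_j$, so the palette and the degree shrink by \emph{different} factors and the ratio $d/p$ degrades (already after one phase with $d\approx p$ one gets $d/p\approx(1-e^{-2})/e^{-2}>1$, i.e.\ the palettes run out). More fundamentally, any analysis in which each colored neighbor of $e$ is charged as deleting its own color from $L(e)$ independently at $u$ and at $v$ can never beat the greedy bound of $2\Delta-1$ colors. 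The reason $(1+\eps)\Delta$ colors suffice is that the sets of colors retained around the two endpoints $u$ and $v$ of $e$ overlap substantially (both behave like random subsets of the palette), so the union of deleted colors is significantly smaller than the sum; establishing and propagating this ``random-like'' structure of the palettes across phases --- together with the two-sided concentration that makes it survive the LLL certification --- is the main technical content of the cited theorem and of the semi-random-method analyses it builds on. Your sketch elides exactly this mechanism, so as written it proves (at best) a $(2+o(1))\Delta$-type statement rather than the $(1+\eps)\Delta$ one.
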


Combining the two lemmas with the randomized distributed LLL algorithm
of \cite{pettie_LLLalg} and the distribouted degree splitting
algorithm of \cite{Splitting17}, we obtain the following theorem.

\begin{theorem}\label{thm:derandomizedEdgeColoring}
  Let $G=(V,E)$ be an $n$-node graph with maximum degree at most $\Delta$ and
  let
  $\eps=\omega\big(\frac{\log^{2.5}\Delta}{\sqrt{\Delta}}\big)$. If
  $\Delta\geq c\cdot \frac{\log n \,\cdot\,(\log\log n)^{4+o(1)}}{\eps^2}$ for a
  sufficiently large constant $c>0$, there
  exists a deterministic distributed algorithm to compute a
  $(1+\eps)\Delta$-edge coloring of $G$ in $(\log(n) /
  \eps)^{O(1)}$ rounds.
\end{theorem}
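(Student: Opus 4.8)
The plan is to compose three known reductions. First I would use the degree splitting of \Cref{thm:mainSplitting} to reduce to a graph of polylogarithmic maximum degree; then, on each low-degree piece, I would invoke Pettie's reduction (\Cref{lemma:randomizedalg}) to replace a $(1+\eps)\Delta$-edge-coloring by a short sequence of symmetric Lov\'asz-Local-Lemma instances together with one much sparser edge-coloring instance; and finally I would solve each LLL instance with the randomized algorithm of \cite{pettie_LLLalg} and derandomize it via the generic transformation of \Cref{lemma:derandomization}. It is convenient to assume $\eps\ge 1/\poly(\log n)$, so that the threshold $D$ below is polylogarithmic in $n$; the case of smaller $\eps$ only weakens the stated bound and is handled the same way.

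\emph{Degree reduction.} Fix a threshold degree $D:=\Theta\big(c\,\eps^{-2}\log n\,(\log\log n)^{4+o(1)}\big)$; by the hypothesis on $\Delta$ we have $D\le\Delta$, and by the assumption on $\eps$ we have $D=\poly(\log n)$, in particular $\log D=O(\log\log n)$. Exactly as in the proof of \Cref{thm:finalColoring}, recursively apply \Cref{thm:mainSplitting} with parameter $\gamma=\Theta(\eps/\log\Delta)$ until every piece has maximum degree at most $D$; this decomposes $E$ into $2^h$ edge-disjoint subgraphs with $2^h\le(1+O(\eps))\,\Delta/D$, takes $\poly(\log n/\eps)$ rounds, and — since we color each piece from a private palette — reduces the task to $(1+O(\eps))D$-edge-coloring a graph of maximum degree at most $D$. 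Summing over the $2^h$ pieces and rescaling $\eps$ by a constant then gives $(1+\eps)\Delta$ colors overall.

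\emph{Pettie's reduction and the sparse instance.} On one piece $G'$ of maximum degree at most $D$, apply \Cref{lemma:randomizedalg} with error parameter $\eta=\Theta(\eps)$; its hypothesis $\eta=\omega(\log^{2.5} D/\sqrt{D})$ holds because $D=\Omega(\eps^{-2}\log n)$ and $\log D=O(\log\log n)$ (a routine estimate). The reduction yields, to be processed in sequence, (a) $O(\log(1/\eps))$ symmetric LLL instances, each on a dependency graph of maximum degree $D^{O(1)}=\poly(\log n)$ with bad events of probability $p\le\exp\big(-\eta^2 D/\log^{4+o(1)}D\big)$, and (b) one $O(\eta D)$-edge-coloring instance on a graph of maximum degree $O(\eta D)$. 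Instance (b) is solved with the deterministic $(2\Delta{-}1)$-edge-coloring of \cite{FOCS17} (or \cite{newHypergraphMatching}) using $O(\eta D)$ colors and $\poly(\log n)$ rounds; over all pieces this contributes only $O(\eps\Delta)$ extra colors, absorbed by the rescaling above.

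\emph{Solving and derandomizing the LLL instances, and the obstacle.} The value $D$ is chosen precisely so that $\eta^2 D/\log^{4+o(1)}D=\Omega(c\log n)$ — here the $(\log\log n)^{4+o(1)}$ factor in $D$ must dominate $\log^{4+o(1)}D$ — hence $p\le n^{-\Omega(c)}$; since the dependency graph has only $\poly(n)$ vertices, a plain union bound shows that a uniformly random assignment avoids all bad events with high probability, so the randomized algorithm of \cite{pettie_LLLalg} solves each instance in a constant number of rounds, and the LLL problem is locally checkable. Therefore \Cref{lemma:derandomization} turns each of the $O(\log(1/\eps))$ instances into a deterministic algorithm of complexity $D^{O(1)}+\log^* n=\poly(\log n)$, and processing them in sequence on every piece (all pieces in parallel) together with the $\poly(\log n/\eps)$ splitting cost yields a deterministic $(1+\eps)\Delta$-edge-coloring in $(\log n/\eps)^{O(1)}$ rounds. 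The genuinely delicate point is the choice of $D$: it must be large enough that both the union bound (equivalently the polynomial LLL criterion) and Pettie's hypothesis $\eta=\omega(\log^{2.5}D/\sqrt{D})$ hold, and at the same time polylogarithmic in $n$ so that the $D^{O(r)}$ blow-up in \Cref{lemma:derandomization} stays polylogarithmic; the stated lower bound $\Delta\ge c\,\eps^{-2}\log n\,(\log\log n)^{4+o(1)}$ is exactly the condition under which such a $D$ exists, and tracking the $\log\log$ factors and the $o(1)$ term inherited from \Cref{lemma:randomizedalg} is the part of the calculation that requires care.
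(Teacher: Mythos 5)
Your proposal is correct and follows essentially the same route as the paper's proof: degree splitting via \Cref{thm:mainSplitting} down to $\Theta\big(\eps^{-2}\log n\,(\log\log n)^{4+o(1)}\big)$, then \Cref{lemma:randomizedalg}, the constant-round LLL algorithm of \cite{pettie_LLLalg} (justified by the same computation $\eps^2\Delta/\log^{4+o(1)}\Delta=\Omega(\log n)$), and derandomization via \Cref{lemma:derandomization}. The only immaterial difference is that you solve the residual $O(\eps\Delta)$-degree instance with the deterministic $(2\Delta-1)$-edge-coloring of \cite{FOCS17} rather than the $O(\Delta+\log^* n)$-round coloring of \cite{BEK15} used in the paper; both suffice.
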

\begin{proof}[Proof Sketch]
  We here show how to get a deterministic edge coloring algorithm
  with time complexity $\Delta^{O(1)} +O(\log^* n)$. A time complexity of
  $(\log(n) / \eps)^{O(1)}$ can then be achieved by using the degree
  splitting algorithm of \cite{Splitting17}, in the same way as we did in the
  proof of \Cref{thm:finalColoring}, by effectively reducing the maximum degree to $\Theta\big(\frac{\log n \,\cdot\,(\log\log
    n)^{4+o(1)}}{\eps^2}\big)$.

  Assume that
  $\Delta \geq c\cdot \frac{\log n \,\cdot\,(\log\log
    n)^{4+o(1)}}{\eps^2}$
  for a sufficiently large constant $c>1$. We use
  \Cref{lemma:derandomization} to derandomize the solution of each of
  the LLL instances in the randomized edge coloring algorithm of
  \Cref{lemma:randomizedalg}. The best randomized distributed LLL in
  our context is from the work of Chung, Pettie, and Su~\cite{pettie_LLLalg}. Assume that we are given
  an $n$-vertex dependency graph with maximum degree $d$ and where
  each bad event has probability at most $p$. If $epd^2<1$, the
  algorithm of \cite{pettie_LLLalg} computes a solution in
  time $O(\log(n)/\log(1/epd^2))$. In our case, we have
  $d = \Delta^{O(1)}$ and
  $p = \exp\big(-\eps^2\Delta/\log^{4+o(1)}\Delta\big)$. Hence, we have $epd^2 = exp(-\Theta(\log n))$
  and the LLL algorithm of \cite{pettie_LLLalg} thus has a constant
  time complexity. Using \Cref{lemma:derandomization}, we can therefore
  turn the randomized LLL algorithm into a deterministic distributed
  algorithm with time complexity $\Delta^{O(1)} + O(\log^* n)$. In the last step of
  the randomized edge coloring algorithm of \cite{Pettie17}, one
  needs to compute an $O(\eps\Delta)$-edge coloring on a graph of
  maximum degree $O(\eps\Delta)$. This can be done deterministically
  in time $O(\Delta + \log^* n)$ by using the
  $(\Delta+1)$-vertex coloring algorithm of \cite{BEK15}.
\end{proof}


\clearpage
\bibliographystyle{alpha}
\bibliography{references}

\end{document}